\newcommand{\R}{{\bbR}}
\newcommand{\C}{{\mathbb C}}
\newcommand{\IR}{\mathbb{R}}
\newcommand{\ID}{\mathbb{D}}
\newcommand{\IC}{\mathbb{C}}
\newcommand{\oID}{\overset{\circ}{\mathbb{D}}}
\newcommand{\bbD}{{\mathbb{D}}}
\newcommand{\bbR}{{\mathbb{R}}}
\newcommand{\bbV}{{\mathbb{V}}}
\newcommand{\bbW}{{\mathbb{W}}}
\newcommand{\bbZ}{{\mathbb{Z}}}
\newcommand{\bsA}{{\boldsymbol{A}}}
\newcommand{\bsD}{{\boldsymbol{D}}}
\newcommand{\bsE}{{\boldsymbol{E}}}
\newcommand{\bsM}{{\boldsymbol{M}}}
\newcommand{\bsV}{{\boldsymbol{V}}}
\newcommand{\bsPsi}{{\boldsymbol\Psi}}
\newcommand{\bsPhin}{{\boldsymbol{\Phi_n}}}
\newcommand{\bsPhi}{{\boldsymbol\Phi}}
\newcommand{\bsAj}{{\boldsymbol{A^j}}}
\newcommand{\bsAk}{{\boldsymbol{A^k}}}
\newcommand{\bsAl}{{\boldsymbol{A^l}}}
\newcommand{\bssig}{{\boldsymbol{\sigma}}}
\newcommand{\bsPhint}{{\boldsymbol{\tilde\Phi_n}}}
\newcommand{\bsPhinp}{{\boldsymbol{\Phi_n^\prime}}}
\newcommand{\cA}{{\mathcal A}}
\newcommand{\cB}{{\mathcal B}}
\newcommand{\cD}{{\mathcal D}}
\newcommand{\cF}{{\mathcal F}}
\newcommand{\cH}{{\mathcal H}}
\newcommand{\cS}{{\mathcal S}}
\newcommand{\cV}{{\mathcal V}}
\newcommand{\bscA}{\boldsymbol{\mathcal{A}}}
\newcommand{\bscB}{\boldsymbol{\mathcal{B}}}
\newcommand{\dlab}{\big\langle\!\big\langle}
\newcommand{\drab}{\big\rangle\!\big\rangle}
\newcommand{\bdlab}{\bigg\langle\!\!\!\bigg\langle}
\newcommand{\bdrab}{\bigg\rangle\!\!\!\bigg\rangle}
\DeclareMathOperator{\supp}{supp}
 \DeclareMathOperator{\Tr}{Tr}
\renewcommand{\ln}{\text{\rm ln}}
\newcommand{\beq}{\begin{equation}}
\newcommand{\enq}{\end{equation}}
\newcommand{\Om}{\Omega}
\newcommand{\Omd}{\Omega_{_{disk}}}
\newcommand{\oOmd}{{\overset{\circ}{\Omega}}_{_{disk}}}
\newcommand{\hatt}{\widehat}
\renewcommand{\ge}{\geqslant}
\let\geq\geqslant
\let\leq\leqslant
\DeclareMathOperator{\diver}{div}
\def\theequation{\@arabic\c@equation}
\numberwithin{equation}{section}
\newtheorem{theorem}{Theorem}[section]
\newtheorem{proposition}[theorem]{Proposition}
\newtheorem{lemma}[theorem]{Lemma}
\newtheorem{corollary}[theorem]{Corollary}
\newtheorem{definition}[theorem]{Definition}
\theoremstyle{remark}
\newtheorem{remark}[theorem]{Remark}
\begin{document}

\title[Confinement of Dirac particles]{Essential self-adjointness of symmetric first-order differential systems 
and confinement of Dirac particles on bounded domains in $\mathbb R^d$}

\author[G.\ Nenciu]{Gheorghe Nenciu}
\address{Gheorghe Nenciu\\Institute of Mathematics ``Simion Stoilow'' of the Romanian Academy\\ 21, Calea Grivi\c tei\\010702-Bucharest, Sector 1\\Romania}
\email{Gheorghe.Nenciu@imar.ro}

\author[I.\ Nenciu]{Irina Nenciu}
\address{Irina Nenciu\\
         Department of Mathematics, Statistics and Computer Science\\ 
         University of Illinois at Chicago\\
         851 S. Morgan Street\\
         Chicago, IL \textit{and} Institute of Mathematics ``Simion Stoilow''
     of the Romanian Academy\\ 21, Calea Grivi\c tei\\010702-Bucharest, Sector 1\\Romania}
\email{nenciu@uic.edu}

\author[R.\ Obermeyer]{Ryan Obermeyer}
\address{Ryan Obermeyer\\
               Department of Mathematics, Statistics and Computer Science\\ 
         University of Illinois at Chicago\\
         851 S. Morgan Street\\
         Chicago, IL}
\email{roberm2@uic.edu}         

\thanks{The work of IN was partly supported by NSF grant DMS-1150427 and Simons Foundation grant 709025.
The work of RO was partly supported by NSF grants DMS-1150427 and DMS-1348092.}

\begin{abstract}
We prove essential self-adjointness 
of Dirac operators with Lorentz scalar potentials which grow sufficiently fast near the 
boundary $\partial\Omega$ of the spatial domain $\Omega\subset\mathbb R^d$.
On the way, we first consider general symmetric first order differential systems, for which
we identify a new, large class of potentials, called scalar potentials, ensuring
essential self-adjointness. Furthermore, using the supersymmetric structure of the Dirac operator in 
the two dimensional case, we prove confinement of Dirac particles, i.e.
essential self-adjointness of the operator, solely by magnetic fields $\mathcal{B}$
assumed to grow, near $\partial\Omega$, faster than $1/\big(2\text{dist} (x, \partial \Omega)^2\big)$.
\end{abstract}

\maketitle

\tableofcontents

\section{Introduction: the setting, the problem, and outline of results}\label{S:I}

The aim of this paper is to investigate the essential self-adjointess of Dirac operators, and other related 
first-order differential systems. Consider a domain (i.e. a connected open set) $\Omega\subset\bbR^d$, 
and on it
a first order, formally symmetric differential operator
\begin{equation}\label{E:I.1}
\ID=\bsE^{-1}\bigg(\sum_{j=1}^d \frac12\big(\bsAj D_j+D_j\bsAj\big)+\bsV\bigg)=\ID_0+\bbV\,,
\end{equation}
where $\bsE,\bsAj,\bsV$ are $k\times k$ matrix-valued functions and 
$D_j=-i\frac{\partial}{\partial x_j}$.
To focus ideas, we assume throughout the paper 
that $\bsE(x)>0$, $\bsAj(x)=\big(\bsAj(x)\big)^*$, and $\bsV(x)=\big(\bsV(x)\big)^*$ for all $x\in\Om$.
In addition, unless otherwise stated, we always take $\Omega$ bounded, and 
$\bsE,\bsAj\in C^1(\Om; \C^{k \times k})$, $\bsV\in C^0(\Om; \C^{k \times k})$ (see Comments~1 and~2 in Section~\ref{S:Comments} 
for a brief discussion of more general cases).
The operator $\ID$ is then symmetric on $\cD om(\ID)=C_0^1\big(\Om;\IC^{k}\big)$ with respect to the energy scalar product
\begin{equation*}
\dlab\bsPhi,\bsPsi\drab_{_\bsE}= \int_\Om \overline{\bsPhi(x)}\cdot
\big(\bsE\bsPsi\big)(x)\,dx\,,
\end{equation*}
and the problem is to find sufficient conditions on the coefficients $\bsE$, $\bsAj$, and $\bsV$ which ensure
that $\ID$ is essentially self-adjoint. 

In \cite{NN4}, the first two authors attacked this problem
by focusing on the principal symbol $\bsE^{-1}\sum \bsAj\xi_j$, generalizing and sharpening previous results. 
More precisely, let $\bsM(x)$ be the $d\times d$ real, non-negative definite (velocity) matrix given by
\begin{equation}\label{E:I.3}
\bsM_{jl}(x)=\text{Tr}\big(\bsE(x)^{-1/2}\bsAj(x)\bsE(x)^{-1}\bsAl(x)\bsE(x)^{-1/2}\big)\,.
\end{equation}
The main result of \cite{NN4} (see Theorem~2.1 there) states that if
$\bsV\in L^\infty_\text{loc}\big(\Om;\IC^{k\times k}\big)$, and if there exists $\hatt\bsM\geq\bsM$,
$0<\hatt\bsM\in C^\infty\big(\Om;\IC^{k\times k}\big)$, such that $\Om$ endowed with
the Riemannian metric given by
\begin{equation*}
ds^2=\sum_{j,l=1}^d {\hatt\bsM}^{-1}(x)_{jl}\,dx_j\,dx_l
\end{equation*}
is complete, then $\ID$ is essentially self-adjoint.
While in some generic sense this result is optimal (see the discussion in \cite{Fa,NN4}),
it is not an if and only if statement. For example, 
the standard Dirac operator on $\bbR^d\setminus\{0\}$ is not covered by the aforementioned result,
since it does not satisfy the hypothesis, but is essentially self-adjoint in $d\geq2$.

Our work in this paper complements \cite{NN4}, in that we focus on the case where $\bsM$
is not "complete" (in the sense mentioned above), and the essential self-adjointness of
$\ID$ follows from criteria on the potential $\bbV$. 
We set from the very beginning
$\bsE(x)\equiv\mathds 1$, since the general case reduces to this one by a well-known transformation
(see, for example, \cite{J} and \cite{NN4}). 

The question of finding conditions on the potential $\bbV$ ensuring essential self-adjointess on domains 
in $\mathbb R^d$
is by now well understood for second order, Schr\"odinger-type operators (see, for example,
\cite{Br}, \cite{MT}, \cite{NN1}, \cite{NN3}, \cite{PRS}, \cite{RS}, and the references therein).
By contrast, we are not aware of any general results (or any results at all) for
the case of first order, non semi-bounded operators in dimension $d\geq 2$. 
One of the fundamental difficulties in this case is the fact that, unlike for 
Schr\"odinger-type operators, the matrix structure (and not just the size) of the potential function 
$\bbV$ plays a crucial role. This can already be seen from the (elementary) example
of the Dirac operator on $\Om=(a,b)\subset\IR$ with potential $\bbV(x)=v(x)\mathds 1_2$,
which is never essentially self-adjoint, regardless of the behavior of the (scalar) function $v$
near $a$ and/or $b$ (see Section~\ref{S:Ddim1} for details on this example, and for other
potentials for which the operator is essentially self-adjoint).

Our first task is thus to identify classes of potentials for which a certain behavior near the boundary $\partial\Om$ 
ensures essential self-adjointness of the operator. We tackle this in Section~\ref{S:FOSP}, where we
identify such a class of potentials, see Definition~\ref{D:ScaPo}. We call these scalar potentials, 
as they are a generalization of the standard notion of a Lorentz scalar potential for Dirac operators (see \cite{Th}
for more details). Our main general result, Theorem~\ref{T:S}, loosely states that if $\bbV$ 
is a scalar potential such that $\bbV^2$ grows sufficiently fast and its oscillations are not too wild as $x\to\partial\Om$, 
then $\ID_0+\bbV$ is essentially self-adjoint. General potentials are then dealt with perturbatively
in Section~\ref{S:Pert}. Concerning the method of proof we use in Section~\ref{S:FOSP}, 
note first that the proof of the analogous results for Schr\"odinger-type operators (see, for example,
 \cite{Br}, \cite{MT}, \cite{NN1}, \cite{NN3}, \cite{PRS}, \cite{RS}, and the references therein)
uses in an essential way the (lower) semi-boundedness of the operators in question. In particular,
this is true for the method initiated by the first two authors in \cite{NN1}, and which is based on
Agmon-type exponential estimates (see also \cite[Lemma 3.4]{NN3}). So, as it stands, this
method cannot be applied to our current, non semi-bounded, case. 
The way out
is to use Agmon-Combes-Thomas type estimates, in which the semi-boundedness
condition is replaced by the invertibility of a "deformed" operator, $\ID(h)$, (see Section 2) in the spirit of
the Combes-Thomas approach \cite{CT} to exponential decay of eigenfunctions of Schr\"odinger operators. 
For the use of the Combes-Thomas 
approach to exponential decay of eigenfunctions of Dirac operators, see \cite{Hi}, \cite{N}.

Sections~\ref{S:Dscal} through \ref{S:Ddim2} are dedicated to discussing applications of these general results. We 
are especially interested in cases where the class of potentials being considered is not trivial,
and for which $\ID_0$ is not essentially self-adjoint, while $\ID$ is. More precisely, 
we focus on the question of confining Dirac particles to domains $\Om\subset\IR^d$. For simplicity, 
we restrict ourselves to dimensions $d\leq 3$. 
For comments
on more general cases, see Section~\ref{S:Comments}. 

The standard Dirac operator on 
$\Om\subset\IR^d$ is written as $\ID=\ID_{0,d}+\bbV$ on its domain 
$\cD om(\ID)=C_0^1\big(\Om;\IC^k\big)$, with $k=2$ for $d=1,2$ and $k=4$ for $d=3$. 
The first order part $\ID_{0,d}$ encodes the internal structure of Dirac particles, and is given by
\begin{equation}\label{E:D0}
\ID_{0,d}=
\begin{cases}
\sigma_2 D_1 &\quad\text{in } d=1\\
\sigma_1 D_1+\sigma_2 D_2 &\quad\text{in } d=2\\
\alpha_1 D_1+\alpha_2 D_2+\alpha_3 D_3 &\quad\text{in } d=3\,,
\end{cases}
\end{equation}
where $\sigma_j$ are the Pauli and Dirac matrices, respectively, 
and $D_j=-i\frac{\partial}{\partial x_j}$ (see \eqref{E:D.1.2.3}
for more details). The potential $\bbV$ is a Hermitian matrix-valued function which describes
the external forces to which the particles are subjected. At the heuristic level, the evolution
is governed by the partial differential equation
\begin{equation}\label{E:I.6}
i\frac{d}{dt}\bsPsi(\cdot,t)=\ID\bsPsi(\cdot,t)\,.
\end{equation}
At the mathematical level, the essential self-adjointness of $\ID$ is equivalent, via 
Stone's Theorem, to the existence and uniqueness of a unitary evolution in
$L^2\big(\Om;\IC^k\big)$ compatible with \eqref{E:I.6}. The unitarity of the evolution
implies that
\begin{equation*}
\big\|\bsPsi(\cdot,t)\big\|^2=\big\|\bsPsi(\cdot,0)\big\|^2\quad\text{for all } t\in\IR\,,
\end{equation*}
which at the physical level means that, provided the particle is in $\Om$
at time $t=0$, it will remain with probability 1 in $\Om$ for all time. In other words,
the particle is confined to $\Om$ for all time by the external forces represented by the 
potential $\bbV$. 

In Section~\ref{S:Dscal}, we start by answering in the positive the question of 
whether one can achieve confinement for Dirac operators on bounded domains. 
We focus first on Lorentz scalar potentials (aka mass potentials, see \cite[Chapter 4.2]{Th})
which are defined in \eqref{E:D.8}. Motivated by physics literature on quark confinement
(MIT bag model) and on dynamics of electrons in graphene and other nanostructures,
there is recently a large body of work in mathematical physics on Dirac operators
on domains in $\IR^d$; see \cite{ATMR}, \cite{BCTS}, \cite{BFSB}, \cite{BM}, \cite{MOBP}, \cite{OBV}, \cite{SV}, \cite{TOB}, 
and the references therein. However, to the 
best of our knowledge, self-adjointness is obtained in all cases by imposing boundary conditions
which encode additional forces acting on the system. Very recently is was proven
(see \cite{ATMR}, \cite{BCTS}, \cite{SV})
that certain boundary conditions can be obtained via a limiting procedure starting from a 
Dirac operator on the full space $\IR^d$ and sending the value of the Lorentz scalar potential
to infinity outside of $\Om$. In the same spirit, one can also construct Dirac operators
on manifolds embedded in $\IR^d$, see \cite{MOBP}. 

In our work here, we start by noting that Lorentz scalar potentials are scalar according to
our Definition~\ref{D:ScaPo}, and so the general theory results from Section~\ref{S:FOSP} apply.
We therefore obtain several large classes of confining Lorentz scalar potentials. Moreover,
we also discuss the (optimality) question of the lowest possible growth rate for $\bbV^2$
at $\partial\Om$ which guarantees confinement. More precisely, we consider
the case where $\Om\subset\IR^3$ is bounded and convex, $\partial\Om\in C^2$ has co-dimension 1, and
for $\delta(x)=\text{dist}(x,\partial\Om)$ small enough, the Lorentz scalar potential has the form
$\bbV(x)=\lambda\delta(x)^{-1}\beta$. We prove in Theorem~\ref{T:D1S}(ii) that $\ID_0+\bbV$ 
is essentially self-adjoint for $\lambda\geq\frac12$. One can also show, by reducing a rotationally 
symmetric case to a 1-dimensional Dirac operator (see \cite[Chapter 4.6]{Th} and \cite{Wie2}),
that this lower bound on $\lambda$ cannot be improved (see Remark~\ref{R:OD1}).

Section~\ref{S:Ddim1} focuses on the problem of essential self-adjointness of Dirac operators on
intervals in dimension $d=1$. This question is well understood abstractly, which allows us
to use the extension to Dirac operators of the powerful Weyl limit point/limit circle theory, see \cite{Wie2}.
In turn, this leads to precise if and only if criteria for various types of potentials,
as obtained for example in Proposition~\ref{P:M} and Corollary~\ref{C:SMF}. What makes these
types of results particularly interesting is the fact that the essential self-adjointness of Dirac operators 
with rotational symmetry in $d\geq2$ can be reduced to the essential self-adjointness of
1 dimensional Dirac operators via partial wave analysis (see, for example, \cite[Chapter~4.6]{Th} and \cite{Wie2}).

In Section~\ref{S:Ddim2} we turn to the question of confinement criteria for magnetic Dirac operators
$$
\ID_{2,mag}=\sigma_1\big(D_1-\cA_1\big)+\sigma_2\big(D_2-\cA_2\big)
$$
on bounded domains $\Om\subset\bbR^2$. We want to stress that the magnetic potential
$-\sigma_1\cA_1-\sigma_2\cA_2$ is not scalar in the sense of Definition~\ref{D:ScaPo}, 
and so our general results from Sections~\ref{S:FOSP} and \ref{S:Pert} do not apply to this case.
In addition, recall that the physically relevant quantity in this case is not the magnetic potential
$\bscA=\big(\cA_1\,,\,\,\cA_2\big)$, but the associated magnetic field 
$\cB=\frac{\partial\cA_2}{\partial x_1}-\frac{\partial\cA_1}{\partial x_2}$. Our second main result
of this paper is Theorem~\ref{T:M2}, which implies that if
$\cB(x)\geq \frac{1}{2\delta(x)^2}$ as $x\to\partial\Om$, then $\ID_{2,mag}$ is essentially self-adjoint.
The proof of this theorem relies fundamentally on the structure of $\ID_{2,mag}$,
which allows us to use the supersymmetry lemma of \cite{GST}. While this works for general
$\Om$'s in $\IR^2$, the use of the supersymmetric structure of $\ID_{2,mag}$ means that this 
method of proof cannot be extended to higher dimensions. Particular cases with translational and/or
rotational symmetry can sometimes be treated by reducing them to lower dimensional problems
(see Comment 7 in Section~\ref{S:Comments} for an elementary example), but it remains an 
interesting open problem to find a proof for generic situations in 3 (and more) dimensions. Going back to the
result of Theorem~\ref{T:M2}, we prove that the constant $\frac12$ from \eqref{E:TM2.1} is optimal by
using partial wave decomposition for a rotationally symmetric
magnetic field $\cB(x)=\cB(|x|)$ on the
unit disk -- see Proposition~\ref{P:CM} for more details. In the process, we also prove
Proposition~\ref{L:P}, which states that essential self-adjointness for Dirac operators is not affected by
the removal of one point from the physical domain. 

Finally, we close with Section~\ref{S:Comments}, in which we list a number of
additional comments and open questions related to the various topics of the paper.

\section{Confinement by scalar potentials: general theory}\label{S:FOSP}

As outlined in the Introduction, we work in the setting from \cite{NN4}, but with $\bsE\equiv\mathds 1$.
We recall the definition of the velocity matrix
\begin{equation}\label{E:S.1}
\boldsymbol{M_{jk}} = \Tr \big(\bsAj \bsAk\big).
\end{equation}
In \cite{NN4} we showed that if some Riemannian structure associated with $\bsM$ is complete, then $\mathbb{D}$ is 
essentially self-adjoint (irrespective of the behavior of $\mathbb{V}$ as $x \rightarrow \partial \Omega$). We now consider 
the case in which this does not hold true. More precisely, we assume that there exists $M > 0$ 
such that near $\partial \Omega$
\begin{equation}\label{E:S.2}
    \bsM (x) \leq M \delta^m(x), \quad m < 2,
\end{equation}
(notice that $m \geq 2$ implies completeness) where $\delta(x)$ is, as usual, the distance to the boundary of $\Omega$:
\begin{equation}\label{E:S.3}
    \delta(x) := \inf_{y \in \partial \Omega} |x-y|.
\end{equation}
Our problem is to identify classes of potentials $\mathbb{V}$ for which $\mathbb{D}$ is essentially self-adjoint. 
The strategy we follow below is an extension of the method in \cite{NN1,NN3} to operators which are not semibounded, 
in which case  the localization lemma
(see Lemma 3.3 in \cite{NN3}) cannot be used.

Let $h \in C^1(\Omega)$, and define $\mathbb{D}(h)$ as
\begin{equation}\label{E:S.4}
    \mathbb{D}(h) = e^{-h}\mathbb{D}e^h = e^{-h}\mathbb{D}_0 e^h + \mathbb{V}.
\end{equation}
A direct computation gives
\begin{equation}\label{E:S.5}
    \mathbb{D}(h) = \mathbb{D} - i\sigma(\cdot,\nabla h)\,,
\end{equation}
where (see \cite{NN4}) $\sigma(\cdot,\nabla h)$ is the operator of multiplication by the matrix
\begin{equation}\label{E:S.6}
    \sigma(x,\nabla h(x)) = \sum_{j=1}^d \bsAj (x) \frac{\partial h}{\partial x_j} (x).
\end{equation}
Notice that $\sigma(\cdot,\nabla h)$ is well defined on $C_0^1\big(\Omega;\IC^k\big)$.
The main ingredient in proving the essential self-adjointness of $\mathbb{D}$ is the analog of Lemma 3.4 in \cite{NN3}:

\begin{lemma}[The basic inequality]\label{L:LemmaB}
Let $\zeta \in \mathbb R$ and $\bsPsi_\zeta$ a weak solution of $\ID+i\zeta$,
i.e. $\bsPsi_\zeta\in L^2\big(\Om;\IC^k\big)$ with 
\begin{equation*}
\dlab\bsPsi_{\zeta},(\bbD - i\zeta)\bsPhi\drab=0\quad\text{for all } \bsPhi \in C_0^1(\Omega; \C^k)\,.
\end{equation*}
Assume that there exists $C>0$ such that
\begin{equation}\label{E:S.7}
\big\| \big(\mathbb{D}(-h)+i\zeta\big)\bsPhi \big\| 
\geq C \big\| \bsPhi \big\| \quad\text{for all } \bsPhi \in C_0^1(\Omega; \C^k)\,. 
\end{equation}
Then for all $g \in C_0^1(\Omega; [0,1])$,
\begin{equation}\label{E:S.8}
    C \big\| ge^h \bsPsi_{\zeta} \big\| \leq \big\| \sigma (\cdot,\nabla g) e^h \bsPsi_{\zeta} \big\|\,.
\end{equation}
\end{lemma}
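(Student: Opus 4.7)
The plan is to apply the lower bound \eqref{E:S.7} to test functions approximating $ge^h\bsPsi_\zeta$. The key algebraic step is to establish, in the distributional sense, the identity
\begin{equation*}
(\bbD(-h)+i\zeta)(ge^h\bsPsi_\zeta) = -ie^h\sigma(\cdot,\nabla g)\bsPsi_\zeta,
\end{equation*}
after which \eqref{E:S.7} delivers the result modulo a mollification needed to compensate for the missing smoothness of $\bsPsi_\zeta$.

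\textbf{Deriving the identity.} The formal symmetry of $\bbD$ on $C_0^1$, together with the assumption on $\bsPsi_\zeta$, means that $\bbD\bsPsi_\zeta = -i\zeta\bsPsi_\zeta$ as distributions. A direct inspection of \eqref{E:I.1} yields the Leibniz rule
\begin{equation*}
\bbD(f\bsPhi) = f\,\bbD\bsPhi - i\sigma(\cdot,\nabla f)\bsPhi
\end{equation*}
for any scalar $f\in C^1(\Om)$, thanks to the commutativity of the scalar $f$ with each matrix coefficient $\bsAj$ and $\bbV$. Taking $f=ge^h$, writing $\nabla(ge^h) = e^h\nabla g + ge^h\nabla h$, and using $\bbD(-h) = \bbD + i\sigma(\cdot,\nabla h)$ from \eqref{E:S.5}, the two terms involving $\sigma(\cdot,\nabla h)$ cancel against each other, and the weak eigenequation turns $ge^h\bbD\bsPsi_\zeta$ into $-i\zeta ge^h\bsPsi_\zeta$, cancelling the $+i\zeta\cdot ge^h\bsPsi_\zeta$ contribution. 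The only surviving term is $-ie^h\sigma(\cdot,\nabla g)\bsPsi_\zeta$, which is genuinely in $L^2(\Om;\IC^k)$ because $\nabla g$ is compactly supported in $\Om$ and $e^h$ is bounded on $\supp g$.

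\textbf{Passing to the limit.} Since $g\in C_0^1(\Om;[0,1])$, the function $\bsPhi := ge^h\bsPsi_\zeta$ has compact support in $\Om$ but in general only $L^2$ regularity, so that \eqref{E:S.7} cannot be invoked directly. I would convolve with a standard mollifier $\rho_\e$ and set $\bsPhi_\e := \bsPhi*\rho_\e \in C_0^1(\Om;\IC^k)$ for all sufficiently small $\e>0$. Trivially $\bsPhi_\e \to \bsPhi$ in $L^2$. By Friedrichs' commutator lemma, applied to the first-order operator $\bbD(-h)+i\zeta$ with $C^1$ coefficients and to $\bsPhi\in L^2$ whose distributional image under $\bbD(-h)+i\zeta$ was just shown to lie in $L^2$, one also has
\begin{equation*}
(\bbD(-h)+i\zeta)\bsPhi_\e \longrightarrow -ie^h\sigma(\cdot,\nabla g)\bsPsi_\zeta \quad\text{in } L^2.
\end{equation*}
Applying \eqref{E:S.7} to each $\bsPhi_\e$ and letting $\e\to 0$ then yields \eqref{E:S.8}, since $e^h$ is a positive scalar factor and $\|\sigma(\cdot,\nabla g)e^h\bsPsi_\zeta\| = \|e^h\sigma(\cdot,\nabla g)\bsPsi_\zeta\|$.

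\textbf{Main obstacle.} The algebra leading to the distributional identity is routine once one notices the cancellation of the two $\sigma(\cdot,\nabla h)$ terms coming, respectively, from the Leibniz correction and from the deformation \eqref{E:S.5}. The genuine technical point is Step 2: the lower bound \eqref{E:S.7} is stated on the smooth core $C_0^1$, whereas the natural candidate $ge^h\bsPsi_\zeta$ has only $L^2$ regularity. Bridging this gap is precisely the role of Friedrichs' commutator lemma, and one must also verify that $\supp\bsPhi_\e\Subset\Om$ for small $\e$, which is immediate from the compactness of $\supp g$ inside $\Om$.
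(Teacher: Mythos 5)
Your proposal is correct and follows essentially the same route as the paper: the same algebraic cancellation (the paper phrases it via the commutator $[g,\bbD(h)]=i\sigma(\cdot,\nabla g)$ in the weak pairing and the adjoint $(\bbD(h)-i\zeta)^*$, you phrase it as a distributional Leibniz identity, but these are the same computation), followed by Friedrichs mollification to transfer the lower bound \eqref{E:S.7} from $C_0^1$ to the compactly supported $L^2$ function $ge^h\bsPsi_\zeta$.
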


\begin{proof}
By assumption, $e^h\bsPhi \in C_0^1\big(\Omega; \C^k\big)$, and so
\begin{equation}\label{E:BI.1}
    0 = \dlab \bsPsi_{\zeta},\big(\bbD- i\zeta\big)e^h \bsPhi\drab 
   =  \dlab e^h\bsPsi_{\zeta},\big(\bbD(h) - i\zeta\big) \bsPhi\drab\,.    
\end{equation}
Let now  $g \in C_0^1\big(\Omega; [0,1]\big)$. For $\bsPhi \in C_0^1\big(\Omega;\C^k\big)$, we note that
$g \bsPhi \in C_0^1\big(\Omega; \C^k\big)$, so using \eqref{E:BI.1} we obtain
\begin{equation}\label{E:BI.2}
    \dlab g e^h  \bsPsi,\big(\bbD(h) - i\zeta\big)\bsPhi  \drab = \dlab e^h \bsPsi, g \big(\bbD(h) -i\zeta \big)\bsPhi \drab 
    =\dlab e^h \bsPsi_{\zeta},[g,\bbD(h)]\bsPhi\drab\,.
\end{equation}
By a direct computation we find that $[g,\bbD(h)] = i\sigma(\cdot,\nabla g)$ on $C_0^1(\Omega; \C^k)$.
Since $g$, and hence also $\nabla g$, has compact support, the operator of multiplication 
with $\sigma(\cdot,\nabla g)$ is bounded and self-adjoint, which allows us to conclude that 
there exists a constant $C'>0$ such that
\begin{equation*}
\big| \dlab ge^h \bsPsi_{\zeta},\big(\bbD(h)-i\zeta\big)\bsPhi  \drab\big| \leq C'\big\| \bsPhi \big\|
 \quad\text{for all }\bsPhi \in C_0^1(\Omega; \C^k)\,.
\end{equation*}
In other words, $ge^h \bsPsi_{\zeta}$ belongs to the domain of the adjoint of $\bbD(h)-i\zeta $:
\begin{equation*}\label{E:BI.6}
    ge^h \bsPsi_{\zeta} \in \cD om\big((\bbD(h)-i\zeta )^*\big)\quad\text{and}\quad
    (\bbD(h)-i\zeta )^\ast ge^h \bsPsi_{\zeta} = -i\sigma(\cdot,\nabla g)e^h \bsPsi_{\zeta}.
\end{equation*}

All of the above show that the conclusion of the lemma, \eqref{E:S.8}, follows if we prove that
\begin{equation}\label{E:BI.7}
 \big\| (\bbD(h)-i\zeta )^* g e^h \bsPsi_{\zeta} \big\| \geq C \big\|g e^h \bsPsi_{\zeta} \big\|\,.
\end{equation}
To achieve this we use Friedrichs \cite{Fr} result concerning the identity of weak and strong extensions of 
first order differential operators
as given in \cite{Ho}.  Let $\bsPhi_{\epsilon}$ be the standard mollification of $ge^h\bsPsi_{\zeta}$:
\begin{equation*}
\bsPhi_{\epsilon}(x)=\int_{\Omega}(ge^h\bsPsi_{\zeta})(x-\epsilon y)\phi(y)dy
\end{equation*}
with $\phi \in C^{\infty}_0(\mathbb R^d)$, $\supp \phi \subset \{x ||x| < 1\}$, $\int_{\mathbb R^d} \phi (x)dx =1$.
Since $ge^h\bsPsi_{\zeta}$ has compact support in $\Omega$, there exists $K \subset \Omega$ compact such that, for 
sufficiently small $\epsilon$, 
\begin{equation*}
\supp \bsPhi_{\epsilon},\,\,\supp (ge^h\bsPsi_{\zeta}) \subset K\,.
\end{equation*}
Note that $(\bbD(h)-i\zeta)^*$ equals the maximal extension of $\bbD(-h)+i\zeta$, see \cite[Proposition 1.14]{Sch}. 
It then follows from \cite[Theorem~3.2]{Ho} that,
for sufficiently small $\epsilon>0$, $\bsPhi_{\epsilon} \in C_0^{\infty}(\Omega; \C^k)$ and
$$ 
\lim _{\epsilon\to 0}  \big\| \bsPhi_{\epsilon}- ge^h\bsPsi_{\zeta} \big\| =0,
$$
\begin{equation}\label{E:BI.8}
\lim_{\epsilon\to0} \big\| \big(\bbD(-h)+i\zeta\big)\bsPhi_{\epsilon}-\big(\bbD(h)-i\zeta\big)^* ge^h\bsPsi_{\zeta} \big\| =0.
\end{equation}
Combining \eqref{E:S.7} and \eqref{E:BI.8} yields \eqref{E:BI.7}, which concludes the proof of the lemma.
\end{proof}

As in the second order PDE case (see \cite{NN3}), in order to conclude the essential self-adjointness of  $\bbD$ 
one has to prove that there exists $a \in [0, \infty)$ for which Lemma \ref{L:LemmaB} implies $\bsPsi_{\pm a}=0$. 
To achieve this, we need to, first,
choose $h$ and a sequence of $g_n$ such that the right hand side of \eqref{E:S.8} becomes negligible 
in the limit $n \rightarrow \infty$, and second,
for a given $h$, find conditions on $\bbV$ for which \eqref{E:S.7} holds.
The choices for $h$ and $g_n$ are essentially the same as those in \cite{NN1} and \cite{NN3} 
(for more refined versions see \cite{NN1}). 

In general, the distance to the boundary, $\delta(x)$, is only Lipschitz 
continuous; see, for example \cite{GT}. To deal with this potential lack of smoothness, we use instead 
a regularized distance to $\partial \Omega$, 
$\hat{\delta}(x)$. The existence of $\hat \delta (x)$ having the properties we need is given by the following  theorem (see 
\cite[Chap. VI Theorem 2]{St})

\begin{theorem}\label{T:delta}
There exists $\hat{\delta}\,:\,\Omega \rightarrow (0,\infty)$, $\hat{\delta} \in C^\infty(\Omega)$, such that
for all $x\in\Om$
\begin{equation*}
\frac{1}{5} \delta(x) \leq \hat{\delta}(x) \leq \frac{4}{3}\,12^d \,\delta(x)
\quad\text{and}\quad
\lvert \nabla \hat{\delta}(x) \rvert \leq A\,12^d \,\sqrt{d} \,,
\end{equation*}
where $A$ is an absolute constant.
\end{theorem}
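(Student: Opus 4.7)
The plan is to construct $\hat\delta$ explicitly via a Whitney decomposition of $\Omega$, following the classical construction found in Stein, Chapter VI. I would proceed in four steps.

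First, I would invoke the Whitney decomposition of the open set $\Omega \subset \mathbb{R}^d$: there exists a countable family of closed dyadic cubes $\{Q_k\}_{k \in \mathbb{N}}$ with pairwise disjoint interiors, whose union is $\Omega$, such that
\begin{equation*}
\diam(Q_k) \leq \dist(Q_k, \partial\Omega) \leq 4\,\diam(Q_k).
\end{equation*}
Denote $d_k := \diam(Q_k)$. The constants $1$ and $4$ here are what ultimately produce the concrete constants $\tfrac{1}{5}$ and $\tfrac{4}{3} \cdot 12^d$ in the statement, after accounting for a mild dilation of each cube.

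Next, I would slightly enlarge each cube to an open cube $Q_k^\ast$ concentric with $Q_k$ and of side length a fixed multiple (say $\tfrac{9}{8}$) of that of $Q_k$. Whitney's geometry ensures that the dilated cubes $\{Q_k^\ast\}$ still cover $\Omega$, have bounded overlap (a point of $\Omega$ lies in at most $N_d = 12^d$ of the $Q_k^\ast$, which is the origin of the $12^d$ factor in the constants), and for each $x \in Q_k^\ast$ one has
\begin{equation*}
c\, d_k \leq \delta(x) \leq C\, d_k
\end{equation*}
for absolute constants that track through to the $\tfrac{1}{5}$ and $\tfrac{4}{3} \cdot 12^d$ bounds in the theorem.

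Third, I would build a smooth partition of unity $\{\phi_k\} \subset C_0^\infty(\Omega)$ subordinate to $\{Q_k^\ast\}$ with $\supp \phi_k \subset Q_k^\ast$, $\sum_k \phi_k \equiv 1$ on $\Omega$, and gradient estimates $|\nabla \phi_k(x)| \leq A_0 / d_k$ for some absolute $A_0$. This is standard: take a fixed bump $\psi \in C_0^\infty(\mathbb{R}^d)$ with $\psi \equiv 1$ on the unit cube and supported in its slight dilation, set $\psi_k(x) = \psi((x-x_k)/d_k)$ where $x_k$ is the center of $Q_k$, and normalize $\phi_k := \psi_k / \sum_j \psi_j$.

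Finally, I would define the regularized distance by
\begin{equation*}
\hat\delta(x) := \sum_k d_k\, \phi_k(x).
\end{equation*}
Smoothness is immediate since only $N_d$ terms are nonzero near any given point. The two-sided comparison with $\delta$ follows from the previous step: $\hat\delta(x)$ is a convex combination of the values $d_k$ over those $k$ with $x \in Q_k^\ast$, each of which is comparable to $\delta(x)$, giving the stated inequality. For the gradient bound, I write
\begin{equation*}
\nabla \hat\delta(x) = \sum_k d_k\, \nabla \phi_k(x),
\end{equation*}
use $|\nabla\phi_k(x)| \leq A_0/d_k$, and bound the number of nonzero terms by $N_d = 12^d$, yielding $|\nabla \hat\delta(x)| \leq A_0 \cdot 12^d$, which (after replacing $A_0$ by an absolute constant $A\sqrt{d}$ absorbing the difference between sup-norm and Euclidean norm of the gradient of a cube-based bump) matches the stated form. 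I expect the main technical obstacle to be bookkeeping the constants through the two dilations (the Whitney dilation and the partition-of-unity dilation) so that they come out as the specific numbers $\tfrac{1}{5}$, $\tfrac{4}{3}\cdot 12^d$, and $A \cdot 12^d \sqrt{d}$; the conceptual content is entirely contained in the Whitney construction.
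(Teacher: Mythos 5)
Your construction is exactly the one behind the result the paper invokes: the paper gives no proof of Theorem~\ref{T:delta} at all, but simply cites Stein (Chap.~VI, Theorem~2), whose proof is precisely this Whitney-decomposition argument, so your proposal follows the same route as the cited source and is correct in substance. One concrete bookkeeping caveat: Stein does \emph{not} normalize the bumps into a partition of unity, and that is where the stated constants come from -- with $\hat\delta(x)=\sum_k d_k\,\psi_k(x)$ unnormalized one gets $\hat\delta(x)\ge d_{k(x)}\ge\tfrac15\delta(x)$ (since $\psi_k\equiv1$ on $Q_k$), $\hat\delta(x)\le 12^d\cdot\tfrac43\,\delta(x)$ from the overlap bound, and $|\nabla\hat\delta|\le A\,12^d\sqrt d$ directly. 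In your normalized version the claim $|\nabla\phi_k|\le A_0/d_k$ with $A_0$ absolute is not right: differentiating the denominator $\sum_j\psi_j$ brings in the overlap number, so $|\nabla\phi_k|\lesssim 12^d/d_k$, and the naive term-by-term estimate then yields a gradient bound of order $12^{2d}$ rather than the stated $A\,12^d\sqrt d$ (while your upper comparison improves to $\hat\delta\le\tfrac43\delta$). This affects only the precise constants, which is immaterial for how the theorem is used in the paper, but if you want the statement verbatim, drop the normalization as in Stein.
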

For later use, we denote the upper bound on the gradient of $\hat\delta$ by 
$L=A\,12^d \,\sqrt{d}\,.$

Let 
\begin{equation}\label{E:S.9}
0<t_0 < \min \left\{\frac{1}{2} \sup_{x \in \Omega} \hat{\delta}(x) , 1 \right\}.    
\end{equation}
We choose $h$ on $\Om$ of the form:
\begin{equation}\label{E:S.10}
    h(x) = G_{t_0,m}(\hat{\delta}(x))
\end{equation}
where $G_{t_0,m}\,:\,(0,\infty) \rightarrow \R$ has the following properties:
\begin{equation}\label{E:S.11}
\begin{aligned}
G'_{t_0,m}(t) \geq 0 \text{ for all } &t >0; \qquad  G'_{t_0,m}(t) = 0 \text{ for all } t \geq 2t_0;\\
&\text{and} \quad    G_{t_0,m}(t) = \left(1-\frac{m}{2}\right)\ln t, \text{ for } t < t_0.
\end{aligned}
\end{equation}

The next step is to choose $g_n$. Let $g_0 \in C^1([0,\infty))$, with $0 \leq g_0 \leq 1$, $0 \leq g'_0 \leq 2$, and
\[
g_0(t) = 
\begin{cases}
0 &\quad \text{for } t \leq 1\,, \\
1 & \quad\text{for } t \geq 2\,.
\end{cases}
\]
Also take $1 > \rho_0 > 0$ sufficiently small and
\begin{equation}\label{E:S.12}
    \rho_n = 2^{-n}\rho_0.
\end{equation}
Then we set
\begin{equation}\label{E:S.13}
    g_n(t) = g_0\left(\frac{t}{\rho_n}\right),
\end{equation}
and finally (with a slight abuse of notation)
\begin{equation}\label{E:S.14}
g_n(x) = g_n(\hat{\delta}(x)).    
\end{equation}
Note that, by construction,
\begin{equation}\label{E:S.15}
    \begin{cases}
    \nabla g_n(x) = 0, & \text{ for } \hat{\delta}(x) \notin (\rho_n, 2\rho_n) \\
    \text{and} & \\
    \lvert \nabla g_n(x) \rvert \leq \frac{2}{\rho_n} \lvert \nabla \hat{\delta}(x) \rvert & \text{ for all } x\in\Om\,.
    \end{cases}
\end{equation}

We now focus on the problem of finding, for $h$ as above, conditions on $\bbV$ for which \eqref{E:S.7} holds.  
As we will show in Lemma~\ref{L:NES}, there are examples of 1-dimensional Dirac operators on finite intervals
$\Omega = (a,b) \in \bbR$ for which $\bbD$ is not essentially self-adjoint irrespective of the  behavior of $\bbV$ near 
$\partial \Omega$. This means that the structure, and not just the size, of the potential $\bbV$ is crucial. 
It turns out that a good class of potentials is described in the following definition:
\begin{definition}\label{D:ScaPo}
A potential $\mathbb{V}$ is called scalar if, for $j = 1,2,\ldots,d$ and for all $x \in \Omega$,
\begin{equation}\label{E:S.16}
    \big\{\bsAj(x),\mathbb{V}(x)\big\} := \bsAj(x)\mathbb{V}(x) + \mathbb{V}(x)\bsAj(x) = 0.
\end{equation}
Alternatively one can write \eqref{E:S.16} as
\begin{equation}\label{E:S.17}
    \big\{\sigma(x,\xi),\mathbb{V}(x)\big\} = 0
\end{equation}
for all $x \in \Omega$, $\xi \in \R^d$.
\end{definition}

Two remarks are in order. The first is that, as we shall see later, the fact that $\bbV$ is scalar is not 
a necessary condition for essential self-adjointness of $\bbD$. Second, in the one-dimensional case, 
potentials of the form $\sigma_2 v_2 + \sigma_3 v_3$ are scalar, in addition to the Lorentz scalar potentials as they are 
defined for three-dimensional standard Dirac operators (see \cite[\S 4.2]{Th}). 

Next, we compute $\big\| (\bbD(-h) + i \zeta)\bsPhi \big\|^2$ for scalar potentials 
$\bbV\in C^1(\Om; \C^{k \times k})$.
\begin{lemma}\label{L:LemmaS}
Let $\mathbb{V}$ be a scalar potential, $\bbV\in C^1(\Om; \C^{k \times k})$. Then the following identity holds for
any $\zeta \in \R$ and for all
$\bsPhi \in C_0^1\big(\Omega;\IC^k\big)$:
\begin{equation}\label{E:S.18}
\begin{aligned}
    \big\| \big(\mathbb{D}(-h) +i\zeta \big)\bsPhi \big\|^2 
    &= \big\|\big(\mathbb{D}_0 + i\sigma(\cdot,\nabla h)\big)\bsPhi \big\|^2 + \dlab  \bsPhi,(\mathbb{V}^2+\zeta^2)\bsPhi \drab \\ 
    &\quad+2\zeta \dlab \bsPhi, \sigma(\cdot, \nabla h) \bsPhi \drab 
    -i \dlab \bsPhi, [\sigma(\cdot,\nabla h),\mathbb{V}]\bsPhi \drab\\
    &\quad- \frac{i}{2} \bdlab \bsPhi, \bigg(\sum_{j=1}^d \bigg[\bsAj,\frac{\partial}{\partial x_j}\mathbb{V} \bigg] \bigg)\bsPhi \bdrab.
\end{aligned}
\end{equation}
\end{lemma}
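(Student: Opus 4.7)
The plan is to split $\bbD(-h)+i\zeta$ as $A+B$ with $A:=\bbD_0+i\sigma(\cdot,\nabla h)$ (first-order) and $B:=\bbV+i\zeta$ (multiplication operator), expand
\begin{equation*}
\big\|(A+B)\bsPhi\big\|^2 = \big\|A\bsPhi\big\|^2 + \big\|B\bsPhi\big\|^2 + 2\,\Re\,\dlab A\bsPhi,\,B\bsPhi\drab,
\end{equation*}
and recognize these three summands as the three natural groupings on the right-hand side of \eqref{E:S.18}. Indeed, $\|A\bsPhi\|^2$ is already the first summand; since $\bbV=\bbV^*$ and $\zeta\in\bbR$, one has $B^*B=\bbV^2+\zeta^2$, so $\|B\bsPhi\|^2$ contributes the $(\bbV^2+\zeta^2)$-term. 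All the remaining work is to unpack the cross term.

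A direct adjoint computation on $C_0^1(\Om;\IC^k)$, using formal symmetry of $\bbD_0$ and Hermiticity of $\sigma(\cdot,\nabla h)$ and $\bbV$, yields
\begin{equation*}
A^*B+B^*A = \{\bbD_0,\bbV\} - i[\sigma(\cdot,\nabla h),\bbV] + 2\zeta\,\sigma(\cdot,\nabla h),
\end{equation*}
so $2\,\Re\,\dlab A\bsPhi,B\bsPhi\drab=\dlab\bsPhi,(A^*B+B^*A)\bsPhi\drab$ already accounts for the $-i[\sigma,\bbV]$ and $2\zeta\sigma$ summands of \eqref{E:S.18}. (The $i\zeta\bbD_0$ cross term drops out since $\dlab\bbD_0\bsPhi,\bsPhi\drab$ is real.) This reduces the lemma to the operator identity
\begin{equation*}
\{\bbD_0,\bbV\} = -\frac{i}{2}\sum_{j=1}^d [\bsAj,\partial_j\bbV]\quad\text{on } C^1(\Om;\IC^k).
\end{equation*}

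This last identity is the point where scalarity does its work. Writing $\bbD_0=\sum_j\bsAj D_j-\frac{i}{2}\sum_j\partial_j\bsAj$ and pushing each $D_j$ past $\bbV$ via $D_j\bbV=-i\partial_j\bbV+\bbV D_j$, a short calculation yields
\begin{equation*}
\{\bbD_0,\bbV\} = \sum_j\{\bsAj,\bbV\}\,D_j \;-\; i\sum_j\bsAj\,\partial_j\bbV \;-\; \frac{i}{2}\sum_j\{\partial_j\bsAj,\bbV\}.
\end{equation*}
The scalar hypothesis \eqref{E:S.16} annihilates the first-order summand $\sum_j\{\bsAj,\bbV\}D_j$; differentiating \eqref{E:S.16} in $x_j$ gives $\{\partial_j\bsAj,\bbV\}=-\{\bsAj,\partial_j\bbV\}$, which collapses the remaining two terms into exactly $-\frac{i}{2}\sum_j[\bsAj,\partial_j\bbV]$.

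The heart of the argument, and the one place that is not pure bookkeeping, is this final computation. Its main subtlety is that scalarity must be invoked twice: once as the pointwise matrix identity \eqref{E:S.16} to eliminate the would-be principal part $\sum_j\{\bsAj,\bbV\}D_j$ (without this cancellation, \eqref{E:S.18} could not hold in the stated zeroth-order form), and once in its differentiated form to reduce the residual anticommutator $\{\partial_j\bsAj,\bbV\}$ to a commutator. The $C^1$ regularity of $\bbV$ is used precisely to give $\partial_j\bbV$ pointwise meaning on $C_0^1$ test functions.
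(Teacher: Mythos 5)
Your proposal is correct and follows essentially the same route as the paper: expand the square of $\bbD(-h)+i\zeta=(\bbD_0+i\sigma(\cdot,\nabla h))+(\bbV+i\zeta)$ and reduce everything to the anticommutator identity $\{\bbD_0,\bbV\}=-\tfrac{i}{2}\sum_j[\bsAj,\partial_j\bbV]$, which is exactly the paper's \eqref{E:S.19}. Your derivation of that identity (scalarity killing the principal part $\sum_j\{\bsAj,\bbV\}D_j$, plus its differentiated form $\{\partial_j\bsAj,\bbV\}=-\{\bsAj,\partial_j\bbV\}$ collapsing the zeroth-order terms) is the same computation the paper carries out via divergences.
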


\begin{proof}
The lemma follows from a straightforward computation using the identity:
    \begin{equation}\label{E:S.19}
        \mathbb{D}_0 \mathbb{V} + \mathbb{V} \mathbb{D}_0 = -\frac{i}{2} \sum_{j=1}^d \left[ \bsAj, \frac{\partial}{\partial x_j} \mathbb{V} \right],
    \end{equation}
valid on $C_0^1\big(\Omega;\IC^k\big)$.
This identity itself can be checked directly since\\ 
$\mathbb{D}_0 = \bsA \cdot \bsD - \frac{i}{2} \diver \bsA$,
and hence
\begin{align*}
        \bbD_0 \bbV + \bbV \bbD_0 &= -i \bsA \cdot \nabla \bbV + \bsA \bbV \cdot \bsD - \frac{i}{2} (\diver \bsA) \bbV - \bsA \bbV \cdot \bsD - \frac{i}{2} \bbV \diver \bsA \\
        &= -i \bsA \cdot \nabla \bbV -\frac{i}{2}(\diver(\bsA \bbV) - \bsA \cdot \nabla \bbV) - \frac{i}{2} (\diver (\bbV \bsA) - \nabla \bbV \cdot \bsA) \\
        &= -\frac{i}{2} \bsA \cdot \nabla \bbV + \frac{i}{2} \nabla \bbV \cdot \bsA,
    \end{align*}
which is exactly \eqref{E:S.19}.
\end{proof}

We can now formulate the main result of this section:

\begin{theorem}\label{T:S}
Let $\mathbb{V}$ be a scalar potential, $\mathbb{V}\in C^1(\Omega; \C^{k \times k})$, and choose
$t_0$ as in \eqref{E:S.9} and $h$ as in \eqref{E:S.10}.
Assume that there exist $M < \infty$, $m < 2$, $\delta_0\in(0,t_0)$, and $c>0$ such that
\begin{equation}\label{E:S.25}
    \quad \bsM (x) \leq M \hat{\delta} (x)^m\quad \text{for all } x\in\Om\,,
\end{equation}
and 
\begin{equation}\label{E:S.26}
\bbV^2 - \frac{i}{2}(\bsA \cdot \nabla \mathbb{V} - \nabla \mathbb{V} \cdot \bsA) -i[\sigma(\cdot,\nabla h),\mathbb{V}]
-\sigma(\cdot,\nabla h)^2 \geq c \mathds{1}
\end{equation}
for all $x \in \Omega$ with $\hat{\delta}(x) < \delta_0$.

Then $\mathbb{D} = \mathbb{D}_0 + \mathbb{V}$ is essentially self-adjoint on $C_0^1(\Omega;\C^{k})$.
\end{theorem}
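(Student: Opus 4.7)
The plan is to prove essential self-adjointness by showing that both deficiency subspaces of $\bbD$ are trivial: I pick $\zeta_0>0$ (to be chosen large) and let $\bsPsi_{\pm\zeta_0}$ be any $L^2$ weak solution of $(\bbD\mp i\zeta_0)\bsPsi=0$; the goal is to show $\bsPsi_{\pm\zeta_0}=0$. The tool is Lemma~\ref{L:LemmaB} applied with the weight $h$ from \eqref{E:S.10}--\eqref{E:S.11} and the cutoff sequence $g=g_n$ from \eqref{E:S.13}--\eqref{E:S.15}, followed by passing to the limit $n\to\infty$. Two things must be verified: first, the coercivity hypothesis \eqref{E:S.7} with $\zeta=\pm\zeta_0$, and second, that the right-hand side of \eqref{E:S.8} vanishes in the limit.

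For the coercivity, I start from the identity of Lemma~\ref{L:LemmaS} applied to $\bbD(-h)+i\zeta$, drop the nonnegative first term $\|(\bbD_0+i\sigma(\cdot,\nabla h))\bsPhi\|^2$, and regroup via the Hermitian identity
\begin{equation*}
\zeta^2+2\zeta\sigma(\cdot,\nabla h)=(\zeta+\sigma(\cdot,\nabla h))^2-\sigma(\cdot,\nabla h)^2.
\end{equation*}
The remaining pointwise matrix quantity is then exactly the left-hand side of \eqref{E:S.26} plus the nonnegative Hermitian square $(\zeta+\sigma(\cdot,\nabla h))^2$. Hence on the boundary layer $\{\hat\delta<\delta_0\}$ it is bounded below by $c\,\mathds{1}$ by assumption \eqref{E:S.26}. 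On the complement $\{\hat\delta\geq\delta_0\}$, which is a compact subset of the bounded domain $\Omega$, the matrix-valued functions $\mathbb{V}$, $\nabla\mathbb{V}$, $\bsA^j$, and $\sigma(\cdot,\nabla h)$ are uniformly bounded by continuity, so the same quantity is bounded below by $\zeta^2-K(1+|\zeta|)$ for some $K$ independent of $\zeta$. Choosing $\zeta_0$ so large that $\zeta_0^2-K(1+\zeta_0)\geq c$ then yields \eqref{E:S.7} globally with constant $C=\sqrt{c}$, for both signs of $\zeta_0$. This bridging between the boundary layer, where the scalar hypothesis \eqref{E:S.26} supplies coercivity directly, and the interior, where only continuity/boundedness is at hand, is where I expect the main subtlety to lie; the freedom to take $|\zeta|$ large is crucial since \eqref{E:S.26} is not assumed globally.

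With \eqref{E:S.7} in hand, Lemma~\ref{L:LemmaB} applied to $\bsPsi_{\pm\zeta_0}$ with $g=g_n$ produces
\begin{equation*}
\sqrt{c}\,\|g_n e^h \bsPsi_{\pm\zeta_0}\|\leq\|\sigma(\cdot,\nabla g_n)\,e^h\bsPsi_{\pm\zeta_0}\|.
\end{equation*}
Combining the pointwise estimate $\|\sigma(x,\xi)\|^2\leq\bsM(x)(\xi,\xi)$ (an immediate consequence of \eqref{E:S.1} and the Hermitian nature of $\sigma(x,\xi)$), the growth bound \eqref{E:S.25}, the localization and size bounds \eqref{E:S.15} together with $|\nabla\hat\delta|\leq L$ from Theorem~\ref{T:delta}, and the identity $e^{2h}=\hat\delta^{2-m}$ valid on $\{\hat\delta<t_0\}$, a short computation shows that on the annulus $\{\rho_n<\hat\delta<2\rho_n\}$ one has
\begin{equation*}
\|\sigma(x,\nabla g_n(x))\|^2\,e^{2h(x)}\leq M\,\hat\delta(x)^{m}\cdot\frac{4L^2}{\rho_n^2}\cdot\hat\delta(x)^{2-m}\leq 16ML^2,
\end{equation*}
uniformly in $n$, and the left-hand side vanishes off the annulus. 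Since $\bsPsi_{\pm\zeta_0}\in L^2(\Omega)$ and the annuli shrink to the measure-zero set $\partial\Omega$, the right-hand side of the basic inequality tends to $0$. On the other hand, $g_n\nearrow1$ pointwise on $\Omega$ and $e^h$ is bounded on $\Omega$ (because $1-m/2>0$), so dominated convergence yields $\|g_n e^h\bsPsi_{\pm\zeta_0}\|\to\|e^h\bsPsi_{\pm\zeta_0}\|$. The limiting inequality forces $e^h\bsPsi_{\pm\zeta_0}=0$ almost everywhere, and since $e^h>0$ on $\Omega$, we conclude $\bsPsi_{\pm\zeta_0}=0$; this establishes triviality of both deficiency subspaces, and hence essential self-adjointness of $\bbD$.
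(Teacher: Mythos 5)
Your proposal is correct and follows essentially the same route as the paper: coercivity of $\bbD(-h)+i\zeta$ for large $|\zeta|$ obtained from Lemma~\ref{L:LemmaS} by completing the square in $\zeta$ and $\sigma(\cdot,\nabla h)$, using \eqref{E:S.26} on the boundary layer and compactness plus continuity on $\{\hat\delta\ge\delta_0\}$, followed by Lemma~\ref{L:LemmaB} with the cutoffs $g_n$ and the uniform bound $\|\sigma(\cdot,\nabla g_n)\|^2e^{2h}\le \const\cdot ML^2$ on the shrinking annuli. The only (harmless) difference is the final limiting step, where you use dominated convergence and the boundedness of $e^h$ on all of $\Omega$ to get $e^h\bsPsi_{\pm\zeta_0}=0$ directly, whereas the paper localizes to an arbitrary compact $K\subset\Omega$ and concludes $\chi_K\bsPsi_{\pm\zeta_0}=0$; both arguments are valid.
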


\begin{proof}
We first verify that \eqref{E:S.7} holds for sufficiently large $|\zeta|$. From \eqref{E:S.18} it is sufficient to show that
\begin{equation}\label{E:S.28}
    \mathbb{V}^2 + \zeta^2  + 2\zeta\, \sigma(\cdot,\nabla h) - i[\sigma(\cdot,\nabla h),\mathbb{V}]  
    - \frac{i}{2}\sum_{j=1}^d \bigg[\bsAj,\frac{\partial}{\partial x_j}\mathbb{V}(x)\bigg] \geq c \mathds{1}\,.
\end{equation}
Now, since $\{x \in \Omega \mid \hat{\delta}(x) \geq \delta_0\}$ is compact, one can choose $|\zeta|$ large enough so that \eqref{E:S.28} holds for all $x$ satisfying $\hat{\delta}(x) \geq \delta_0$. For $\hat{\delta}(x) < \delta_0$, \eqref{E:S.28} follows from \eqref{E:S.26} and the fact that
\[
\zeta^2 + 2\zeta \sigma(x,\nabla h(x)) + \sigma(x,\nabla h(x))^2 \geq 0.
\]

Fix, now, an arbitrary compact $K \subset \Omega$. There exists $\delta_K >0$ such that
\begin{equation}
    K \subset \big\{x \in \Omega\,\big|\, \hat{\delta}(x) >\delta_K\big\}.
\end{equation}
In the definition of $g_n$ we choose $\rho_0 \leq \frac{1}{2}\delta_K$.
Notice that, by construction, $g_n \rvert_K = 1$  for all $n\geq1$.
We now use \eqref{E:S.8} for $g_n$ as defined in \eqref{E:S.12} and \eqref{E:S.13}. On the one hand, since
\begin{equation}
    \inf_{x \in K} e^{2h(x)} \geq \inf_{\hat{\delta}(x) \geq \delta_K} e^{2h(x)} =: H_K^2 > 0,
\end{equation}
one has
\begin{align}\label{E:S.33}
    c^2 \big\| g_n e^h \bsPsi_{\zeta} \big\|^2 \geq c^2 \int_K g_n(x)^2 e^{2h(x)}|\bsPsi_{\zeta}(x)|^2 dx 
    \geq c^2 H_K^2 \int_K |\bsPsi_{\zeta}(x)|^2 dx.
\end{align}
On the other hand, from \eqref{E:S.10}, \eqref{E:S.11}, \eqref{E:S.25}, and the fact that for all $\xi \in \R^d$, $\big\| \sigma(x,\xi)\big\|^2 \leq \langle \xi, \bsM(x) \xi \rangle$ (see \cite[Lemma~2.2]{NN4}), we obtain that
\begin{equation}\label{E:S.34}
    \big\| \sigma(\cdot,\nabla g_n)e^h \bsPsi_{\zeta} \big\|^2 \leq M \int_\Omega \hat{\delta}(x)^2 \lvert \nabla g_n(x) \rvert^2 |\bsPsi_{\zeta}(x)|^2\, dx.
\end{equation}
Let $\Omega_n = \big\{x \in \Omega \mid \hat{\delta}(x) > \rho_n \big\}$. Using \eqref{E:S.12} and \eqref{E:S.15} in \eqref{E:S.34} yields
\begin{equation}\label{E:S.36}
\begin{aligned}
\big\| \sigma(\cdot,\nabla g_n)e^h \bsPsi_{\zeta} \big\|^2 &\leq M \int_{\Omega_{n+1}\setminus \Omega_n} \hat{\delta}(x)^2  
|\nabla g_n(x)|^2 |\bsPsi_{\zeta}(x)|^2 dx \\
&\leq 4L^2M \int_{\Omega_{n+1}\setminus \Omega_n} \big|\bsPsi_{\zeta}(x)\big|^2\, dx.
\end{aligned}
\end{equation}
From \eqref{E:S.33}, \eqref{E:S.36}, and Lemma \ref{L:LemmaB},
\begin{equation}
    \frac{c^2 H_K^2}{4L^2M} \int_K |\bsPsi_{\zeta}(x)|^2 dx \leq \int_{\Omega_{n+1}\setminus \Omega_n} |\bsPsi_{\zeta}(x)|^2 dx.
\end{equation}
Since $\bsPsi_{\zeta} \in L^2(\Omega; \C^{k})$, the right-hand side of \eqref{E:S.36} converges to zero 
as $n \rightarrow \infty$. Thus one obtains that $\chi_K \bsPsi_{\zeta} = 0$. Since $K$ is arbitrary, 
this implies that for $a>0$ sufficiently large,  $\bsPsi_{\pm ia} = 0$. Hence $\mathbb{D}$ has defect indices equal to zero, 
proving the claim that $\mathbb{D}$ is essentially self-adjoint.
\end{proof}

\begin{remark}
It is not a-priori obvious that, for a given $\ID$, there exist scalar potentials as described in
Definition~\ref{D:ScaPo}. For example, if $k=1$, or $\bsAj(x) = Q(x)\mathds{1}$, $j=1,\ldots,d$, 
then nontrivial scalar potentials do not exist, and thus Theorem~\ref{T:S} is void. However,
as already mentioned in the Introduction, scalar potentials do exists for standard Dirac operators,
and applications of Theorem~\ref{T:S} will be discussed at length in Section~\ref{S:Dscal}.
\end{remark}

In the proof of Theorem \ref{T:S} we used only the fact that
        \begin{equation}\label{E:S.38}
            \big\| (\bbD_0 + i \sigma(\cdot, \nabla h))\bsPhi \big\| \ge 0.
        \end{equation}
However, in certain cases one can prove Hardy  type inequalities:
 \begin{equation}\label{E:S.39}
 \big\| (\bbD_0 + i\sigma(\cdot,\nabla h))\bsPhi \big\|^2 \ge \int_\Omega H_h(x) \big| \bsPhi(x) 
\big|^2\,dx,\quad\text{for all } \bsPhi \in C_0^1(\Omega;\C^k)\,,
 \end{equation}
where $H_h(x)$ is uniformly bounded from below and blows up as $x \rightarrow \partial \Omega$.
            
 In such a case, Theorem \ref{T:S} takes the form
\begin{theorem}\label{T:SH}
Let $\mathbb{V}$ be a scalar potential, $\mathbb{V}\in C^1(\Omega; \C^{k \times k})$, and choose
$t_0$ as in \eqref{E:S.9} and $h$ as in \eqref{E:S.10}.
Assume that there exists a Hardy function $H_h$ as defined in \eqref{E:S.39}, and constants
$M < \infty$, $0 \leq m < 2$, $\delta_0\in(0,t_0)$, and $c>0$ such that \eqref{E:S.25} holds and
\begin{equation}\label{E:S.40}
H_h\mathds 1+\bbV^2 - \frac{i}{2}(\bsA \cdot \nabla \mathbb{V} - \nabla \mathbb{V} \cdot \bsA) -i[\sigma(\cdot,\nabla h),\mathbb{V}]
-\sigma(\cdot,\nabla h)^2 \geq c \mathds{1}
\end{equation}
for all $x \in \Omega$ with $\hat{\delta}(x) < \delta_0$.

Then $\mathbb{D} = \mathbb{D}_0 + \mathbb{V}$ is essentially self-adjoint on $C_0^1(\Omega;\C^{k\times k})$.
\end{theorem}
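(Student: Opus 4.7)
The proof is a direct adaptation of that of Theorem~\ref{T:S}, with the Hardy inequality \eqref{E:S.39} replacing the trivial bound \eqref{E:S.38}. The plan has two parts: upgrade the matrix bound that gives \eqref{E:S.7}, and then rerun the localization argument verbatim.

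First, I would establish \eqref{E:S.7} for all sufficiently large $|\zeta|$. Starting from the identity of Lemma~\ref{L:LemmaS} and applying the Hardy inequality \eqref{E:S.39} to the first term on the right-hand side of \eqref{E:S.18}, I obtain
\begin{align*}
\big\|\big(\mathbb{D}(-h)+i\zeta\big)\bsPhi\big\|^2 &\geq \int_\Omega H_h(x)\,\big|\bsPhi(x)\big|^2\,dx + \dlab \bsPhi,(\mathbb{V}^2+\zeta^2)\bsPhi\drab\\
&\quad + 2\zeta\,\dlab \bsPhi,\sigma(\cdot,\nabla h)\bsPhi\drab - i\dlab\bsPhi,[\sigma(\cdot,\nabla h),\mathbb{V}]\bsPhi\drab\\
&\quad - \tfrac{i}{2}\bdlab\bsPhi,\sum_{j=1}^d \big[\bsAj,\partial_{x_j}\mathbb{V}\big]\bsPhi\bdrab.
\end{align*}
Completing the square via $\zeta^2+2\zeta\sigma(\cdot,\nabla h)=(\zeta+\sigma(\cdot,\nabla h))^2-\sigma(\cdot,\nabla h)^2$, the pointwise matrix in the bracket becomes $(\zeta+\sigma(x,\nabla h(x)))^2$ plus exactly the matrix on the left-hand side of \eqref{E:S.40}. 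The squared term is nonnegative, and hypothesis \eqref{E:S.40} gives the matrix bound $\geq c\mathds 1$ on $\{\hat\delta<\delta_0\}$. On the compact complement $\{\hat\delta\geq\delta_0\}$, all coefficients ($\mathbb{V}$, $\sigma(\cdot,\nabla h)$, $[\bsAj,\partial_{x_j}\mathbb{V}]$, and $H_h$, the latter being bounded below by assumption) are bounded, so choosing $|\zeta|$ large lets $\zeta^2$ dominate and yields \eqref{E:S.7}.

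Given \eqref{E:S.7}, the remainder of the proof is identical to that of Theorem~\ref{T:S}: Lemma~\ref{L:LemmaB} applied with the weight $h$ from \eqref{E:S.10} and the cutoff sequence $g_n$ from \eqref{E:S.13}--\eqref{E:S.14} gives $c\,\|g_n e^h\bsPsi_\zeta\|\leq \|\sigma(\cdot,\nabla g_n)e^h\bsPsi_\zeta\|$, after which the growth bound \eqref{E:S.25} on $\bsM$ and the cutoff property \eqref{E:S.15} force the right-hand side to vanish as $n\to\infty$. This yields $\chi_K\bsPsi_\zeta=0$ for every compact $K\subset\Omega$, hence $\bsPsi_{\pm ia}=0$ for $a$ sufficiently large, so $\mathbb{D}$ has zero defect indices.

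The proof itself offers no fundamentally new obstacle beyond what is already resolved in Theorem~\ref{T:S}; it is essentially bookkeeping to transfer the term $H_h\mathds 1$ from the kinetic side of \eqref{E:S.18} to the potential side. The genuine mathematical content of Theorem~\ref{T:SH} lies elsewhere, in producing usable Hardy functions $H_h$, i.e.\ verifying \eqref{E:S.39} for a concrete $\mathbb{D}_0$ and weight $h$; such a Hardy bound is precisely what enables essential self-adjointness conclusions for borderline scalar potentials whose size alone is insufficient to satisfy \eqref{E:S.26}.
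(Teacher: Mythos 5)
Your proposal is correct and follows exactly the argument the paper intends for Theorem~\ref{T:SH}: one reruns the proof of Theorem~\ref{T:S}, replacing the trivial bound \eqref{E:S.38} by the Hardy inequality \eqref{E:S.39} so that the sufficient pointwise condition for \eqref{E:S.7} acquires the extra term $H_h\mathds 1$, i.e.\ becomes \eqref{E:S.40}, while the localization step with $h$, $g_n$, and \eqref{E:S.25} is unchanged. No gaps; this matches the paper's (implicit) proof.
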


\section{Confinement by general potentials: a perturbative result}\label{S:Pert}

In the previous section we obtained sufficient conditions on a scalar potential $\bbV_s$ which ensure
essential self-adjointness of $\bbD_0+\bbV_s$ with domain $C_0^1\big(\Omega;\IC^k\big)$. In this section,
we consider the same question for an operator of the form $\bbD_0+\bbV_s+\bbW$ in a 
perturbative regime, i.e. when $\bbD_0+\bbV_s$ is essentially self-adjoint and $\bbW$ is
a general potential which is small enough. Our result is a consequence of well-known
Kato-Rellich types theorems, and it is given in the following theorem.

\begin{theorem}\label{T:P}
Let $\bbV_s\in C^1\big(\Omega;\IC^{k\times k}\big)$ be a scalar potential and 
$\bbW\in C^0\big(\Omega;\IC^{k\times k}\big)$ a general one.
Assume that $\bbD_0+\bbV_s$ is essentially self-adjoint on $C_0^1\big(\Omega;\IC^k\big)$. Further assume that there exists a function $H_0$ bounded from below on $\Om$ such that
$H_0(x)\to\infty$ as $x\to\partial\Om$ and 
\begin{equation}\label{E:S.39a}
                    \big\| \bbD_0  \bsPhi \big\|^2 \ge \int_\Omega H_0(x) \lvert \bsPhi(x) 
                    \rvert^2\,dx\,,\quad\text{for all }\bsPhi\in C_0^1\big(\Omega;\IC^k\big) \,.
\end{equation}
\begin{enumerate}
\item[i.] If there exists $\delta_0>0$ such that
\begin{equation}\label{E:P.1}
\chi_{_{\delta_0}}\bigg(H_0\mathds 1+\bbV_s^2
-\frac{i}{2}\sum_{j=1}^d \left[\bsAj,\frac{\partial\bbV_s}{\partial x_j}\right]-\bbW^2\bigg)\geq0
\end{equation}
where $\chi_{_{\delta_0}}$ denotes the characteristic function of the set 
$\big\{x\in\Omega\,\big|\,\delta(x)\leq\delta_0\big\}$,
then $\bbD_0+\bbV_s+\bbW$ is essentially self-adjoint on $C_0^1\big(\Omega;\IC^k\big)$.
\item[ii.] If there exist $0<C<1$ and $\delta_0>0$ such that
\begin{equation}\label{E:P.2}
\chi_{_{\delta_0}}\left(C\bigg(H_0\mathds 1+\bbV_s^2
-\frac{i}{2}\sum_{j=1}^d \left[\bsAj,\frac{\partial\bbV_s}{\partial x_j}\right]\bigg)-\bbW^2\right)\geq0\,,
\end{equation}
then $\overline{\bbD_0+\bbV_s+\bbW}$ is self-adjoint on $\cD\big(\overline{\bbD_0+\bbV_s}\big)$,
where $\overline{\bbD_0+\bbV_s+\bbW}$ and $\overline{\bbD_0+\bbV_s}$ denote
the closures of $\bbD_0+\bbV_s+\bbW$ and $\bbD_0+\bbV_s$, respectively.
\end{enumerate}
\end{theorem}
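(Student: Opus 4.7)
The plan is to show, under both sets of hypotheses, that $\bbW$ is a Kato-type perturbation of $\bbD_0+\bbV_s$, with relative bound at most $1$ in case (i) and strictly less than $1$ in case (ii). Part (i) then follows from W\"ust's theorem and part (ii) from the Kato-Rellich theorem, both applied with $A=\bbD_0+\bbV_s$ (essentially self-adjoint by assumption) and $B=\bbW$.

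The key ingredient is a quadratic form lower bound on $\bbD_0+\bbV_s$ coming from the scalar structure of $\bbV_s$. Using the identity \eqref{E:S.19} from Lemma~\ref{L:LemmaS}, namely $\bbD_0\bbV_s+\bbV_s\bbD_0 = -\tfrac{i}{2}\sum_j[\bsAj,\partial_j\bbV_s]$ on $C_0^1(\Om;\IC^k)$, expanding $\|(\bbD_0+\bbV_s)\bsPhi\|^2$ directly and invoking the Hardy-type estimate \eqref{E:S.39a}, one gets for every $\bsPhi \in C_0^1(\Om;\IC^k)$:
\begin{equation*}
\|(\bbD_0+\bbV_s)\bsPhi\|^2 \;\geq\; \dlab\bsPhi,\,T\,\bsPhi\drab,
\qquad
T := H_0\mathds{1} + \bbV_s^2 - \tfrac{i}{2}\sum_{j=1}^d[\bsAj,\partial_j\bbV_s].
\end{equation*}

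For part (i), hypothesis \eqref{E:P.1} gives $\bbW^2 \leq T$ pointwise as Hermitian matrices on $\{\delta \leq \delta_0\}$. Since $\Om$ is bounded, $\{\delta \geq \delta_0\}$ is compact in $\Om$, so both $\bbW$ and $T$ are bounded there (using continuity of $\bsAj$, $\bbV_s$, $\partial_j\bbV_s$ and the assumed lower bound on $H_0$). Splitting $\|\bbW\bsPhi\|^2$ over the two regions and absorbing the compact-region contributions into an additive multiple of $\|\bsPhi\|^2$ yields
\begin{equation*}
\|\bbW\bsPhi\|^2 \;\leq\; \|(\bbD_0+\bbV_s)\bsPhi\|^2 + C_1\|\bsPhi\|^2,
\end{equation*}
hence $\|\bbW\bsPhi\| \leq \|(\bbD_0+\bbV_s)\bsPhi\| + C_1^{1/2}\|\bsPhi\|$; W\"ust's theorem then delivers essential self-adjointness of $\bbD_0+\bbV_s+\bbW$ on $C_0^1(\Om;\IC^k)$. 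Part (ii) is analogous but retains the factor $C<1$, producing $\|\bbW\bsPhi\| \leq \sqrt{C}\,\|(\bbD_0+\bbV_s)\bsPhi\| + C_2\|\bsPhi\|$, whereupon Kato-Rellich gives that $\overline{\bbD_0+\bbV_s+\bbW}$ is self-adjoint on $\cD(\overline{\bbD_0+\bbV_s})$, after extending $\bbW$ by continuity using density of $C_0^1(\Om;\IC^k)$ in that domain under the graph norm of $\bbD_0+\bbV_s$. The main delicacy, rather than a deep obstacle, is in part (i): one must preserve the coefficient $1$ in front of $\|(\bbD_0+\bbV_s)\bsPhi\|$ so that W\"ust (and not merely Kato-Rellich) applies, which forces the lower bound on $T$ over $\{\delta \geq \delta_0\}$ to be absorbed purely additively into the $\|\bsPhi\|^2$ term, with no rescaling of the graph norm.
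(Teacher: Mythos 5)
Your proposal is correct and follows essentially the same route as the paper's proof: the same expansion of $\big\|(\bbD_0+\bbV_s)\bsPhi\big\|^2$ via the anticommutator identity \eqref{E:S.19}, combined with the Hardy bound \eqref{E:S.39a}, the same splitting of $\Omega$ via $\chi_{_{\delta_0}}$ with the compact-region terms absorbed additively, and the same final appeal to W\"ust's theorem (relative bound $1$) for part i and Kato--Rellich for part ii after passing to closures. Your closing remark about preserving the coefficient $1$ in front of the graph-norm term for W\"ust is exactly the point the paper is careful about as well.
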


\begin{proof}
Set $\bbZ_s=\bbV_s^2-\frac{i}{2}\sum_{j=1}^d \left[\bsAj,\frac{\partial\bbV_s}{\partial x_j}\right]$.
Note that hypotheses \eqref{E:P.1} and \eqref{E:P.2} can be rewritten as
\begin{equation}\label{E:P.5}
\dlab\bsPhi,\chi_{_{\delta_0}}\bbW^2\bsPhi\drab\leq 
C\dlab\bsPhi,\chi_{_{\delta_0}}\big(H_0\mathds 1+\bbZ_s\big)\bsPhi\drab
\end{equation}
with $C=1$ and $C\in\big(0,1\big)$, respectively. 
Since $\bbW^2,\bbZ_s\in C^0\big(\Omega;\IC^{k\times k}\big)$ and $H_0$
is uniformly bounded from below, there exists a constant $b_{_{\delta_0}}<\infty$ such that
\begin{equation}\label{E:P.7}
\sup_{\Omega} \big\{(1-\chi_{_{\delta_0}})|\bbW^2|,(1-\chi_{_{\delta_0}})|\bbZ_s|,(1-\chi_{_{\delta_0}})|H_0|\big\}
\leq b_{_{\delta_0}}
\end{equation}
This bound, together with \eqref{E:P.5} and the decomposition $1=\chi_{_{\delta_0}}+(1-\chi_{_{\delta_0}})$, implies that
\begin{equation}\label{E:P.8}
\dlab\bsPhi,\bbW^2\bsPhi\drab\leq C\dlab\bsPhi,\big(H_0\mathds 1+\bbZ_s\big)\bsPhi\drab
+3b_{_{\delta_0}}\|\bsPhi\|^2\,.
\end{equation}
The Hardy barrier assumption \eqref{E:S.39a} and Lemma~\ref{L:LemmaS} imply that
on $C_0^1\big(\Omega;\IC^k\big)$:
\begin{equation}\label{E:P.3}
\big\|(\bbD_0+\bbV_s)\bsPhi\big\|^2
=\|\bbD_0\bsPhi\|^2+\dlab\bsPhi,\bbZ_s\bsPhi\drab
\geq \dlab\bsPhi,\big(H_0\mathds1+\bbZ_s\big)\bsPhi\drab\,,
\end{equation}
which combines with \eqref{E:P.8} to yield
\begin{equation}\label{E:P.9}
\big\|\bbW\bsPhi\big\|\leq \sqrt{C}\big\|(\bbD_0+\bbV_s)\bsPhi\big\|+\sqrt{3b_{_{\delta_0}}} \big\|\bsPhi\big\|\,.
\end{equation}
Here, as above, $C=1$ in part i. of the theorem, and $C\in\big(0,1\big)$ in part ii.

Now let $\overline{\bbW}$ and $\overline{\bbD_0+\bbV_s}$ denote the closures of 
$\bbW$ and $\bbD_0+\bbV_s$ with $\cD om(\bbW)=\cD om(\bbD_0+\bbV_s)=C_0^1\big(\Omega;\IC^k\big)$,
respectively. $\overline{\bbW}$ is symmetric and, by assumption, $\overline{\bbD_0+\bbV_s}$ is
self-adjoint. In addition, standard limiting arguments using \eqref{E:P.9} imply that
\begin{equation}\label{E:P.10}
\cD om\big(\overline{\bbD_0+\bbV_s}\big)\subset \cD om\big(\overline{\bbW}\big)
\end{equation}
and
\begin{equation}\label{E:P.11}
\big\|\overline{\bbW}\bsPsi\big\|\leq \sqrt{C}\,\big\|\overline{\bbD_0+\bbV_s}\,\bsPsi\big\|+\sqrt{3b_{_{\delta_0}}} \big\|\bsPsi\big\|
\quad\text{for all }\bsPsi\in \cD om\big(\overline{\bbD_0+\bbV_s}\big)\,.
\end{equation}
Since we know, by assumption, that $C_0^1\big(\Om;\IC^k\big)$ is a core of $\overline{\bbD_0+\bbV_s}$, 
the statement of i. follows directly from \eqref{E:P.11} with $C=1$ and W\"ust's Theorem \cite[Theorem X.14]{RS}.
Part ii. of Theorem~\ref{T:P} similarly follows from \eqref{E:P.11} with $C<1$ and the standard Kato-Rellich Theorem
\cite[Theorem X.12]{RS}.
\end{proof}

\section{Confinement of Dirac particles: Lorentz scalar potentials in $d\leq 3$}\label{S:Dscal}

Let $\Omega$ be a (bounded) domain in $\IR^d$, with $d\leq 3$, on which
we consider the standard Dirac operator $\bbD$ as follows:
\begin{equation}\label{E:D.1.2.3}
\bbD=
\begin{cases}
\sigma_2D_1+\bbV\,,&\quad \cD om(\bbD)=C_0^1\big(\Om;\IC^2\big) \text{ for } d=1\\
\sigma_1D_1+\sigma_2D_2+\bbV\,,&\quad \cD om(\bbD)=C_0^1\big(\Om;\IC^2\big) \text{ for } d=2\\
\alpha_1D_1+\alpha_2 D_2+\alpha_3D_3+\bbV
\,,&\quad\cD om(\bbD)=C_0^1\big(\Om;\IC^4\big) \text{ for } d=3
\end{cases}
\end{equation}
Here and in what follows we use the standard notation
\begin{equation}\label{E:D.4}
\sigma_1=\left(\begin{matrix}0 & 1\\ 1 & 0\end{matrix}\right)\,,\quad
\sigma_2=\left(\begin{matrix}0 & -i\\ i & 0\end{matrix}\right)\,,\quad
\sigma_3=\left(\begin{matrix}1 & 0\\ 0 & -1\end{matrix}\right)\,,\quad
\end{equation}
for the Pauli $2\times2$ matrices, and
\begin{equation}\label{E:D.5}
\alpha_j=\left(\begin{matrix}0 & \sigma_j\\ \sigma_j & 0\end{matrix}\right)\,,\,\,\, j=1,2,3,\qquad
\beta=\left(\begin{matrix}\mathds 1_2 & 0\\0&-\mathds1_2\end{matrix}\right)
\end{equation}
for the Dirac $4\times4$ matrices. In addition, we consider potentials $\bbV$ which are
Hermitian matrix-valued functions on $\Omega$, $2\times2$ for $d=1,2$ and $4\times4$ for $d=3$,
with $C^1(\Omega)$-smooth matrix entries.

Note that for any $d\leq3$, the Dirac operators are of the form $\bbD=\bbD_{0}+\bbV$, where $\ID_{0,d}$
are as in \eqref{E:D0} and \eqref{E:D.1.2.3}.
Other forms of the operators $\bbD_{0}$, unitarily equivalent to the ones above, 
sometimes occur; see, e.g., \cite[Appendix to Chapter 1]{Th}.
One simple such example in the $d=1$ case is given by $\Sigma D_1$ for any $2\times2$ Hermitian
matrix $\Sigma$ with $\Sigma^2=\mathds 1_2$ and Tr$(\Sigma)=0$. 
Since any two such matrices $\Sigma$ have the same nondegenerate
eigenvalues (namely $\pm1$), any two such representations $\Sigma D_1$ are unitarily equivalent.

We collect here, for later use, the most important (anticommutation) rules for Pauli and Dirac matrices:
\begin{equation}\label{E:D.7}
\{\sigma_j,\sigma_\ell\}=2\delta_{j\ell}\mathds 1_2\,,\quad \{\alpha_j,\alpha_\ell\}=2\delta_{j\ell}\mathds 1_4\,,\quad
\{\alpha_j,\beta\}=0\,,\quad \text{and} \quad \beta^2=\mathds 1_4
\end{equation}
for any $j,\ell=1,2,3$, where for any two $k\times k$ matrices $\gamma_1,\gamma_2$, the anticommutator is
defined as usual by 
$$
\{\gamma_1,\gamma_2\}=\gamma_1\gamma_2+\gamma_2\gamma_1\,.
$$

In this section we consider a class of scalar potentials $\bbV_{Ls}\in\cS_{Ls}$ given as
\begin{equation}\label{E:D.8}
\bbV_{Ls}(x)=
\begin{cases}
\sigma_1 v(x)\,, &\quad\text{for }d=1\\
\sigma_3 v(x)\,, &\quad\text{for }d=2\\
\beta v(x)\,, &\quad\text{for }d=3\\
\end{cases}
\end{equation}
where $v\in C^1(\Omega,\IR)$. Our goal in this section is to find classes of real-valued
scalar functions $v$ for which $\bbD_{0,d}+\bbV_{Ls}$ is essentially self-adjoint. We consider
the class of potentials $\cS_{Ls}$ since, on the one hand, the results and proofs in this case are 
simple enough to clearly illustrate the main ideas, and, on the other hand, it contains for
$d=3$ the Lorentz scalar potentials, which are defined in quantum mechanics via their 
behavior under Lorentz transformations (see \cite[Chap. 4.2]{Th}).

The results of this section follow from applications of Theorems~\ref{T:S} and \ref{T:P}. Since,
by a short calculation using the anticommutation relations above,
the velocity matrix $\bsM$ in \eqref{E:S.1} for Dirac operators is a multiple of the identity matrix,
the choice of exponent in hypothesis \eqref{E:S.25} is $m=0$.
Thus we must provide criteria which guarantee that hypothesis \eqref{E:S.26} holds with $m=0$, i.e. with
\begin{equation}\label{E:D.9}
h(x)=\ln\,\hat\delta(x)\,.
\end{equation}
Recall that $\hat\delta(x)$ is defined via Theorem~\ref{T:delta}, and this choice of
$h$ is consistent with \eqref{E:S.10} and \eqref{E:S.11}.
The two propositions below are direct corollaries of Theorem~\ref{T:S} and
provide large classes of potentials $\bbV_{Ls}$ ensuring the essential self-adjointness
of $\bbD_{0,d}+\bbV_{Ls}$.  
Before we proceed, note that in what follows we will say that a property hold for
a sufficiently small $t$ as a shorthand for saying that there exists $t_0>0$ such that 
the property holds for all $t\in(0,t_0]$.

For $\alpha>1$, we denote by $\cV^\alpha$ the class of scalar-valued,
real-differentiable functions $v$ on $\Omega$ for which there exists $\varepsilon>0$
and a constant $C_v\in(0,\infty)$ such that
\begin{equation}\label{E:D.10}
\frac{C_v}{\delta(x)^\alpha}\leq\big|v(x)\big|\leq \frac{C_v}{\delta(x)^{2\alpha-1-\varepsilon}} 
\quad\text{and}\quad
\big|\nabla v(x)\big|\leq \frac{C_v}{\delta(x)^{2\alpha-\varepsilon}} 
\end{equation}
whenever $\delta(x)$ is sufficiently small. Loosely speaking, condition
\eqref{E:D.10} says that a function $v\in\cV^\alpha$ must blow-up fast enough
as $x\to\partial\Omega$, and its partial derivatives cannot behave too wildly.
For example, $v(x)$ proportional to $1/\hat\delta(x)^\alpha$, where $\hat\delta$
is the regularized distance from Theorem~\ref{T:delta}, is in $\cV^\alpha$ since it
satisfies \eqref{E:D.10} for any $0<\varepsilon\leq\alpha-1$.

\begin{proposition}\label{P:SDalpha}
Let $\alpha>1$. For any $d\leq 3$, the Dirac operator
$\ID=\ID_{0,d}+\bbV_{Ls}$ is essentially self-adjoint,
provided its Lorentz scalar potential $\bbV_{Ls}$
is such that $v\in\cV^\alpha$.
\end{proposition}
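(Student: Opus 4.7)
The plan is to invoke Theorem~\ref{T:S} after verifying its algebraic prerequisites and reducing hypothesis \eqref{E:S.26} to a single scalar inequality on $v$. Let $\Sigma$ denote $\sigma_1$, $\sigma_3$, or $\beta$ according to whether $d=1$, $2$, or $3$, so that $\bbV_{Ls}=v\Sigma$. The anticommutation relations \eqref{E:D.7} give $\{\bsAj,\Sigma\}=0$ in every dimension, so $\bbV_{Ls}$ is a scalar potential in the sense of Definition~\ref{D:ScaPo}. They also give $\bsAj\bsAl+\bsAl\bsAj=2\delta_{jl}\mathds 1$, whence $\bsM$ is a constant multiple of the $d\times d$ identity matrix and \eqref{E:S.25} holds with $m=0$. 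Accordingly I choose $h(x)=\ln\hat\delta(x)$ as in \eqref{E:D.9}; by Theorem~\ref{T:delta} the gradient satisfies $|\nabla h(x)|\leq L/\hat\delta(x)\leq 5L/\delta(x)$.

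The next step is to expand the left-hand side of \eqref{E:S.26} using $\Sigma^2=\mathds 1$, $\{\bsAj,\bsAl\}=2\delta_{jl}\mathds 1$, and $\Sigma\bsAj=-\bsAj\Sigma$. A direct computation gives
\[
\bbV_{Ls}^2=v^2\mathds 1,\qquad \sigma(\cdot,\nabla h)^2=|\nabla h|^2\mathds 1,
\]
\[
-\tfrac{i}{2}\big(\bsA\cdot\nabla\bbV_{Ls}-\nabla\bbV_{Ls}\cdot\bsA\big)=-i\,\sigma(\cdot,\nabla v)\,\Sigma,\qquad -i\,[\sigma(\cdot,\nabla h),\bbV_{Ls}]=-2iv\,\sigma(\cdot,\nabla h)\,\Sigma,
\]
so the left-hand side of \eqref{E:S.26} equals
\[
\big(v^2-|\nabla h|^2\big)\mathds 1 \;-\; i\,\sigma\big(\cdot,\nabla v+2v\nabla h\big)\,\Sigma.
\]
Since $\sigma(\cdot,\xi)\Sigma$ is anti-Hermitian with operator norm $|\xi|$, the second term is Hermitian and bounded in operator norm by $|\nabla v|+2|v|\,|\nabla h|$.

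It therefore suffices to establish the scalar lower bound
\[
v(x)^2-|\nabla h(x)|^2-|\nabla v(x)|-2|v(x)|\,|\nabla h(x)|\;\geq\;c\;>\;0
\]
on some set $\{\hat\delta(x)<\delta_0\}$. Applying the bounds built into $\cV^\alpha$ --- namely $|v|\geq C_v\delta^{-\alpha}$, $|v|\leq C_v\delta^{-(2\alpha-1-\varepsilon)}$, $|\nabla v|\leq C_v\delta^{-(2\alpha-\varepsilon)}$, together with $|\nabla h|\leq 5L\delta^{-1}$ --- each subtractive term is majorized by a constant multiple of $\delta^{-(2\alpha-\varepsilon)}$, or of $\delta^{-2}$ in the case of $|\nabla h|^2$, while $v^2\geq C_v^2\delta^{-2\alpha}$. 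Since $\alpha>1$ and $\varepsilon>0$, the exponent $-2\alpha$ is strictly smaller than both $-(2\alpha-\varepsilon)$ and $-2$, so $v^2$ strictly dominates each subtractive term as $\delta(x)\to 0^+$, and the displayed bound holds for all sufficiently small $\delta_0$. Theorem~\ref{T:S} then yields the essential self-adjointness of $\bbD_{0,d}+\bbV_{Ls}$.

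The substantive step is the algebraic reduction in the second paragraph: the scalar property $\{\bsAj,\Sigma\}=0$ is precisely what collapses every matrix-valued contribution to \eqref{E:S.26} into a single anti-Hermitian tensor $\sigma(\cdot,\cdot)\Sigma$ whose norm is explicit. Once this reduction is made, the asymptotic comparison of exponents is a bookkeeping exercise, and I do not anticipate any further obstacle. The upper bound on $|v|$ built into the definition of $\cV^\alpha$ is used only to control the cross term $2|v|\,|\nabla h|$, which explains why that upper bound has the specific exponent $2\alpha-1-\varepsilon$ rather than something looser.
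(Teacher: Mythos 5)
Your proposal is correct and follows essentially the same route as the paper: verify that $\bbV_{Ls}$ is scalar and that $\bsM$ forces $m=0$ and $h=\ln\hat\delta$, reduce hypothesis \eqref{E:S.26} to the matrix inequality \eqref{E:D.11}, and observe that since $\alpha>1$ and $\varepsilon>0$ the term $v^2\gtrsim\delta^{-2\alpha}$ dominates the remaining terms, each of order at most $\delta^{-(2\alpha-\varepsilon)}$ or $\delta^{-2}$. Your explicit packaging of the off-diagonal contributions into the single Hermitian matrix $-i\,\sigma(\cdot,\nabla v+2v\nabla h)\,\Sigma$ with operator norm $|\nabla v+2v\nabla h|$ is a cleaner justification of the paper's one-line "the first term is dominant" step.
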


\begin{proof}
The proof is essentially the same for all $d\leq 3$, so we will only give it for the case $d=3$.
To apply Theorem~\ref{T:S} and conclude essential self-adjointness, 
we need to check that for $\bbV_{Ls}=\beta v$, with $v\in\cV^\alpha$ for some $\alpha>1$,
hypothesis \eqref{E:S.26} holds. Plugging this form of $\bbV_{Ls}$ into \eqref{E:S.26}, with
$h$ as in \eqref{E:D.9}, shows that it is sufficient to prove that there exists a constant $c>0$
such that, for $x$ with $\hat\delta(x)$ sufficiently small,
\begin{equation}\label{E:D.11}
v^2\mathds1_4-\beta\sum_{j=1}^3\alpha_j D_jv-
2\beta\frac{v}{\hat\delta}\sum_{j=1}^3\alpha_jD_j\hat\delta-
\frac{1}{\hat\delta^2}\big|\nabla\hat\delta\big|^2\mathds1_4\geq c\mathds1_4\,.
\end{equation}
From \eqref{E:D.10} and Theorem~\ref{T:delta}, we can estimate, for $\delta(x)$ small enough,
the size of each term above, and since $\alpha>1$, we see that the first term  
on the left-hand side of \eqref{E:D.11} is dominant, ensuring that 
the inequality \eqref{E:D.11} holds for $\delta(x)$ small enough, completing the proof.
\end{proof}

Now consider $\alpha<1$, and a function $v(x)$ proportional to $\frac{1}{\hat\delta(x)^\alpha}$.
In this case, one sees that the fourth term in \eqref{E:D.11}, $|\nabla\hat\delta(x)|^2/\hat\delta(x)^2$,
is (generically) dominant as $\delta(x)\to0$, and so the inequality \eqref{E:D.11} cannot hold true. In fact, in the next section
we give examples in $d=1$ where for such a behavior of $v$ we know that $\bbD_1$ is not essentially
self-adjoint.

So the only case still pending is $\alpha=1$. Let $\cV^1$ be the class of real-valued, differentiable
functions $v$ on $\Omega$ 
for which there exists a constant $C_v\in(0,\infty)$ 
such that
\begin{equation}\label{E:D.12}
v(x)=\frac{\ell(x)}{\delta(x)} 
\quad\text{and}\quad
\big|\nabla v(x)\big|\leq \frac{C_v\big|\ell(x)\big|}{\delta(x)^{2}}\,,\quad\text{with }\big|\ell(x)\big|\geq 1
\end{equation}
whenever $\delta(x)$ is sufficiently small.

\begin{proposition}\label{P:DS1}
Let $v\in \cV^1$. Then there exists $\lambda_v>0$ such that, for any $\lambda\geq\lambda_v$
and for $\tilde v(x)=\lambda\,v(x)$, the Dirac operator $\ID=\ID_{0,d}+\bbV_{Ls}$, whose 
Lorentz scalar potential $\bbV_{Ls}$ is defined using $\tilde v$, is essentially self-adjoint.
\end{proposition}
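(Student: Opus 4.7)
The plan is to apply Theorem~\ref{T:S} with $m=0$ and $h=\ln\hat\delta$ exactly as in the proof of Proposition~\ref{P:SDalpha}, exploiting the parameter $\lambda$ to control a Hardy-type remainder that was subdominant when $\alpha>1$ but becomes borderline when $\alpha=1$. I will present the $d=3$ case; the arguments for $d=1,2$ are identical, since in all three cases $\bbV_{Ls}$ is a scalar potential in the sense of Definition~\ref{D:ScaPo}, and since for standard Dirac operators the velocity matrix $\bsM$ is a constant multiple of $\mathds 1$, so \eqref{E:S.25} holds trivially with $m=0$.

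First I would specialize \eqref{E:S.26} to $\bbV_{Ls}=\lambda\beta v$ and $h=\ln\hat\delta$. The same algebra as in \eqref{E:D.11} reduces the task to the matrix inequality
\begin{equation*}
\lambda^2 v^2\,\mathds 1_4 \;-\; \lambda\beta\sum_{j=1}^3\alpha_j D_j v \;-\; 2\lambda\beta\,\frac{v}{\hat\delta}\sum_{j=1}^3\alpha_j D_j\hat\delta \;-\; \frac{|\nabla\hat\delta|^2}{\hat\delta^2}\,\mathds 1_4 \;\geq\; c\,\mathds 1_4,
\end{equation*}
to be verified for $\hat\delta(x)$ sufficiently small. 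The \emph{essential new feature} compared with the case $\alpha>1$ is that the leading positive contribution $\lambda^2 v^2$ and the negative Hardy-type remainder $|\nabla\hat\delta|^2/\hat\delta^2$ are now \emph{both} of order $\delta^{-2}$, while the two cross terms are of order $\lambda\,\delta^{-2}$ multiplied by $|\ell(x)|$.

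The next step is a uniform bound on each cross term as a Hermitian matrix. Since $i\beta\alpha_j$ is Hermitian and squares to $\mathds 1_4$ by the anticommutation relations \eqref{E:D.7}, a short computation shows that the first cross term has operator norm $|\nabla v|$ and the second has operator norm $2|v||\nabla\hat\delta|/\hat\delta$. Combining the hypotheses \eqref{E:D.12} on $v$ and $\nabla v$ with the estimates of Theorem~\ref{T:delta}, and invoking $|\ell|\geq 1$ (so $\ell^2\geq|\ell|$), the left-hand side dominates, in the sense of Hermitian matrices, a scalar expression of the form
\begin{equation*}
\frac{\bigl(\lambda^2-C_1\lambda-C_2\bigr)\,|\ell(x)|}{\delta(x)^2}\,\mathds 1_4,
\end{equation*}
where $C_1,C_2>0$ depend only on $C_v$ and on the geometric constants $A$, $L$ from Theorem~\ref{T:delta}. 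Taking $\lambda_v$ to be the larger root of $\lambda^2-C_1\lambda-C_2=0$ renders this coefficient strictly positive for every $\lambda\geq\lambda_v$; and since $\delta(x)^{-2}$ is bounded below by a positive constant on $\{\hat\delta<\delta_0\}$, the inequality \eqref{E:S.26} follows, with a suitable $c>0$. Theorem~\ref{T:S} then yields essential self-adjointness of $\ID$.

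The main (mild) obstacle is that the two cross terms must be controlled as \emph{Hermitian matrices} rather than through scalar absolute values; this is resolved by the observation above that each such term is, up to a scalar factor and a sum over $j$, a linear combination of the Hermitian operators $i\beta\alpha_j$ whose squares are $\mathds 1_4$, so by the Clifford relations the operator norm collapses to $|\nabla v|$ and $2|v||\nabla\hat\delta|/\hat\delta$ respectively, and the matrix inequality reduces to the clean scalar one displayed above.
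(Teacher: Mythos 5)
Your proposal is correct and follows essentially the same route as the paper: apply Theorem~\ref{T:S} with $m=0$ and $h=\ln\hat\delta$, observe that all four terms in \eqref{E:D.11} are of the same order $\delta^{-2}$, and use $|\ell|\geq1$ to reduce matters to a quadratic inequality in $\lambda$ that holds uniformly once $\lambda$ is large enough (your explicit Clifford-algebra norm bounds just make the paper's ``estimate the size of each term'' step precise). One tiny slip: if $\lambda_v$ is exactly the larger root of $\lambda^2-C_1\lambda-C_2=0$, the coefficient vanishes at $\lambda=\lambda_v$, so choose $\lambda_v$ strictly above that root to secure the constant $c>0$ required in \eqref{E:S.26}.
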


\begin{proof}
Again, we give the proof only for $d=3$.
Fix $v\in \cV^1$ and let $\tilde v(x)=\lambda v(x)$. As in the previous proof, we use \eqref{E:D.12} to estimate the size, 
for $\delta(x)$ sufficiently small, of all the terms on the lhs of \eqref{E:D.11}. Since all four terms are of the same order,
namely $1/\delta(x)^2$, condition \eqref{E:D.11} is satisfied if, for certain, fixed constants $C_1, C_2, C_3>0$,
the inequality
\begin{equation}\label{E:D.14}
C_1 \lambda^2\ell(x)^2-C_2\lambda\big|\ell(x)\big|-C_3>0
\end{equation}
holds for all $x\in\Omega$ with $\delta(x)$ small enough. Since in this region of $\Omega$ we know 
$\big|\ell(x)\big|\geq 1$, we can conclude that \eqref{E:D.14} holds true
for $\lambda$ large enough, uniformly in $\big|\ell(x)\big|\geq1$.
\end{proof}

The result of Proposition~\ref{P:DS1} can be refined further if we make
more specific assumptions about the domain $\Omega$ and the function $v(x)$.
First, assume that the boundary of $\Omega$  is a 
codimension 1, $C^2$-smooth manifold in $\IR^d$. Then we know, see
e.g. \cite{GT} or \cite[Lemma~6.2]{Br}, that, for $\delta(x)$ sufficiently small,
$\delta$ is $C^1$-smooth and $\big|\nabla\delta(x)\big|=1$. Furthermore,
by a weak Hardy inequality which holds for $\bsPhi\in C^1_0(\Omega; \C^k)$, we
know that (see, e.g., \cite{Dav})
\begin{equation}\label{E:D.15}
\big\|\bbD_{0}\bsPhi\big\|^2=\big\|\nabla\bsPhi\big\|^2\geq 
\dlab\bsPhi,\big(\tfrac{1}{4\delta^2}-h_0(\Omega)\big)\bsPhi\drab\,,
\end{equation}
with $h_0(\Omega)<\infty$. If, in addition, $\Omega$ is convex, then
\begin{equation}\label{E:D.16}
h_0(\Omega)<0\,.
\end{equation}

Let $\mu\in[0,\infty)$. We now consider a subfamily $\cV^1_\mu\subset \cV^1$, consisting
of all functions $v=\frac{\ell}{\delta}\in \cV^1$ such that
\begin{equation}\label{E:D.17}
\limsup_{\delta(x)\to0} \frac{\big|\nabla\ell(x)\big|\delta(x)}{\big|\ell(x)\big|}=\mu\,.
\end{equation}
For clarity and readability, we state and prove the next theorem for the case $d=3$.
The corresponding statements for the cases $d=1$ and $d=2$ are similar, and are left to the interested 
reader.

\begin{theorem}\label{T:D1S}
Assume that $\Omega$ 
has a $C^2$-smooth,
codimension 1 boundary, $\partial\Omega$, in $\IR^3$.
\begin{itemize}
\item[i.] Let $\mu\in[0,\infty)$ and $v\in\cV^1_\mu$. Then for all $\lambda>\frac{1+\mu}{2}$, the Dirac operator
\begin{equation*}
\bbD=\sum_{j=1}^3 \alpha_j D_j+\lambda\beta v\,,\quad\cD(\bbD)=C^1_0\big(\Omega;\IC^4\big)
\end{equation*}
is essentially self-adjoint and the domain of its self-adjoint extension, $\cD(\overline{\bbD})$,
is independent of $\lambda$.
\item[ii.] Assume, in addition, that $\Omega$ is convex, and that, for sufficiently small $\delta(x)$,
$v(x)=\frac{1}{\delta(x)}$. Then the Dirac operator $\bbD$ defined above is essentially self-adjoint
for all $\lambda\geq\frac12$. 
\end{itemize}
\end{theorem}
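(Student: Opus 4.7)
The plan is to apply the general theorems of Sections~\ref{S:FOSP}--\ref{S:Pert} to the scalar potential $\bbV_{Ls}=\lambda\beta v$, exploiting the anticommutation relations $\{\alpha_j,\beta\}=0$ and $\{\alpha_j,\alpha_k\}=2\delta_{jk}\mathds{1}_4$ together with the sharp Hardy inequality \eqref{E:D.15}--\eqref{E:D.16} for the smooth/convex boundary.

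For part (i), I start by applying Theorem~\ref{T:SH} with weight $h=\ln\hat\delta$ (forced by $m=0$, since the Dirac velocity matrix is a constant multiple of the identity). Since $\partial\Omega$ is $C^2$-smooth and codimension $1$, near $\partial\Omega$ I can use $\delta$ in place of $\hat\delta$, with $|\nabla\delta|=1$ and $|\Delta\delta|$ bounded. The two middle terms of the LHS of \eqref{E:S.40} combine (using $\{\alpha_j,\beta\}=0$) into the Hermitian matrix
\[
W \;=\; i\lambda\beta\sum_{j}\alpha_j\bigl(\partial_j v+2v\,\partial_j h\bigr) \;=\; \frac{i\lambda}{\delta^2}\,\beta\sum_{j}\alpha_j\,\partial_j(\delta\ell),
\]
with eigenvalues $\pm\lambda|\nabla(\delta\ell)|/\delta^2\leq\lambda(\mu+1+\varepsilon)|\ell|/\delta^2$ (using $v\in\cV^1_\mu$), while the scalar contribution $\bbV_{Ls}^2-\sigma(\cdot,\nabla h)^2$ is $(\lambda^2\ell^2-1)\mathds{1}_4/\delta^2$. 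The Hardy function $H_h$ is obtained from the identity $(\bbD_0+i\sigma(\cdot,\nabla h))\bsPhi=e^h\bbD_0(e^{-h}\bsPhi)$, together with $\bbD_0^2=-\Delta\mathds{1}_4$, integration by parts, and \eqref{E:D.15}, yielding $H_h(x)\geq 1/(4\delta(x)^2)$ up to lower-order terms $O(1/\delta)+O(1)$ coming from the curvature $\Delta\delta/\delta$.

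Plugging these ingredients into \eqref{E:S.40} and diagonalizing reduces the matrix inequality to a scalar condition on $|\ell|\geq 1$. The main obstacle is that a direct analysis yields the sharp threshold $\lambda>(1+\mu)/2$ only when $\mu$ is small; for larger $\mu$ it falls short. To close the gap I would bootstrap via Theorem~\ref{T:P} part (ii): first Proposition~\ref{P:DS1} provides essential self-adjointness for some large $\lambda_v$, and then Theorem~\ref{T:P} part (ii), applied with $\bbV_s=\lambda_0\beta v$ and $\bbW=(\lambda-\lambda_0)\beta v$, permits small-step extension of essential self-adjointness downward in $\lambda_0$ (the admissible step being controlled by the matrix inequality $C(H_0\mathds{1}+\bbV_s^2-\tfrac{i}{2}\sum[\alpha_j,\partial_j\bbV_s])\geq\bbW^2$). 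Chaining such steps reaches all $\lambda>(1+\mu)/2$, and simultaneously Theorem~\ref{T:P} part (ii) identifies $\cD(\overline{\bbD_0+\lambda\beta v})=\cD(\overline{\bbD_0+\lambda_0\beta v})$, delivering the $\lambda$-independence of the domain.

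For part (ii), the case $\lambda>1/2$ is immediate from part (i) with $\mu=0$ (since $\ell\equiv 1$ gives $v=1/\delta\in\cV^1_0$). For the borderline $\lambda=1/2$, convexity of $\Omega$ and the sharp Hardy bound $h_0(\Omega)\leq 0$ allow me to apply Theorem~\ref{T:P} part (i) (W\"ust-type) with $\bbV_s=\lambda_0\beta v$ (any $\lambda_0>1/2$, essentially self-adjoint by part (i)) and $\bbW=(\tfrac12-\lambda_0)\beta v$. Using $\bbV_s^2=\lambda_0^2/\delta^2$, $\bbW^2=(\lambda_0-\tfrac12)^2/\delta^2$, and the worst-case eigenvalue $\lambda_0/\delta^2$ of $-\tfrac{i}{2}\sum_j[\alpha_j,\partial_j\bbV_s]$, the required inequality \eqref{E:P.1} simplifies to exactly $H_0\geq 1/(4\delta^2)$, which is precisely the sharp Hardy bound \eqref{E:D.15}--\eqref{E:D.16} for convex $\Omega$. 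W\"ust's theorem then yields the essential self-adjointness of $\bbD_0+\tfrac12\beta v$.
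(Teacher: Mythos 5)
Your part (ii) is correct and is essentially the paper's own argument: with $\bbV_s=\lambda_0\beta v$ and $\bbW=(\tfrac12-\lambda_0)\beta v$, the identity $\lambda_0^2-\lambda_0-(\lambda_0-\tfrac12)^2=-\tfrac14$ reduces \eqref{E:P.1} to the sharp Hardy bound for convex domains, and W\"ust's theorem handles the borderline $\lambda=\tfrac12$. In part (i) you have also picked the right tool — Theorem~\ref{T:P}.ii with scalar part $\lambda_0\beta v$ and perturbation $(\lambda-\lambda_0)\beta v$ is exactly what the paper uses, with $\lambda_0=\lambda+a$ — but the mechanism you propose for reaching the sharp threshold, ``chaining small steps downward from $\lambda_v$,'' fails for $\mu>0$.

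To see why, quantify the admissible step. Diagonalizing the commutator term (worst eigenvalue $-\lambda_0(1+\mu+\eta)\,|\ell|/\delta^2$) and inserting $H_0\geq\tfrac{1}{4\delta^2}-h_0(\Omega)$, condition \eqref{E:P.2} becomes a quadratic inequality in $y=|\ell(x)|$ that must hold for \emph{all} $y\geq1$; already its value at $y=1$ forces, up to the removable corrections $C\uparrow1$ and $\eta\downarrow0$, the bound $(\lambda_0-\lambda)^2\leq \big(\lambda_0-\tfrac{1+\mu}{2}\big)^2-\tfrac{\mu(\mu+2)}{4}$. Writing $s=\lambda_0-\tfrac{1+\mu}{2}$, the lowest $\lambda$ reachable from a given $\lambda_0$ — by one step or by any chain, since the floor map $s\mapsto s-\sqrt{s^2-\mu(\mu+2)/4}$ is decreasing in $s$ — is $\tfrac{1+\mu}{2}+s-\sqrt{s^2-\mu(\mu+2)/4}$, which is strictly above $\tfrac{1+\mu}{2}$ when $\mu>0$. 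A chain launched from a fixed starting coupling therefore stalls before the threshold. What works, and is the content of the paper's Claims~1--3, is a \emph{single} application in which $\lambda_0=\lambda+a$ is taken large depending on the target $\lambda$: the floor tends to $\tfrac{1+\mu}{2}$ as $s\to\infty$, and Proposition~\ref{P:DS1} supplies essential self-adjointness of the unperturbed operator for every sufficiently large $\lambda_0$. You must also verify the quadratic inequality uniformly in $y\geq1$, not only at $y=1$ (this requires placing its vertex at $y\leq1$, which is where the paper's explicit choices of $\eta$, $a$, $C$ enter), and the asserted $\lambda$-independence of $\cD(\overline{\bbD})$ needs one extra remark, since Theorem~\ref{T:P}.ii only identifies it with $\cD(\overline{\bbD_0+\lambda_0\beta v})$ for a $\lambda_0$ depending on $\lambda$; comparing two admissible $\lambda$'s through a common large $\lambda_0$ repairs this.
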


\begin{proof} i.  Given $\mu\geq0$, choose $v\in \cV^1_\mu$ and $\lambda>\frac{1+\mu}{2}$. Recall that, by definition,
this means that there exist constants $\delta_v,C_v\in(0,\infty)$ and a function $\ell$ such that
\begin{equation*}
v(x)=\frac{\ell(x)}{\delta(x)}\,,\quad \big|\nabla v(x)\big|\leq \frac{C_v\big|\ell(x)\big|}{\delta(x)^{2}}\,,\quad\big|\ell(x)\big|\geq 1
\quad\text{for }\delta(x)<\delta_v\,,
\end{equation*}
and
\begin{equation*}
\limsup_{\delta(x)\to0} \frac{\big|\nabla\ell(x)\big|\delta(x)}{\big|\ell(x)\big|}=\mu\,.
\end{equation*}
Since $\cV^1_\mu\subset\cV^1$, Proposition~\ref{P:DS1} applies to $v$, and hence there exists $\lambda_v>0$ such that
the operator $\bbD_{0}+\tilde\lambda \beta v$ is essentially self adjoint on its
domain,\\ $\cD om(\bbD_{0}+\tilde\lambda \beta v)=C_0^1\big(\Omega;\IC^4\big)$,
for any $\tilde\lambda\geq\lambda_v$.

We will prove the statement in i. by showing that
there exist constants $\delta_0>0$, $a\geq \lambda_v$ and $C\in \big[\frac12,1\big)$ such that
the realization of condition \eqref{E:P.2} in this context holds, namely
\begin{equation*}
\chi_{_{\delta_0}}\left(C\bigg(H_0\mathds 1+\bbV_s^2
-\frac{i}{2}\sum_{j=1}^3 \left[\bsAj,\frac{\partial\bbV_s}{\partial x_j}\right]\bigg)-\bbW^2\right)\geq0
\end{equation*}
with
\begin{equation}\label{E:Ir.1}
\bbV_s=(\lambda+a)\beta v\quad\text{and}\quad \bbW=-a\beta v\,.
\end{equation}
If this is the case, and keeping in mind that $\lambda+a>\lambda_v$ means that
Proposition~\ref{P:DS1} applies to $\bbV_s$ defined here, then Theorem~\ref{T:P}.ii. yields the desired conclusion
for the operator
\begin{equation}\label{E:D.21}
\bbD_{0}+\bbV_s+\bbW=\sum_{j=1}^3 \alpha_j D_j+\tilde\lambda\beta v=\ID\,.
\end{equation}
Specializing \eqref{E:P.2} to our Dirac operators, we recall that $\bsAj=\alpha_j$ for each $1\leq j\leq3$, 
and so the anticommutation relations \eqref{E:D.7} together with the structure of $v=\ell/\delta$ imply that
\begin{equation}\label{E:Ir.2}
\left[\bsAj,\frac{\partial\bbV_s}{\partial x_j}\right]=
-2\frac{\lambda+a}{\delta^2}\beta\alpha_j\,\left(\frac{\partial\ell}{\partial x_j}\,\delta-\ell\,\frac{\partial\delta}{\partial x_j}\right)\,.
\end{equation}
Plugging in the form of $H_0$ from \eqref{E:D.15} and the forms of the potentials in \eqref{E:Ir.1} we see
that we must prove the following:

\textit{Claim 1. Given $\mu\geq 0$ and $\lambda>\frac{1+\mu}{2}$, there exist constants $\delta_0\in(0,\delta_v)$, $a\geq \lambda_v$, and $C\in \big[\frac12,1\big)$ such that
\begin{equation}\label{E:Ir.3}
C\left(\left[\frac14-h_0\big(\Omega\big)\delta^2
+(\lambda+a)^2\ell^2\right]\mathds1_4+(\lambda+a)\ell B\right)-a^2\ell^2\mathds1_4\geq 0
\end{equation}
for all $x\in\Omega$ with $\delta(x)<\delta_0$, where
\begin{equation}\label{E:Ir.4}
B(x)=i\sum_{j=1}^3 \beta\alpha_j\left(-\frac{\partial\delta}{\partial x_j}(x)+
\frac{\frac{\partial\ell}{\partial x_j}(x)\,\delta(x)}{\ell(x)}\right)\,.
\end{equation}}

Using again the anticommutation relations, we immediately see that
the matrix $B(x)$ is Hermitian and
\begin{equation*}
B(x)^2 =\sum_{j=1}^3  \left(-\frac{\partial\delta}{\partial x_j}(x)+
\frac{\frac{\partial\ell}{\partial x_j}(x)\,\delta(x)}{\ell(x)}\right)^2 \mathds1_4\leq \left( \big|\nabla\delta(x)\big|
+\frac{\big|\nabla\ell(x)\big|\delta(x)}{\big|\ell(x)\big|} \right)^2 \mathds1_4\,.
\end{equation*}
Recalling that $|\nabla\delta(x)|=1$, a straightforward diagonalization argument then shows that
\begin{equation}\label{E:Ir.5}
\ell(x)B(x)\geq -\big|\ell(x)\big|\left( 1
+\frac{\big|\nabla\ell(x)\big|\delta(x)}{\big|\ell(x)\big|} \right) \mathds1_4\,.
\end{equation}
Now let $\eta \in (0, 2\lambda_v)$. By hypothesis, there exists $\delta_{\eta}\in(0,\delta_v)$ such that
\begin{equation}\label{E:Ir.6}
h_0(\Omega)\delta_{\eta}^2<\eta \quad\text{and}\quad 
\frac{\big|\nabla\ell(x)\big|\delta(x)}{\big|\ell(x)\big|}<\mu+\eta
\,\,\,\text{ for } \delta(x)<\delta_{\eta}\,.
\end{equation}
Combining all of these inequalities, we conclude that, for $\delta(x)<\delta_0\leq \delta_{\eta}$, $a\geq\lambda_v$,
and $C\in\big[\frac12,1\big)$, the following lower bound holds:
\begin{equation*}
\begin{aligned}
\text{lhs of \eqref{E:Ir.3}}\geq
C\left(\frac14-\eta+(\lambda+a)^2\ell(x)^2-(\lambda+a)(1+\mu+\eta)\big|\ell(x)\big|\right)\mathds1_4
-a^2\ell(x)^2\mathds1_4\,.
\end{aligned}
\end{equation*}
This shows that \text{Claim 1} follows if we prove the following:

\textit{Claim 2. Given $\mu\geq0$ and $\lambda>\frac{1+\mu}{2}$, there exist constants 
$\\eta\in(0,2\lambda_v)$,
$a\geq\lambda_v$, and $C\in \big[\frac12,1\big)$ such that
\begin{equation}\label{E:Ir.7}
\big[C(\lambda+a)^2-a^2\big]y^2-C(\lambda+a)(1+\mu+\eta)y+C\big(\tfrac14-\eta\big)\geq 0
\quad\text{for all }y\geq1\,.
\end{equation}}
Let $F\,:\, \bbR\to\bbR$ be the function on the left-hand side of \eqref{E:Ir.7}
\begin{equation*}
F(y)=\big[C(\lambda+a)^2-a^2\big]y^2-C(\lambda+a)(1+\mu+\eta)y+C\big(\tfrac14-\eta\big)\,.
\end{equation*}
Since $F$ is a quadratic polynomial, \eqref{E:Ir.7} is implied by the following three conditions: 
The coefficient of the quadratic term is strictly positive, i.e.
\begin{equation}\label{E:Ir.8}
C(\lambda+a)^2-a^2>0\,,
\end{equation}
the value of $y$ where $F$ attains its minimum is less than or equal to 1, i.e.
\begin{equation}\label{E:Ir.9}
2\big[C(\lambda+a)^2-a^2\big] - C(\lambda+a)(1+\mu+\eta)\geq 0\,,
\end{equation}
and $F(1) \geq 0$, i.e.
\begin{equation}\label{E:Ir.10}
\big[C(\lambda+a)^2-a^2\big]-C(\lambda+a)(1+\mu+\eta)+C\big(\tfrac14-\eta\big) \geq 0\,.
\end{equation}

The first remark is that, since $C(\lambda+a)(1+\mu+\eta) >0$, \eqref{E:Ir.8} is implied by \eqref{E:Ir.9}.
Consider now \eqref{E:Ir.9}. We regard the left-hand side as a 
quadratic polynomial in $\lambda+a$, with positive dominant coefficient $2C\geq 1$ and
positive discriminant $C^2(1+\mu+\varepsilon_1)^2+16Ca^2>0$. Thus the quadratic expression
is non-negative whenever the variable is above the right-most root. That is, \eqref{E:Ir.9} holds true if
\begin{equation}\label{E:Ir.11}
\lambda+a\geq \frac{1+\mu+\eta+\sqrt{(1+\mu+\eta)^2+\frac{16a^2}{C}}}{4}\,.
\end{equation}
Since
\begin{equation*}
\sqrt{(1+\mu+\eta)^2+\frac{16a^2}{C}}\leq 1+\mu+\eta+\frac{4a}{\sqrt C}\,,
\end{equation*}
we conclude that  \eqref{E:Ir.11} (hence  \eqref{E:Ir.9}) is implied by
\begin{equation}\label{E:Ir.12}
\lambda\geq \frac{1+\mu}{2}+\frac{\eta}{2}+\frac{1-\sqrt C}{\sqrt C}\,a\,.
\end{equation}
Finally, we consider \eqref{E:Ir.10}. By completing a square, \eqref{E:Ir.10} rewrites as 
\begin{equation}\label{E:Ir.13}
C\left(\lambda+a-\frac{1+\mu+\eta}{2}\right)^2-a^2\geq C\, \frac{\big(1+\mu+\eta\big)^2}{4}
-C\left(\frac14-\eta\right)\,.
\end{equation}
Since by assumption
$
a \geq \lambda _v  > \eta/2 ,
$
one has
\begin{equation*}
\sqrt{C}\,\left(\lambda+a-\frac{1+\mu+\eta}{2}\right)+a\geq a\,,
\end{equation*} 
which together with \eqref{E:Ir.13}, leads to the conclusion that \eqref{E:Ir.10} 
is implied by
\begin{equation*}
\lambda-\frac{1+\mu}{2}\geq \frac{\eta}{2}+
\frac{\sqrt{C}}{a}\,\left[\frac{\big(1+\mu+\eta\big)^2}{4}-\frac14+\eta\right]
+\frac{1-\sqrt{C}}{\sqrt{C}}\,a\,.
\end{equation*}

Since in turn this inequality implies that \eqref{E:Ir.12} holds, we conclude that \textit{Claim 2} is implied by:\\

\textit{Claim 3. Given $\mu\geq0$ and $\lambda>\frac{1+\mu}{2}$, there exist constants $  \eta  \in (0, 2\lambda_v)$,
$a\geq\lambda_v$, and $C\in \big[\frac12,1\big)$ such that
\begin{equation}\label{E:Ir.14}
\lambda-\frac{1+\mu}{2}\geq\frac{\eta}{2}+
\frac{\sqrt{C}}{a}\,\left[\frac{\big(1+\mu+\eta\big)^2}{4}-\frac14+\eta\right]
+\frac{1-\sqrt{C}}{\sqrt{C}}\,a\,.
\end{equation}}
But the proof of \textit{Claim 3} is straightforward. Namely, since $\lambda>\frac{1+\mu}{2}$,
there exists 
\begin{equation*}
0<\varepsilon<\lambda-\frac{1+\mu}{2}\,.
\end{equation*}
Then set
\begin{equation}\label{E:Ir.15}
\eta=\min\left\{\frac{2\varepsilon}{3},2\lambda_v \right\}>0\,,
\end{equation}
and with these
\begin{equation}\label{E:Ir.16}
a=\max\left\{\lambda_v,
\frac{3}{\varepsilon}\,\left[\frac{\big(1+\mu+\eta \big)^2}{4}-\frac14+\eta\right]\right\}\,.
\end{equation}
Finally, having chosen $a$, set
\begin{equation}\label{E:Ir.17}
C=\max\left\{\frac12,\left(\frac{a}{a+\frac{\varepsilon}{3}}\right)^2\right\}\in\left[\frac12,1\right)\,.
\end{equation}
Equations \eqref{E:Ir.15}, \eqref{E:Ir.16}, and \eqref{E:Ir.17} are concrete choices which ensure that
each term of the right-hand side of \eqref{E:Ir.14} is at most $\varepsilon/3$, which by the
choice of $\varepsilon$ guarantees that \textit{Claim 3}, and thus also
\textit{Claim 2} and \textit{Claim 1}, all hold.

ii. Now assume that $\Omega$ is convex, and there exists $\delta_v>0$ such that $v(x)=\frac{1}{\delta(x)}$
for all $\delta(x)<\delta_v$. In this case, we wish to apply Theorem~\ref{T:P} i. with the same choices
of scalar and perturbation potentials made above in \eqref{E:Ir.1},. That is, we will check below that
condition \eqref{E:P.1} holds with these assumptions on $\Omega$ and $v$. 

As \eqref{E:P.1} is the same as \eqref{E:P.2} with $C=1$, it is not surprising that the calculations follow
as in the proof of part i., with a few (simplifying) changes. Namely, since $\Omega$ is convex,
we know from \eqref{E:D.16} that here we can choose $h_0(\Omega)=0$. In addition, the form of $v$
means that the function $\ell(x)\equiv 1$, and hence $\nabla\ell(x)\equiv 0$ on $\delta(x)<\delta_v$.
Among other things, this implies that $v\in\mathcal V^1_\mu$ with $\mu=0$, and there is no need to introduce the
small parameter $\eta$. Indeed, the conclusion follows if we can show that,
given $\lambda\geq\frac12$, there exists $a\geq\lambda_v$ such that 
\begin{equation}\label{E:Ir.18}
\left[\frac14+(\lambda+a)^2\right]\mathds1_4+(\lambda+a)B(x)-a^2\mathds1_4\geq0
\end{equation}
with $B(x)$ defined as in \eqref{E:Ir.4}, but here much simpler since $\nabla\ell\equiv0$.
Thus the bound \eqref{E:Ir.5} holds, and in this case reads
\begin{equation}\label{E:Ir.19}
B(x)\geq -\mathds1_4\quad\text{for all }\delta(x)<\delta_v\,.
\end{equation}

So \eqref{E:Ir.18} is implied by 
\begin{equation}\label{E:Ir.20}
\frac14+(\lambda+a)^2-(\lambda+a)-a^2\geq 0\,.
\end{equation}
But
\begin{equation*}
\frac14+(\lambda+a)^2-(\lambda+a)-a^2=\left(\lambda+a-\frac12\right)^2-a^2=
\left(\lambda-\frac12\right)\left(\lambda+2a-\frac12\right)\,,
\end{equation*}
which, given that $a\geq\lambda_v\geq0$, is implied by the condition $\lambda\geq\frac12$, as claimed.
\end{proof}

\section{Confinement of Dirac particles in $d=1$: Weyl limit point/limit circle approach}\label{S:Ddim1}

In the one-dimensional case, one can apply the powerful Weyl limit point/limit circle theory extended to Dirac operators
to obtain, at least in some particular cases, very precise results. We give here two such examples,
following Weidmann \cite{Wie1,Wie2,Wie3}, which will also be relevant in the next subsection. 

Consider the most general one dimensional Dirac operator (see \eqref{E:D.1.2.3})
\begin{equation}\label{E:OD.1}
\bbD=\sigma_2 D+\bbV(x)
\end{equation}
with $\bbV\in C^0\big((a,b);\IC^2\big)$, and $(a,b)\subset\IR$, $-\infty<a<b<\infty$. As can be directly checked, every Hermitian matrix
can be written as a real linear combination of $\sigma_0=\mathds1_2$ and $\sigma_j$, $1\leq j\leq3$,
and so in particular
\begin{equation}\label{E:OD.2}
\bbV(x)=\sum_{j=0}^3 \sigma_j v_j(x)\,,\quad\text{with } v_j(x)\in\IR\,,\,\,\, 0\leq j\leq 3\,.
\end{equation}
Without loss of generality, we can assume that $v_2\equiv 0$. Indeed, for a given function $v_2$, define
\begin{equation*}
\varphi(x)=\int_a^x v_2(\tilde x)\,d\tilde x\,,
\end{equation*}
and the unitary operator $U$ of multiplication with $e^{-i\varphi}$. Then
\begin{equation}\label{E:OD.4}
U^*\big(\sigma_2 D+\sum_{j=0}^3 \sigma_j v_j\big) U=\sigma_2 D +\sigma_1 v_1+\sigma_3 v_3+v_0\mathds 1_2
\end{equation}
where we note that the resulting potential is real-valued:
\begin{equation*}
\bbV=\sigma_1 v_1+\sigma_3 v_3+v_0\mathds 1_2=\bar\bbV\,.
\end{equation*}
By a slight abuse of notation, we will denote below by $\bbD$ the operator on the right-hand side of \eqref{E:OD.4}
\begin{equation}\label{E:OD.5}
\bbD=\sigma_2 D +\sigma_1 v_1+\sigma_3 v_3+v_0\mathds 1_2\,.
\end{equation}
The following lemma shows that the essential self-adjointness of $\bbD$ depends
on the matrix structure of $\bbV$ and not only on its size.

\begin{lemma}\label{L:NES}
If $v_1=v_3=0$, then $\bbD=\sigma_2 D+v_0\mathds 1_2$ is not essentially self-adjoint.
\end{lemma}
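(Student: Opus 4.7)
The key observation is that when $v_1 = v_3 = 0$ the potential $v_0\mathds 1_2$ commutes with $\sigma_2$, so $\bbD$ decouples into two scalar first-order operators after diagonalizing $\sigma_2$. Concretely, I would fix a constant unitary matrix $U \in \IC^{2\times 2}$ with $U\sigma_2 U^* = \sigma_3$. Since $U$ has constant entries it commutes with $D$ and with $v_0\mathds 1_2$, and it leaves $C_0^1\big((a,b);\IC^2\big)$ invariant. Therefore $\bbD$ is unitarily equivalent to
$$
\sigma_3 D + v_0\mathds 1_2 \;=\; (D+v_0)\oplus(-D+v_0)
$$
acting on $L^2((a,b))\oplus L^2((a,b))$ with initial domain $C_0^1((a,b))\oplus C_0^1((a,b))$. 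It therefore suffices to show that the scalar operator $T := D + v_0$, initially defined on $C_0^1((a,b))$, has a nonzero deficiency subspace.

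For this I would exhibit a nonzero $\psi \in L^2((a,b))$ satisfying $T^*\psi = i\psi$. In the sense of distributions this equation reads $-i\psi' + v_0\psi = i\psi$, equivalently
$$
\psi'(x) \;=\; -\bigl(1 + i v_0(x)\bigr)\psi(x).
$$
Because $v_0 \in C^0((a,b))$, this linear scalar ODE has, for any fixed $c \in (a,b)$ and any initial value $\psi(c) \neq 0$, a unique global solution on $(a,b)$, namely
$$
\psi(x) \;=\; \psi(c)\, e^{-(x-c)}\exp\!\left(-i\int_c^x v_0(\tilde x)\,d\tilde x\right).
$$
Since $v_0$ is real, the phase factor is unimodular, so $|\psi(x)| = |\psi(c)|\,e^{-(x-c)}$ on all of $(a,b)$. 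The boundedness of $(a,b)$ then yields $\psi \in L^2((a,b))$. A short integration-by-parts check against test functions $\phi \in C_0^1((a,b))$ (whose supports lie in compact subintervals on which $v_0$ is bounded and $\psi$ is classical) confirms $\psi \in \cD om(T^*)$ with $T^*\psi = i\psi$. Thus $\dim\ker(T^* - i) \geq 1$, so $T$ is not essentially self-adjoint, and by the direct-sum decomposition neither is $\bbD$.

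The only genuinely delicate point, and the one I expect to be the main obstacle in presentation, is that $v_0$ is merely continuous on $(a,b)$ and may be non-integrable at the endpoints, which a priori threatens the $L^2$-property of $\psi$. What rescues the construction is precisely that in the decoupled setting $v_0$ enters the ODE only through the imaginary multiplier $i v_0$: its entire effect on $|\psi|$ is a pure phase, and the $L^2$ decay is supplied solely by the constant $1$ in $-(1 + i v_0)$ together with the finiteness of $b - a$. This mechanism breaks down the moment one adds a component $\sigma_1 v_1 + \sigma_3 v_3$ anticommuting with $\sigma_2$, which is exactly the structural distinction between scalar and non-scalar potentials exploited throughout the paper.
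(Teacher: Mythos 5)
Your proof is correct and takes essentially the same route as the paper: diagonalize $\sigma_2$ by a constant unitary and reduce to the scalar first-order operators $\pm D + v_0$ on $C_0^1\big((a,b)\big)$, where the real potential $v_0$ contributes only a phase and so cannot restore essential self-adjointness. The paper finishes by gauging away $v_0$ and citing the standard non--essentially-self-adjoint example $\pm i\tfrac{d}{dx}$ on a bounded interval, whereas you make that same mechanism explicit by writing down the deficiency element with its unimodular factor $\exp\big(-i\int_c^x v_0\big)$; this is the identical argument, merely self-contained.
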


\begin{proof}
Let $A$ be the unitary matrix which diagonalizes $\sigma_2$,
\begin{equation}\label{E:OD.6}
A=
\frac{1}{\sqrt2}\,\left(\begin{matrix}
1 & -i \\ -i &1
\end{matrix}\right) \quad\text{with } A\sigma_2 A^*=\sigma_3\,. 
\end{equation}
Then
\begin{equation}\label{E:OD.7}
A\bbD A^*=
\left(\begin{matrix}
-i\frac{d}{dx}+v_0 &0\\ 0 & i\frac{d}{dx}+v_0
\end{matrix}
\right)\,,
\end{equation}
i.e. $A\bbD A^*$ is a direct sum of two symmetric scalar operators which, by the same gauge transformation
above, are unitarily equivalent with $\pm i\frac{d}{dx}$. But while
$\pm i\frac{d}{dx}$ have self-adjoint extensions, they are not essentially self-adjoint  -- 
see, e.g., the example in \cite{RS}, Sections VIII.2 and X.1.
\end{proof}

Recall $\{\sigma_1,\sigma_2\}=\{\sigma_2,\sigma_3\}=0$, and so
$\sigma_1 v_1+\sigma_3 v_3$ exhausts the class of scalar potentials 
as given by condition \eqref{E:S.16}. In what follows, we seek conditions
on $v_1$ and $v_3$ ensuring essential self-adjointness of $\bbD$,
and we will use the extension to 1 dimensional Dirac operators
of the Weyl limit point/limit circle criterion.
We remind the reader that the operator $\bbD$ on $(a,b)$ is 
said to be limit point at $a$ (or at $b$, respectively) iff the equation
$\bbD\bsPsi=0$ has a solution 
which does not belong to $L^2\big((a,a+\delta_0);\IC^2\big)$ (or $L^2\big((b-\delta_0,b);\IC^2\big)$
respectively) for some $\delta_0>0$. Otherwise, the operator is said to be limit circle at the respective
interval endpoint. 
Given this definition, the following holds, see e.g. \cite{Wie1,Wie2,Wie3}:
\begin{theorem}\label{T:W}
The operator $\bbD$ as given by \eqref{E:OD.5} is essentially
self-adjoint if and only if $\bbD$ is limit point at both $a$ and $b$.
\end{theorem}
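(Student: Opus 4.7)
The plan is to follow the classical Weyl limit point/limit circle argument, adapted to the Dirac setting as in \cite{Wie1,Wie2,Wie3}. The starting point is the standard reduction: $\bbD$ is symmetric on $C_0^1\big((a,b);\IC^2\big)$, its closure $\bbD_{\min}$ has adjoint $\bbD_{\max}$, and essential self-adjointness is equivalent to the vanishing of the deficiency indices $n_\pm=\dim\ker(\bbD_{\max}\mp i\mathds{1}_2)$. Since $\bbV\in C^0$ the elliptic regularity for the ODE system is trivial, and these kernels consist exactly of classical $C^1$ solutions $\bsPsi\in L^2\big((a,b);\IC^2\big)$ of the first order ODE $(\bbD\mp i)\bsPsi=0$. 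Because the ODE has a two-dimensional pointwise solution space, it is natural to track which solutions are $L^2$ near each endpoint.

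The main computational ingredient is Green's identity. Integration by parts for $\bbD=\sigma_2 D+\bbV$ with $\bbV$ Hermitian yields, for any $\bsPhi,\bsPsi\in C^1\big([c,d];\IC^2\big)$,
\begin{equation*}
\langle\bsPhi,\bbD\bsPsi\rangle_{L^2(c,d)}-\langle\bbD\bsPhi,\bsPsi\rangle_{L^2(c,d)}=-i\big[\bsPhi^*\sigma_2\bsPsi\big]_c^d.
\end{equation*}
Specializing to $\bsPhi=\bsPsi$ a solution of $(\bbD-i)\bsPsi=0$ gives the key monotonicity identity
\begin{equation*}
\big[\bsPsi^*\sigma_2\bsPsi\big](d)-\big[\bsPsi^*\sigma_2\bsPsi\big](c)=-2\int_c^d|\bsPsi(x)|^2\,dx,
\end{equation*}
so $[\bsPsi^*\sigma_2\bsPsi](x)$ is a real, decreasing function whose total variation on $(a,b)$ controls $\|\bsPsi\|_{L^2(a,b)}^2$. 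An analogous identity handles the $-i$ case.

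Next I would invoke the Weyl alternative at each endpoint: a standard variation of parameters argument shows that the dimension of the space of solutions of $(\bbD-\lambda)\bsPsi=0$ lying in $L^2$ near $a$ (resp.\ near $b$) is independent of $\lambda\in\IC$, and always equals one (limit point case, by the paper's definition) or two (limit circle case). Denote by $V_a$, $V_b\subset\IC^2$-valued solution space the $L^2$-near-$a$ and $L^2$-near-$b$ subspaces associated with $(\bbD-i)\bsPsi=0$, so that $n_+=\dim(V_a\cap V_b)$. If limit circle holds at, say, $a$, then $V_a$ is the full two-dimensional solution space, so $n_+\ge\dim V_b\ge 1$, and essential self-adjointness fails; the same argument (with the dual computation) forces $n_-\ge 1$. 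Hence limit point at both endpoints is necessary.

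For the converse, assume limit point at both endpoints, so $\dim V_a=\dim V_b=1$, and I need $V_a\cap V_b=\{0\}$. This is the hardest step: one must show that for any $L^2$ solution $\bsPsi$ of $(\bbD-i)\bsPsi=0$ the boundary form $[\bsPsi^*\sigma_2\bsPsi](x)$ tends to $0$ as $x\to a^+$ and as $x\to b^-$; combined with the monotonicity identity above this forces $\|\bsPsi\|=0$. The proof of vanishing uses the existence at a limit point endpoint of a reference solution $\bsPhi$ of $\bbD\bsPhi=0$ which is \emph{not} $L^2$ there, together with the constancy of the Wronskian-type form between $\bsPhi$ and an $L^2$ solution of $(\bbD-i)\bsPsi=0$ — a standard Weyl-circle argument as carried out in \cite{Wie2}. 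I expect this vanishing at limit point endpoints to be the main technical obstacle, since it is exactly the place where the abstract definition of limit point must be translated into boundary behavior of bilinear forms; once it is in hand, the conclusion $n_+=n_-=0$, and hence essential self-adjointness, is immediate.
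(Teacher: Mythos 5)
First, note that the paper does not prove Theorem~\ref{T:W} at all: it is quoted as a known result with a pointer to Weidmann \cite{Wie1,Wie2,Wie3}. So what you are really doing is reconstructing the cited classical argument, and your overall architecture is the right one: deficiency spaces identified with the $L^2$ solutions of $(\bbD\mp i)\bsPsi=0$, the Lagrange/Green identity $\langle\bsPhi,\bbD\bsPsi\rangle-\langle\bbD\bsPhi,\bsPsi\rangle=-i\big[\bsPhi^*\sigma_2\bsPsi\big]_c^d$ (your sign and the monotonicity of $\bsPsi^*\sigma_2\bsPsi$ for a solution at $\lambda=i$ check out), the Weyl alternative, and the split into necessity (limit circle at one endpoint produces a global $L^2$ solution) and sufficiency. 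The necessity half is essentially complete, modulo citing the two standard facts you gesture at.

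However, there are genuine gaps. (i) Your statement that ``the dimension of the space of solutions of $(\bbD-\lambda)\bsPsi=0$ lying in $L^2$ near $a$ is independent of $\lambda\in\IC$ and always equals one or two'' is false as written: what is $\lambda$-invariant is only the limit-circle property (\emph{all} solutions $L^2$ near the endpoint), and ``at least one $L^2$ solution'' is guaranteed only for non-real $\lambda$; at real $\lambda$ in the limit point case the count can be $0$. Since the paper's definition of limit point is phrased at $\lambda=0$, the transfer to $\lambda=\pm i$ must go exactly through this invariance theorem -- repairable, but your version is not the correct statement. (ii) More seriously, the heart of the sufficiency direction -- that at a limit point endpoint $\lim_{x\to b^-}\bsPsi^*\sigma_2\bsPsi(x)=0$ for a global $L^2$ solution of $(\bbD-i)\bsPsi=0$ -- is not proved, and the one concrete mechanism you propose would fail: the bracket $\bsPhi^*\sigma_2\bsPsi$ between a solution $\bsPhi$ of $\bbD\bsPhi=0$ and a solution $\bsPsi$ of $(\bbD-i)\bsPsi=0$ is \emph{not} constant. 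By your own Green identity its increment over $[c,d]$ equals $-\int_c^d\bsPhi^*\bsPsi\,dx$; constancy of the Lagrange bracket holds only for solutions at conjugate spectral parameters $\lambda$ and $\bar\lambda$, and since $\bsPhi\notin L^2$ near $b$ while $\bsPsi\in L^2$ there, this increment need not even stay bounded. Note also that on a bounded interval you cannot shortcut by arguing that a nonzero limit of $\bsPsi^*\sigma_2\bsPsi$ contradicts square integrability (it does not, unlike on a half-line), so this vanishing statement carries the full weight of the theorem. The correct route, as in \cite{Wie2}, is the deficiency-index dimension count on $\cD om(\bbD_{\max})/\cD om(\bbD_{\min})$ (equivalently, the decomposition method with an interior regular point), which shows that at a limit point endpoint the Lagrange bracket of \emph{any} two elements of the maximal domain tends to zero. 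As it stands, your text is an outline of the standard proof with its central step delegated back to the very reference the paper cites, and the one new idea offered for that step is incorrect.
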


From Theorem~\ref{T:W} we obtain:
\begin{proposition}\label{P:M}
Let $v_0=v_3=0$, i.e.
\begin{equation}\label{E:OD.8}
\bbD=\sigma_2 D+\sigma_1 v_1\quad\text{with }\mathcal D(\bbD)=C_0^1\big((a,b);\IC^2\big)\,,
\end{equation}
and let $\delta(x)=\min\{x-a,b-x\}$ denote, as usual, the distance to the boundary of the
(spatial) domain $(a,b)\subset\bbR$.
\begin{itemize}
\item[i.] $\bbD$ is essentially self-adjoint if and only if there exists $0<\delta_0<\frac{b-a}{2}$ 
such that
\begin{equation}\label{E:OD.9}
\int_{b-\delta_0}^b e^{2\left|\int_{b-\delta_0}^x v_1(y)\,dy\right|} dx=\infty
\end{equation}
and
\begin{equation}\label{E:OD.10}
\int_a^{a+\delta_0} e^{2\left|\int_x^{a+\delta_0} v_1(y)\,dy\right|} dx=\infty\,.
\end{equation}
\item[ii.] If there exists $0<\delta_0<\frac{b-a}{2}$ such that
\begin{equation}\label{E:OD.11}
\big|v_1(x)\big|\geq\frac{1}{2\delta(x)}\quad\text{for }\delta(x)\leq\delta_0
\end{equation}
then $\bbD$ is essentially self-adjoint.
\item[iii.] If there exists $0<\delta_0<\frac{b-a}{2}$ and $\lambda<\frac12$ such that
\begin{equation}\label{E:OD.12.0}
\big|v_1(x)\big|\leq \frac{\lambda}{\delta(x)}\quad\text{for } x\in(a,a+\delta_0)
\end{equation}
or
\begin{equation}\label{E:OD.12}
\big|v_1(x)\big|\leq \frac{\lambda}{\delta(x)}\quad\text{for } x\in(b-\delta_0,b)
\end{equation}
then $\bbD$ is not essentially self-adjoint.
\end{itemize}
\end{proposition}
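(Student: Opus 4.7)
The plan is to reduce the question to Weyl limit point/limit circle analysis via Theorem~\ref{T:W}, by explicitly solving the zero-energy equation $\bbD\bsPsi = 0$. Left-multiplying this equation by $\sigma_2$ and using $\sigma_2^2 = \mathds 1_2$ and $\sigma_2\sigma_1 = -i\sigma_3$, the system decouples into the two scalar ODEs
\begin{equation*}
\psi_1' + v_1\psi_1 = 0,\qquad \psi_2' - v_1\psi_2 = 0,
\end{equation*}
with linearly independent solutions $\bsPsi_-(x) = e^{-V(x)}\binom{1}{0}$ and $\bsPsi_+(x) = e^{V(x)}\binom{0}{1}$, where $V(x) = \int_c^x v_1(y)\,dy$ for a fixed $c\in(a,b)$.

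For part (i), I will translate the limit point condition at $b$ (at most one solution in $L^2((b-\delta_0,b);\IC^2)$) into the requirement that at least one of the integrals $\int_{b-\delta_0}^b e^{\pm 2V(x)}\,dx$ diverges. Since $\max(e^{2V},e^{-2V}) = e^{2|V|}$ and the sum is comparable to the maximum, this is equivalent to $\int_{b-\delta_0}^b e^{2|V(x) - V(b-\delta_0)|}\,dx = \infty$, which after fixing $c = b-\delta_0$ is precisely \eqref{E:OD.9}. The symmetric argument at $a$ yields \eqref{E:OD.10}, and Theorem~\ref{T:W} then gives the iff statement.

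Parts (ii) and (iii) then reduce to simple estimates of $|V|$. For part (iii), regardless of the sign of $v_1$, the pointwise bound $|v_1|\leq\lambda/\delta$ on $(b-\delta_0, b)$ yields $|V(x)|\leq\lambda\ln(\delta_0/(b-x))$, so $e^{2|V(x)|}\leq(\delta_0/(b-x))^{2\lambda}$ is integrable on $(b-\delta_0, b)$ when $2\lambda < 1$; the limit circle case holds at $b$ and the analogous computation handles the case where \eqref{E:OD.12.0} is assumed at $a$. For part (ii), the hypothesis $|v_1(x)|\geq 1/(2\delta(x))$ combined with the continuity of $v_1$ forces $v_1$ to have a definite sign on some smaller interval $(b-\delta_1,b)$; monotonicity of $V$ there gives $|V(x)| = \int_{b-\delta_1}^x |v_1| \geq \tfrac12 \ln(\delta_1/(b-x))$, so $e^{2|V(x)|} \geq \delta_1/(b-x)$ is non-integrable, yielding the limit point case at $b$, and likewise at $a$.

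The main obstacle will be this sign issue in part (ii): without exploiting continuity of $v_1$ to extract a definite sign near each endpoint, oscillations of size $1/\delta$ could in principle produce cancellations in $V$ that destroy the logarithmic lower bound and invalidate the argument. The appeal to continuity, which follows from $\bbV\in C^0\big((a,b);\IC^{2\times 2}\big)$, is the essential ingredient ensuring that the integrand $e^{2|V|}$ in \eqref{E:OD.9}--\eqref{E:OD.10} actually blows up fast enough at the boundary.
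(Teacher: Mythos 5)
Your proposal is correct and follows essentially the same route as the paper: reduction to the Weyl limit point/limit circle criterion (Theorem~\ref{T:W}), explicit diagonal solutions $e^{\mp\int v_1}$ of $\bbD\bsPsi=0$, the two-sided comparison of $e^{2g}+e^{-2g}$ with $e^{2|g|}$ for the iff in part (i), and the logarithmic estimates for (ii) and (iii), including the continuity/constant-sign observation that the paper also uses to pass from $\big|\int v_1\big|$ to $\int|v_1|$ in part (ii).
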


\begin{remark}\label{R:OD1}
Note that it follows from Proposition \ref{P:M} iii. that the result in 
Theorem \ref{T:D1S} ii. is optimal, in the sense that the lower bound 
for $\lambda$ cannot be improved.
\end{remark}

\begin{proof}
In view of Theorem~\ref{T:W} we have to decide whether $\bbD$ is or is not limit point
at $a$ and at $b$. We focus on the situation at $b$, with analogous arguments at $a$. 

i. Let $\bsPsi=(\psi_1\,\,\,\psi_2)^T$ be a solution of $\ID\bsPsi=0$ on $(b-\delta_0,b)$. 
Solving the respective ODEs directly shows that $\psi_1$ is proportional to $e^{-g}$ and
$\psi_2$ is proportional to $e^g$, where
\begin{equation}\label{E:OD.14}
g(x)=\int_{b-\delta_0}^x v_1(y)\,dy\,.
\end{equation}
Consider then the following two linearly independent solutions of $\ID\bsPsi=0$:
\begin{equation}\label{E:OD.15}
\bsPsi_1=\left(\begin{matrix} e^{-g} \\ 0
\end{matrix}\right)\quad\text{and}\quad
\bsPsi_2=\left(\begin{matrix} 0 \\ e^g
\end{matrix}\right)\,.
\end{equation}
Note that
\begin{equation*}
\int_{b-\delta_0}^b \big|\bsPsi_1(x)\big|^2+\big|\bsPsi_2(x)\big|^2\,dx
=\int_{b-\delta_0}^b e^{2g(x)}+e^{-2g(x)}\,dx\geq \int_{b-\delta_0}^b e^{2|g(x)|}\,dx\,,
\end{equation*}
and so \eqref{E:OD.9} implies that it cannot be that both $\bsPsi_1$ and $\bsPsi_2$
are in $L^2\big((b-\delta_0,b);\IC^2\big)$. On the other hand,
$$
\int_{b-\delta_0}^b e^{2g(x)}+e^{-2g(x)}\,dx\leq 2\int_{b-\delta_0}^b e^{2|g(x)|}\,dx\,,
$$
so if \eqref{E:OD.9} does not hold, then $\bsPsi_1,\bsPsi_2\in \big(L^2(b-\delta_0,b)\big)^2$,
which in turn implies that all solutions of $\ID\bsPsi=0$ are in $\big(L^2(b-\delta_0,b)\big)^2$.
We have thus concluded that $\ID$ is limit point at$b$ iff \eqref{E:OD.9} holds.

ii. Since $v_1$ is a continuous function, \eqref{E:OD.11} implies that it has constant sign, and hence
\begin{equation*}
e^{2\left|\int_{b-\delta_0}^x v_1(y)\,dy\right|}=e^{2\int_{b-\delta_0}^x |v_1(y)|\,dy}
\geq \frac{\delta_0}{b-x}\,,
\end{equation*}
which directly guarantees that \eqref{E:OD.9} holds and so $\ID$ is 
limit point at $b$.

iii. Assume that \eqref{E:OD.12} holds. Then, with the notation \eqref{E:OD.14},
\begin{equation*}
e^{2\big|\int_{b-\delta_0}^x v_1(y)\,dy\big|} \leq e^{2\int_{b-\delta_0}^x|v_1(y)|\,dy}\leq \left(\frac{\delta_0}{b-x}\right)^{2\lambda}\,.
\end{equation*}
As $\lambda<\frac12$, this implies that \eqref{E:OD.9} does not hold, and so
$\ID$ is limit circle at $b$.
\end{proof}

Finally, we treat the general case:
\begin{corollary}\label{C:SMF} 
Consider
\begin{equation}\label{E:OD.17}
\ID=\sigma_2 D+\sigma_1 v_1+\sigma_3 v_3+ v_0\mathds 1\,,
\end{equation}
with
\begin{equation}\label{E:OD.16}
v_j(x)=\frac{\lambda_j}{\delta(x)}\quad\text{for }\delta(x)\text{ small enough and }j=0,1,3\,.
\end{equation}
Then $\ID$ is essentially self-adjoint if and only if
\begin{equation}\label{E:OD.18}
\lambda_0^2\leq \lambda_1^2+\lambda_3^2-\tfrac14\,.
\end{equation}
\end{corollary}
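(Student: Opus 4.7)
The plan is to invoke the Weyl limit-point/limit-circle criterion, Theorem~\ref{T:W}: $\ID$ is essentially self-adjoint if and only if it is limit point at both $a$ and $b$. Since the assumption on the $v_j$'s is symmetric at the two endpoints, it suffices to analyze the situation at $b$; the argument at $a$ is identical upon replacing $b-x$ by $x-a$.

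The first step is to rewrite the equation $\ID\bsPsi=0$ on $(b-\delta_0,b)$ as a first-order linear system. Multiplying by $-i\sigma_2$ on the left and using the Pauli anticommutation rules (together with $\sigma_2^2=\mathds1_2$), the equation becomes
\begin{equation*}
\bsPsi'(x)=\frac{\Lambda}{b-x}\bsPsi(x),\qquad
\Lambda=\begin{pmatrix}-\lambda_1 & \lambda_3-\lambda_0\\ \lambda_3+\lambda_0 & \lambda_1\end{pmatrix},
\end{equation*}
where $\tr\Lambda=0$ and $\det\Lambda=\lambda_0^2-\lambda_1^2-\lambda_3^2$, so the eigenvalues of $\Lambda$ are $\pm\nu$ with $\nu^2=\lambda_1^2+\lambda_3^2-\lambda_0^2$.

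The key observation is the scale invariance of this ODE. Under the logarithmic change of variable $s=-\ln(b-x)$, it becomes the constant-coefficient system $\tilde\bsPsi'(s)=\Lambda\tilde\bsPsi(s)$, whose fundamental solutions pull back to $(b-x)^{-\Lambda}\bsPsi(b-\delta_0)$, i.e.\ to linear combinations of $(b-x)^{\pm\nu}$ when $\Lambda$ is diagonalizable. I would then check membership in $L^2((b-\delta_0,b);\IC^2)$ case by case: when $\nu\in\bbR$, the solution $(b-x)^{-|\nu|}$ is square integrable at $b$ if and only if $|\nu|<\tfrac12$; when $\nu$ is purely imaginary (which occurs precisely when $\lambda_0^2>\lambda_1^2+\lambda_3^2$), both independent solutions are bounded and hence in $L^2$; and when $\nu=0$, $\Lambda$ is either zero or a single Jordan block, giving solutions that are bounded or grow like $\ln(b-x)$, still in $L^2$. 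Combining the cases, $\ID$ is limit point at $b$ if and only if $\nu^2\geq\tfrac14$, that is if and only if $\lambda_0^2\leq\lambda_1^2+\lambda_3^2-\tfrac14$.

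The corresponding analysis at $a$ (with the substitution $s=\ln(x-a)$) produces the same matrix $\Lambda$ and hence the same inequality, so Theorem~\ref{T:W} yields \eqref{E:OD.18}. The most delicate point is the borderline case $\nu=0$, in which the Jordan structure of $\Lambda$ has to be addressed separately; however in every such subcase the solutions pick up at worst a logarithmic factor, which is harmless for $L^2$ integrability on a bounded interval, so the dichotomy between limit point and limit circle remains cleanly governed by the inequality $\nu^2\geq\tfrac14$.
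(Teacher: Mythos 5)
Your argument is correct. Note that the paper does not actually supply a proof of Corollary~\ref{C:SMF}: it only remarks that the proof ``closely mimics the proof of Theorem~6.9 in Weidmann'' and leaves it to the reader. What you have written is, in effect, a self-contained reconstruction of that argument: you rewrite $\ID\bsPsi=0$ near each endpoint as the Euler-type system $\bsPsi'=\frac{\Lambda}{b-x}\bsPsi$ (your matrix $\Lambda$ and the values $\tr\Lambda=0$, $\det\Lambda=\lambda_0^2-\lambda_1^2-\lambda_3^2$ check out), solve it exactly by the logarithmic substitution, and read off the $L^2$ dichotomy: one solution behaves like $(b-x)^{-\nu}$, which fails to be square integrable precisely when $\nu\geq\tfrac12$, i.e.\ $\lambda_1^2+\lambda_3^2-\lambda_0^2\geq\tfrac14$, while for $\nu^2<\tfrac14$ (including the purely imaginary and nilpotent cases) all solutions lie in $L^2$ near the endpoint; combined with Theorem~\ref{T:W} this gives exactly \eqref{E:OD.18}, with the borderline $\nu=\tfrac12$ correctly falling on the essentially self-adjoint side. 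Your explicit treatment of the $\nu=0$ Jordan-block case (at worst logarithmic growth, hence $L^2$ on a bounded interval) is the point most often glossed over, and you handle it properly. Two cosmetic remarks: the fundamental solution should carry a normalization constant, e.g.\ $\big(\tfrac{b-x}{\delta_0}\big)^{-\Lambda}\bsPsi(b-\delta_0)$, rather than $(b-x)^{-\Lambda}\bsPsi(b-\delta_0)$ as written; and it is worth saying explicitly that the limit point/limit circle classification only involves solutions on $(b-\delta_0,b)$ (resp.\ $(a,a+\delta_0)$), which is why the hypothesis $v_j=\lambda_j/\delta$ ``for $\delta(x)$ small enough'' suffices. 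Compared with simply invoking Weidmann, your route buys a short, explicit computation that exposes where the threshold $\tfrac14$ and the sign structure in \eqref{E:OD.18} come from.
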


The proof of this statement closely mimics the proof of Theorem~6.9 in \cite{Wie2},
and is left to the interested reader.

\section{Confinement of Dirac particles in $d=2$: magnetic fields}\label{S:Ddim2}

We consider now the question of confinement of relativistic particles with spin 1/2 (Dirac particles)
solely by magnetic fields. Since magnetic potentials do not satisfy Definition~\ref{D:ScaPo}, i.e. are
not scalar, a general theory of purely magnetic confinement does not exist (assuming such confinement 
is even possible). Even for nonrelativistic spinless particles, positive results in a general setting have only been 
obtained recently by Y. Colin de Verdi\`ere and F. Truc \cite{CdVT}. As for nonrelativistic particles with spin 1/2,
confinement was proved \cite{NN2} only for the unit disc in $\IR^2$ and rotationally invariant magnetic fields.

However, the situation is much better in 2 dimensions. More precisely, we show below that for bounded
domains $\Omega$ in $\IR^2$, the magnetic Dirac operator is essentially self-adjoint on $C_0^\infty\big(\Omega;\IC^2\big)$
provided the magnetic field $\cB$ satisfies a simple growth condition near $\partial\Omega$, see Theorem~\ref{T:M2} below.
The proof rests on the supersymmetric structure of the Dirac operator in 2 dimensions (see \cite[Section 7.1.2]{Th}), 
which allows for the reduction of this problem to the essential self-adjointness of a scalar magnetic Schr\"odinger operator. 
This in turn is amenable to the method of \cite{NN1,NN3} combined with the diamagnetic inequality in \cite{CdVT}.

The magnetic Dirac operator in 2 dimensions is given by
\begin{equation}\label{E:M.2}
\ID_{2,mag}=\sigma_1\big(D_1-\cA_1\big)+\sigma_2\big(D_2-\cA_2\big)
\end{equation}
where 
\begin{equation}\label{E:M.3}
\bscA=\big(\cA_1\,,\,\cA_2\big)\in C^1\big(\Om;\IR^2\big)
\end{equation}
is the magnetic vector potential. 
We again drop the mass term $\sigma_3 m$ from the standard Dirac operator, since
it is uniformly bounded and hence irrelevant for essential self-adjointness. 

Recall that, even though $\bscA$ appears in \eqref{E:M.2}, the physically relevant quantity is the magnetic field 
\begin{equation}\label{E:M.4}
\cB(x)=\frac{\partial \cA_2}{\partial x_1}(x)-\frac{\partial \cA_1}{\partial x_2}(x)
\end{equation}
The main result of this section is then the following:
\begin{theorem}\label{T:M2}
Assume that there exists $\delta_0>0$ such that either
\begin{equation}\label{E:TM2.1}
\cB(x)\geq\frac12\cdot\frac{1}{\delta(x)^2}\quad\text{for all }x\in\Om\text{ with }
\delta(x)<\delta_0
\end{equation}
or
\begin{equation}\label{E:TM2.2}
\cB(x)\leq-\frac12\cdot\frac{1}{\delta(x)^2}\quad\text{for all }x\in\Om\text{ with }
\delta(x)<\delta_0\,,
\end{equation}
where $\delta(x)=\text{dist}(x,\partial\Om)$.
Then $\ID_{2,mag}$ is essentially self-adjoint on $C_0^2\big(\Omega;\IC^2\big)$.
\end{theorem}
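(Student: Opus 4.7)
The approach hinges on the supersymmetric factorization available to $\ID_{2,mag}$ in dimension two, as already signaled in the paragraph preceding the theorem. Writing $\ID_{2,mag}$ as the off-diagonal matrix $\begin{pmatrix} 0 & Q^* \\ Q & 0 \end{pmatrix}$ with $Q=(D_1-\cA_1)+i(D_2-\cA_2)$, a direct computation using the commutator identity $[D_1-\cA_1,D_2-\cA_2]=i\cB$ gives the Pauli-type identities
\begin{equation*}
Q^*Q = H_{\bscA} - \cB, \qquad QQ^* = H_{\bscA} + \cB,
\end{equation*}
where $H_{\bscA}=(D_1-\cA_1)^2+(D_2-\cA_2)^2$ is the magnetic Laplacian.

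The first step is to invoke the supersymmetry lemma of \cite{GST} to relate the essential self-adjointness of $\ID_{2,mag}$ on $C_0^2(\Omega;\IC^2)$ to that of a single scalar non-negative operator, or equivalently to the closability of $Q$ together with self-adjointness of the form $\overline{Q^*Q}$. Under hypothesis \eqref{E:TM2.2} this reduces cleanly to the analysis of $Q^*Q=H_{\bscA}-\cB$, whose scalar potential is bounded below by $\frac{1}{2\delta^2}$ near $\partial\Omega$; under hypothesis \eqref{E:TM2.1}, one may either swap the roles of $Q$ and $Q^*$ (which exchanges the two spin components of $\ID_{2,mag}^2$) or reduce to the previous case via the reflection $(x_1,x_2)\mapsto(x_1,-x_2)$, which reverses the sign of $\cB$. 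In either case the problem reduces to essential self-adjointness of a scalar magnetic Schr\"odinger operator whose scalar potential grows at least like $\frac{1}{2\delta(x)^2}$ near $\partial\Omega$.

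The second step is to establish this scalar essential self-adjointness by adapting the Agmon--Combes--Thomas strategy of Section~\ref{S:FOSP}. Concretely, I would take a weak $L^2$ solution $u$ of the corresponding deformed equation, multiply by a cutoff $g_n$ built from the regularized distance $\hat\delta$ and an exponential weight $e^h$ as in \eqref{E:S.10}--\eqref{E:S.14}, and derive a basic inequality along the lines of Lemma~\ref{L:LemmaB}. The diamagnetic inequality of Colin de Verdi\`ere and Truc \cite{CdVT}, in the pointwise form $|\nabla|u||\leq|(\nabla-i\bscA)u|$, then permits one to bound the magnetic quadratic form below by its non-magnetic counterpart, so that the lower bound on $\cB$ alone suffices to force the right-hand side of the basic inequality to vanish in the limit $n\to\infty$.

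The main obstacle is the sharpness of the constant $\frac{1}{2}$. A direct comparison with the non-magnetic Schr\"odinger case on a bounded domain would instead demand a potential growth of $\frac{3}{4\delta^2}$ (the usual Hardy threshold), not $\frac{1}{2\delta^2}$. The missing $\frac{1}{4}$ is encoded in the supersymmetric factorization itself: the identity $\int|Qu|^2 = \int (H_{\bscA}-\cB)|u|^2$ already implies the integrated magnetic Hardy inequality $\int H_{\bscA}|u|^2 \ge \int \cB|u|^2$ valid for $u\in C_0^\infty(\Omega)$, and it is this additional contribution that must be exploited to lower the threshold from $\frac{3}{4}$ to $\frac{1}{2}$. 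Quantifying this gain carefully, while simultaneously handling the $C_0^2$ (rather than $C_0^\infty$) regularity of the core on which the theorem is stated, is where I expect the technical work to be most delicate; the independent optimality of $\frac{1}{2}$ is then confirmed via the partial-wave reduction and the one-dimensional Weyl analysis of Section~\ref{S:Ddim1}, as stated in Proposition~\ref{P:CM}.
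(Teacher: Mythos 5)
Your proposal follows essentially the same route as the paper: the off-diagonal supersymmetric factorization, the Gesztesy--Simon--Thaller lemma reducing matters to essential self-adjointness of one of the scalar operators $H_{\bscA}\pm\cB$, the Agmon-type criterion of \cite{NN1,NN3} requiring the full quadratic form to dominate $\delta^{-2}$ near $\partial\Om$, and the commutator-based bound $\langle\varphi,H_{\bscA}\varphi\rangle\ge\big|\int_\Om\cB|\varphi|^2\big|$ to reach the constant $\tfrac12$. The only clarification needed is that the step you flag as delicate is in fact immediate --- with the sign of $\cB$ chosen so that the relevant operator is $H_{\bscA}+\cB$, one gets $H_{\bscA}+\cB\ge 2\cB\ge\delta^{-2}$ near $\partial\Om$, which meets the threshold directly --- and the operative input is this commutator inequality (which the paper calls a diamagnetic inequality), not the pointwise bound $|\nabla|u||\le|(\nabla-i\bscA)u|$.
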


\begin{remark}
Note that, if $\Om$ is also simply connected, then the condition that either \eqref{E:TM2.1}
or \eqref{E:TM2.2} holds
is equivalent to
\begin{equation}\label{E:TM2.3}
\big|\cB(x)\big|\geq\frac12\cdot\frac{1}{\delta(x)^2}\quad\text{for all }x\in\Om\text{ with }
\delta(x)<\delta_0\,,
\end{equation}
since $\partial\Om$ is connected and $\cB$ is continuous.
\end{remark}

As already discussed, the main ingredient of the proof of this theorem is the abstract ``supersymmetry'' lemma of 
F. Gesztesy, B. Simon, and B. Thaller in \cite{GST} (see also Lemma~ 5.7 in \cite{Th}).
\begin{lemma}\label{L:SU}
Let $\mathcal H_\pm$ be separable Hilbert spaces, 
$$
D_\pm\,:\,\mathcal D_\pm\subset\mathcal H_\pm\,\rightarrow\, \mathcal H_\pm
$$
densely defined closable operators, $\cH=\cH_+\oplus\cH_-$, and
\begin{equation}\label{E:M.5}
D=\left(
\begin{matrix}
0 & D_-\\ D_+ &0
\end{matrix}\right)
\qquad\text{with}\quad \cD om(D)=\cD_+\oplus\cD_-\,.
\end{equation}
Assume that $D$ is symmetric, and that
\begin{itemize}
\item[(i)] $D_+(\cD_+)\subset\cD om(D_-^{**})$
and $D_-^{**}D_+$ is essentially self-adjoint on $\cD_+$, \label{E:M.6}
\item[\textit{or}] 
\item[(ii)] $D_-(\cD_-)\subset\cD om(D_+^{**})$ and $D_+^{**}D_-$ is essentially self-adjoint on $\cD_-$. \label{E:M.6.1}
\end{itemize}

Then $D$ is essentially self-adjoint.
\end{lemma}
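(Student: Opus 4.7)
The plan is to reduce the essential self-adjointness of $D$ to the triviality of the deficiency subspaces $\ker(D^*\mp i)$, and then, under hypothesis (i), convert this into a statement about the auxiliary operator $S := D_-^{**}D_+$; case (ii) is handled by the obvious symmetry between $+$ and $-$. My first step would be to unpack the consequences of $D\subset D^*$. From the $2\times 2$ block form, symmetry of $D$ is equivalent to $D_+ \subset D_-^*$ and $D_- \subset D_+^*$. Since $D_-^*$ is automatically closed, this implies $D_+^{**} \subset D_-^*$, and a further adjoint then yields the key inclusion $D_-^{**} \subset D_+^*$.

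With this in hand, $S$, well defined on $\mathcal D_+$ by hypothesis (i), is a non-negative symmetric operator: for $\phi \in \mathcal D_+$, the inclusion $D_-^{**} \subset D_+^*$ applied to $D_+\phi \in \cD om(D_-^{**})$ together with the adjoint pairing between $D_+^{**}$ and $D_+^*$ gives $\langle S\phi,\phi\rangle = \langle D_+\phi,D_+^{**}\phi\rangle = \|D_+\phi\|^2 \ge 0$. The assumption that $S$ is essentially self-adjoint then makes its closure $\overline S = S^*$ a non-negative self-adjoint operator, so $\overline S + I$ has trivial kernel.

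For the core of the argument, let $\bsPhi = (\phi_+,\phi_-) \in \ker(D^*-i)$; reading $D^*\bsPhi = i\bsPhi$ component-wise gives $D_+^*\phi_- = i\phi_+$ and $D_-^*\phi_+ = i\phi_-$, and eliminating $\phi_-$ yields $D_+^* D_-^*\phi_+ = -\phi_+$. It then suffices to show $\phi_+ \in \cD om(S^*)$ with $S^*\phi_+ = -\phi_+$, for then $\phi_+ = 0$ by the previous paragraph, and $\phi_- = -iD_-^*\phi_+ = 0$ follows at once. To establish this, I would test $\langle S\psi,\phi_+\rangle$ against an arbitrary $\psi \in \mathcal D_+$ and apply successively the adjoint relationship between $D_-^{**}$ and $D_-^*$ (using $D_+\psi \in \cD om(D_-^{**})$ and $\phi_+ \in \cD om(D_-^*)$) followed by that between $D_+^{**}$ and $D_+^*$ (using $\psi \in \mathcal D_+ \subset \cD om(D_+^{**})$ and $D_-^*\phi_+ = i\phi_- \in \cD om(D_+^*)$), ending at $\langle S\psi,\phi_+\rangle = \langle \psi, D_+^*D_-^*\phi_+\rangle = -\langle \psi,\phi_+\rangle$. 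The identical argument with $+i$ replacing $-i$ disposes of $\ker(D^*+i)$, and the von Neumann criterion delivers the conclusion.

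The main obstacle I anticipate is not conceptual but bookkeeping: each of the adjoint pairings above must be justified by verifying that the relevant vectors lie in the correct domains, and everything hinges on the inclusion $D_-^{**} \subset D_+^*$ extracted at the outset from symmetry of $D$, since it is precisely this inclusion that promotes $S$ to a non-negative operator and thereby rules out $-1$ from the spectrum of its self-adjoint closure.
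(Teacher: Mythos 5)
Your proposal is correct, and it is essentially the standard proof of this lemma: the paper itself does not prove it but cites Gesztesy–Simon–Thaller \cite{GST} (see also Lemma~S.7 in \cite{Th}), and your argument — extracting $D_\mp \subset D_\pm^*$ and hence $D_-^{**}\subset D_+^*$ from symmetry, showing $S=D_-^{**}D_+$ is non-negative with self-adjoint closure, and then proving that any deficiency vector $(\phi_+,\phi_-)$ of $D$ yields $\phi_+\in\mathcal D om(S^*)$ with $S^*\phi_+=-\phi_+$, forcing $\phi_+=0$ and then $\phi_-=0$ — is precisely that proof, with the domain bookkeeping done correctly. No gaps; the case (ii) reduction by symmetry of the roles of $D_+$ and $D_-$ is also fine.
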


\begin{proof}[Proof of Theorem~\ref{T:M2}]
We start by noting that, due to the explicit, off-diagonal form of $\sigma_1$ and $\sigma_2$,
we can write our operator of interest as
\begin{equation}\label{E:M.7}
\ID_{2,mag}=
\left(\begin{matrix}
0 & D_1-iD_2-\cA_1+i\cA_2 \\
D_1+iD_2-\cA_1-i\cA_2 & 0
\end{matrix}\right)
\end{equation}
i.e. in the form \eqref{E:M.5} with
\begin{equation}\label{E:M.7.1}
D_\pm=D_1\pm iD_2-\cA_1\mp i\cA_2\,,\quad \cH_\pm=L^2(\Om)\,,\quad \cD_\pm=C_0^2(\Om)\,.
\end{equation}
It is clear that $D_\pm$ are densely defined, and since $D_\mp$ is the formal adjoint of $D_\pm$,
it is easily seen that $\ID_{2,mag}$ is symmetric. It is also straightforward to check that
$C_0^1(\Om)\subset\mathcal{D}om(D_\pm^*)$, and hence (see, for example, \cite[Theorem 1.8]{Sch})
$D_\pm$ are closable and $D_\pm^{**}=\overline{D_\pm}$. It follows that 
$\cD om(D_-^{**})=\cD om (\overline{D_-})\supset C_0^1(\Om)$ which together with 
$D_+(C_0^2(\Om))\subset C_0^1(\Om)$ gives $D_+(\cD_+)\subset\cD om(D_-^{**})$. 
Furthermore, a direct computation shows that on $C_0^2(\Om)$
\begin{equation}\label{E:M.8}
D_-^{**}D_+=(D_1-\cA_1)^2+(D_2-\cA_2)^2-\cB\,.
\end{equation}
Interchanging the roles of $D_+$ and $D_-$ one obtains that $D_-(\cD_-)\subset\cD om(D_+^{**})$
and that on $C_0^2(\Om)$
\begin{equation}\label{E:M.9}
D_+^{**}D_-=(D_1-\cA_1)^2+(D_2-\cA_2)^2+\cB\,.
\end{equation}
Using Lemma~\ref{L:SU}, we can then conclude essential self-adjointness of $\ID_{2,mag}$
on $C_0^2\big(\Om;\IC^2\big)$ if (at least) one of 
\begin{equation}\label{E:M.10}
H_{_{\cA,\pm}}=(D_1-\cA_1)^2+(D_2-\cA_2)^2\pm\cB\qquad\text{with}\quad \cD om(H_{_{\cA,\pm}})=C_0^2(\Om)
\end{equation}
is essentially self-adjoint.

We can assume,
without loss of generality, that $\cB(x)>0$ for $\delta(x)<\delta_0$ and focus on proving
the essential self-adjointness of $H_{_{\cA,+}}$.
If, instead, $\cB(x)<0$ for $\delta(x)<\delta_0$, then the proof below yields essential self-adjointness for
$H_{_{\cA,-}}$.
The exponential Agmon estimates method from \cite{NN1} as applied in \cite{CdVT} and \cite{NN3}
gives essential self-adjointness of $H_{_{\cA,+}}$ provided there exists a function $h$ with $|h|$
uniformly bounded on compacts in $\Om$, and such that 
\begin{equation}\label{E:M.12}
h(x)\geq\frac{1}{\delta(x)^2}\quad\text{for $\delta(x)$ sufficiently small,}
\end{equation}
and 
\begin{equation}\label{E:M.11}
\big(\varphi\,,\,H_{_{\cA,+}}\varphi\big)\geq \int_\Om h(x)\big|\varphi(x)\big|^2\,dx\quad\text{for all }\varphi\in C_0^2(\Om)\,.
\end{equation}

Note that we can, at this point, conclude essential self-adjointness of $H_{_{\cA,+}}$ (and hence $\ID_{2,mag}$) 
if $\cB(x)\geq \delta(x)^{-2}$ near
$\partial\Omega$, but this condition misses the claimed \eqref{E:TM2.1} by a $\frac12$ factor. To recover essential 
self-adjointness based only on \eqref{E:TM2.1} we must use the following
elementary diamagnetic inequality (see \cite{CdVT}):
\begin{equation}\label{E:M.13}
\Big(\varphi\,,\, \left((D_1-\cA_1)^2+(D_2-\cA_2)^2\right)\varphi\Big)\geq \left|\int_\Om \cB(x)\big|\varphi(x)\big|^2\,dx\right|
\end{equation}
This follows from the fact that 
\begin{equation*}
\big[D_1-\cA_1,D_2-\cA_2\big]=i\cB\,,
\end{equation*}
and so
\begin{equation*}
\begin{aligned}
\big|\big(\varphi,\cB \varphi\big)\big| \leq 2\big\|(D_1-\cA_1)\varphi\big\| \cdot\big\|(D_2-\cA_2)\varphi\big\|
\leq \big\|(D_1-\cA_1)\varphi\big\|^2+\big\|(D_2-\cA_2)\varphi\big\|^2\,.
\end{aligned}
\end{equation*}
We now turn to the operator $H_{_{\cA,+}}$ from \eqref{E:M.10}, and recall that we have assumed that
there exists $\delta_0>0$ such that
$$
\cB(x)\geq\frac{1}{2\delta(x)^2}\qquad\text{for } \delta(x)<\delta_0\,.
$$
We choose the function
\begin{equation*}
h(x)=
\begin{cases}
2\cB(x) &\quad\text{for }\delta(x)<\delta_0\\
-\sup_{\delta(x)\geq\delta_0} 2\big|\cB(x)\big| &\quad\text{for } \delta(x)\geq\delta_0\,.
\end{cases}
\end{equation*}
Then \eqref{E:TM2.1} and \eqref{E:M.13} imply that \eqref{E:M.11} holds for this $h$, and hence
the operator $H_{_{\cA,+}}$ is essentially self-adjoint, as claimed.
\end{proof}

We now turn to the proof of the optimality of condition \eqref{E:TM2.1}, which we achieve
by considering a special problem on the unit disk in $\IR^2$. More precisely, 
for the remainder of this section we fix
\begin{equation*}
\Om_{_{disk}}=\big\{x=(x_1,x_2)\in\IR^2\,\big|\, |x|^2=x_1^2+x_2^2<1\big\}
\end{equation*}
on which we consider
a rotationally symmetric magnetic field $\cB$. As is standard, we use a slight abuse of notation to write:
\begin{equation}\label{E:M.14}
\cB(x)=\cB(|x|)\qquad \text{for }x\in\Omd\,,
\end{equation}
where $\cB$ is now a continuous function of $r=|x|\in[0,1)$.

Recall that the magnetic field $\cB$ does not uniquely determine the magnetic potential $\bscA$. 
However, if the physical domain $\Om$ is simply connected, then any two
magnetic potentials $\bscA$ and $\bscA^\prime$ associated, via \eqref{E:M.4}, to the same magnetic field $\cB$
differ by a gradient: i.e., there exists $\varphi\in C^2(\Om;\IR)$ such that  
$\bscA^\prime\equiv\bscA_\varphi=\bscA+\nabla\varphi$. This in particular implies that,
even though the corresponding Dirac Hamiltonians are different, they are
unitarily equivalent:
\begin{equation}\label{E:M.15}
U_\varphi^*\big(\bssig\cdot(\bsD-\bscA)\big) U_\varphi=\bssig\cdot(\bsD-\bscA_\varphi)\,,
\end{equation}
where $U_\varphi$ is the unitary operator of multiplication with $e^{i\varphi}$.
$U_\varphi$ is called a gauge transformation, and \eqref{E:M.15} is gauge covariance.
The fact that essential self-adjointness is stable under gauge transformations
follows from the invariance of $C_0^1\big(\Om;\IC^2\big)$ under $U_\varphi$
and the stability of essential self-adjointness under unitary transformations 
that preserve the domain. A particular choice of a magnetic potential $\bscA$ associated,
via \eqref{E:M.4}, to the magnetic field $\cB$ is called gauge fixing, and 
gauge covariance implies that all physically relevant results obtained for $\bscA$ remain valid
for all other $\bscA_\varphi$. This freedom of gauge fixing is particularly useful at the technical level,
as it allows one to choose a well-suited gauge for each problem.

In our case, we choose to work with the transversal (aka Poincar\'e) gauge, which is characterized by
\begin{equation}\label{E:M.16}
\bscA(x)\perp x\quad\text{for all }x\in\Omd\,.
\end{equation}
Using polar coordinates $r$ and $\theta$ and the notations (see, for example, \cite[Section 7.3.3]{Th})
\begin{equation}\label{E:M.18}
\boldsymbol{e}_\theta=\tfrac1r\,\big(-x_2,x_1\big)\quad\text{and}\quad \boldsymbol{e}_r=\tfrac1r\,\big(x_1,x_2\big)\,,
\end{equation}
the transversality condition \eqref{E:M.16} amounts to writing the magnetic potential as
\begin{equation}\label{E:M.19}
\bscA(x)=a(r)\boldsymbol{e}_\theta
\end{equation}
with $a(0)=0$.
A straightforward calculation then shows that
\begin{equation}\label{E:M.20}
\cB(r)=\tfrac1r a(r)+a'(r)=\tfrac1r\,\tfrac{d}{dr}\big(ra(r)\big)
\end{equation}
and
\begin{equation}\label{E:M.21}
a(r)=\frac1r\int_0^r y\cB(y)\,dy\,.
\end{equation}

To fully utilize the polar coordinates in this context, we must consider
\begin{equation}\label{E:M.22}
\oID_{2,mag}=\ID_{2,mag}\Big|_{C_0^1\big(\oOmd;\IC^2\big)}\,,
\end{equation}
where
\begin{equation}\label{E:M.22.1}
\oOmd=\Omd\setminus\{0\}=\big\{x=(x_1,x_2)\in\IR^2\,\big|\, 0<|x|^2=x_1^2+x_2^2<1\big\}\,.
\end{equation}
Since $\ID_{2,mag}$ is a symmetric extension of $\oID_{2,mag}$, the essential
self-adjointness of $\oID_{2,mag}$ directly implies that of $\ID_{2,mag}$. The converse 
follows from the following proposition, which is of independent interest.

\begin{proposition}\label{L:P}
Let $\Om\subset\IR^2$ be a domain and $x_0\in\Om$. Consider 
the Dirac operator on $\Om$
$$
\ID=\bssig\cdot\bsD+\bbV\qquad \cD om(\ID)=C_0^1\big(\Om;\IC^2\big)\,,
$$
where we assume that $\bbV=\bbV^*\in L^\infty_{loc}(\Om;\IC^{2\times2})$. If $\ID$ is essentially self-adjoint, 
then so is
$$
\oID=\ID\Big|_{C_0^1\big((\Om\setminus\{x_0\};\IC^2\big)}\,.
$$
\end{proposition}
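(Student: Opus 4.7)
The plan is to show that $\oID$ has zero deficiency indices, which will follow from the essential self-adjointness of $\bbD$. Suppose $\bsPsi \in L^2(\Om;\IC^2)$ satisfies $\oID^{*}\bsPsi = \pm i\bsPsi$, i.e.
\begin{equation*}
\dlab \bsPsi, (\bbD \mp i)\bsPhi \drab = 0 \quad \text{for every } \bsPhi \in C_0^1(\Om\setminus\{x_0\};\IC^2).
\end{equation*}
I will upgrade this identity so that it holds for all $\bsPhi \in C_0^1(\Om;\IC^2)$; essential self-adjointness of $\bbD$ then forces $\bsPsi=0$, which is exactly what is needed.

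The upgrading is carried out via the classical two-dimensional logarithmic cutoff. Fix $\psi \in C^\infty(\IR)$ with $\psi \equiv 0$ on $(-\infty,0]$, $\psi \equiv 1$ on $[1,\infty)$, $0 \leq \psi \leq 1$, and $|\psi'|\leq 2$. For small $\epsilon > 0$, set
\begin{equation*}
\chi_\epsilon(x) = \psi\!\left(\frac{\log|x-x_0| - 2\log\epsilon}{-\log\epsilon}\right),
\end{equation*}
so that $\chi_\epsilon \in C^\infty(\Om)$ vanishes on $\{|x-x_0|\leq \epsilon^2\}$, equals $1$ on $\{|x-x_0|\geq \epsilon\}$, and $\chi_\epsilon \to 1$ pointwise on $\Om \setminus\{x_0\}$. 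A direct polar-coordinate calculation yields the Dirichlet-energy bound
\begin{equation*}
\int_\Om |\nabla \chi_\epsilon(x)|^2\, dx \;\leq\; \frac{C}{|\log\epsilon|} \;\longrightarrow \;0 \quad\text{as } \epsilon\to 0,
\end{equation*}
which is the genuinely two-dimensional ingredient of the proof, reflecting the fact that a single point has zero $H^1$-capacity in $\IR^2$.

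Given $\bsPhi \in C_0^1(\Om;\IC^2)$, the product $\chi_\epsilon\bsPhi$ lies in $C_0^1(\Om\setminus\{x_0\};\IC^2)$. Since $\bbV$ is a multiplication operator, an elementary computation (identical to the one used for $[g,\bbD(h)]$ in the proof of Lemma~\ref{L:LemmaB}) gives
\begin{equation*}
(\bbD\mp i)(\chi_\epsilon\bsPhi) = \chi_\epsilon (\bbD\mp i)\bsPhi \;-\; i\,\sigma(\cdot,\nabla\chi_\epsilon)\bsPhi\,,
\end{equation*}
so the weak equation for $\bsPsi$ applied to $\chi_\epsilon\bsPhi$ becomes
\begin{equation*}
0 \;=\; \dlab\bsPsi,\chi_\epsilon(\bbD\mp i)\bsPhi\drab \;-\; i\,\dlab\bsPsi,\sigma(\cdot,\nabla\chi_\epsilon)\bsPhi\drab.
\end{equation*}
Dominated convergence (using $|\chi_\epsilon|\leq 1$, $\chi_\epsilon\to 1$ pointwise, and $\bbV\in L^\infty_{\text{loc}}$ on the compact support of $\bsPhi$) sends the first term to $\dlab\bsPsi,(\bbD\mp i)\bsPhi\drab$. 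For the second, since $|\sigma(x,\nabla\chi_\epsilon(x))|\leq C'|\nabla\chi_\epsilon(x)|$ and $\bsPhi$ is bounded with compact support, Cauchy--Schwarz yields
\begin{equation*}
\bigl|\dlab\bsPsi,\sigma(\cdot,\nabla\chi_\epsilon)\bsPhi\drab\bigr| \;\leq\; C'\|\bsPhi\|_\infty \|\bsPsi\|_{L^2}\,\|\nabla\chi_\epsilon\|_{L^2} \;\longrightarrow\; 0.
\end{equation*}
Passing to the limit thus gives $\dlab\bsPsi,(\bbD\mp i)\bsPhi\drab = 0$ for every $\bsPhi \in C_0^1(\Om;\IC^2)$, and essential self-adjointness of $\bbD$ forces $\bsPsi=0$.

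The only truly dimension-dependent step is the capacity estimate $\|\nabla\chi_\epsilon\|_{L^2}^2 = O(1/|\log\epsilon|)$, and this is essentially the only non-routine point in the argument. The same strategy carries over to $\IR^d$ for any $d\geq 2$, with the logarithmic cutoff replaced by an $r^\alpha$-type polynomial one, so the statement in fact extends well beyond $d=2$.
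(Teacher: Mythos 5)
Your proof is correct. The identity $(\bbD\mp i)(\chi_\epsilon\bsPhi)=\chi_\epsilon(\bbD\mp i)\bsPhi-i\sigma(\cdot,\nabla\chi_\epsilon)\bsPhi$ holds on $C_0^1$, the logarithmic cutoff satisfies $\|\nabla\chi_\epsilon\|_{L^2}^2=O(1/|\log\epsilon|)$, and the Cauchy--Schwarz bound together with dominated convergence (legitimate since $\bbV\in L^\infty_{loc}$ and $\bsPhi$ has compact support) upgrades the weak eigenvalue equation from test functions in $C_0^1(\Om\setminus\{x_0\};\IC^2)$ to all of $C_0^1(\Om;\IC^2)$; the fundamental criterion then kills the deficiency vectors of $\oID$. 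This is, however, a genuinely different route from the paper's. The paper proceeds in three steps: it first proves that the \emph{free} Dirac operator on $\IR^2\setminus\{0\}$ is essentially self-adjoint via partial wave decomposition (the channels $m_j=\tfrac{2j-1}{2}$ satisfy $|m_j|\geq\tfrac12$, giving limit point at the origin), then establishes a graph-norm approximation of compactly supported spinors by spinors vanishing near the puncture, and finally localizes to a general $\Om$ with potential, concluding the operator inclusion $\ID\subset\overline{\oID}$ and invoking maximality of self-adjoint operators. Your argument works instead on the adjoint side, showing $\ker(\oID^*\mp i)\subset\ker(\ID^*\mp i)=\{0\}$ directly via the zero $H^1$-capacity of a point in $\IR^2$. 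What your approach buys: it is shorter, makes no use of the rotational symmetry or explicit structure of the free operator (only boundedness of the principal symbol near $x_0$ enters), and, as you note, extends immediately to $d\geq2$ and to the removal of any set of zero capacity. What the paper's approach buys: its Step 2 gives the explicit domain information $C_0^1(\Om;\IC^2)\subset\cD om\big(\overline{\oID}\big)$, i.e.\ the stronger operator-inclusion statement, and its Step 1 (essential self-adjointness of the free punctured operator, with the sharp threshold $|m_j|\geq\tfrac12$) is of independent interest elsewhere in the paper.
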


\begin{proof}
First observe that, without loss of generality, we can assume that $x_0=0$. We then proceed in three main steps.

\textbf{Step 1.} We start with the case where $\Om=\IR^2$ and the operator is $\ID_0=\bssig\cdot\bsD$ 
is the free Dirac operator, with $\cD om(\ID_0)=C_0^1\big(\IR^2;\IC^2\big)$. In this case, it is already 
known that $\ID_0$ is essentially self-adjoint on $C_0^1\big(\IR^2\setminus\{0\};\IC^2\big)$, 
which is the claim on the lemma. However, for completeness, we include here a proof which mimics 
the proof in \cite[Thm. X.11]{RS} of the fact that $-\Delta$ is essentially self-adjoint on $C_0^2(\IR^n\setminus\{0\})$
for $n\geq 4$.

The same type of arguments as those leading to Lemma~\ref{L:PW} show that $\ID_0$ is essentially self-adjoint on 
$C_0^1\big(\IR^2\setminus\{0\};\IC^2\big)$ iff 
\begin{equation}\label{E:A.1}
\ID_{0,m_j}=\sigma_2 D_r-\sigma_1\frac{m_j}{r}\quad\text{with }\,\,\,m_j=\frac{2j-1}{2}\,,\,\,\, j\in\bbZ
\end{equation}
is essentially self adjoint on $C_0^1\big((0,\infty);\IC^2\big)$ for every $j\in\bbZ$.
Since, for all $j\in\bbZ$, $|m_j|\geq\frac12$, we see that $\ID_{0,m_j}$ is limit-point at both
0 and $\infty$, and thus essentially self-adjoint, as claimed (see, for example, \cite{Wie2}).

\textbf{Step 2.} The second step aims to prove a technical approximation result which will then allow us,
in Step 3, to localize from $\IR^2$ to an arbitrary $\Om$. 

We start by recalling some elementary facts about
extensions of symmetric operators. 
Let $S$ and $T$ be symmetric operators, and $T\subset S$. Then $\overline{T}\subset\overline{S}$,
where $\overline{S}$ and $\overline{T}$ are the closures of $S$ and $T$, respectively. Assume that
$$
\overline{T}=T^*\,,
$$
that is $T$ is essentially self-adjoint. From the maximality of self-adjoint operators (see, for example,
\cite[Section 3.2]{Sch}) it follows that
\begin{equation}\label{E:A.2}
\overline{S}=\overline{T}=T^*=S^*\,.
\end{equation}
Taking $S=\ID_0$ and $T=\oID_0=\ID_0\Big|_{C_0^1(\IR^2\setminus\{0\};\IC^2)}$, one in particular obtains that
\begin{equation}\label{E:A.3}
C_0^1\big(\IR^2;\IC^2\big)=\cD om(\ID_0)\subset\cD om\big(\,\overline{\ID_0}\,\big)
=\cD om\bigg(\,\overline{\oID_0}\,\bigg)\,.  
\end{equation}

Now let $a\in(0,\infty)$ and $\bsPhi\in C_0^1\big(\IR^2;\IC^2\big)$ with $\supp(\bsPhi)\subset
\big\{x\in\IR^2\,\big|\,|x|\leq a\big\}$. From \eqref{E:A.3} it follows that there exists a sequence
$\big(\bsPhint\big)_{n\geq 1}\in C_0^1\big(\IR^2\setminus\{0\};\IC^2\big)$ such that 
\begin{equation}\label{E:A.5}
\bsPhint \to \bsPhi\quad\text{and}\quad \ID_0\bsPhint \to \ID_0\bsPhi\quad
\text{in }L^2\big(\IR^2;\IC^2\big)\,.
\end{equation}
Let $\chi\in C_0^1(\IR^2)$ be a smooth cut-off function such that $\chi(x)=1$ for $|x|\leq a$ and
$\chi(x)=0$ for $|x|\geq 2a$. For each $n\geq 1$, define $\bsPhin=\chi\bsPhint$, which together
with the fact that $\supp(\bsPhi)\subset \big\{x\in\IR^2\,\big|\,|x|\leq a\big\}$ then implies that
\begin{equation}\label{E:A.6}
\bsPhin-\bsPhi=\chi\big(\bsPhint-\bsPhi\big)
\end{equation}
This in turn shows that
\begin{equation*}
\ID_0\bsPhin-\ID_0\bsPhi=\big[\ID_0,\chi\big](\bsPhint-\bsPhi)+\chi\big(\ID_0\bsPhint - \ID_0\bsPhi\big)\,.
\end{equation*}
Together with \eqref{E:A.5}, we can then conclude that the sequence 
\begin{equation}\label{E:A.4.1}
\big(\bsPhin\big)_{n\geq 1}\subset C_0^1\big(\{0<|x|\leq 2a\};\IC^2\big)
\end{equation}
satisfies
\begin{equation}\label{E:A.4}
\bsPhin \to \bsPhi\quad\text{and}\quad \ID_0\bsPhin \to \ID_0\bsPhi\quad
\text{in }L^2\big(\IR^2;\IC^2\big)\,.
\end{equation}

\textbf{Step 3.} We now turn to the general case of the operator $\ID=\bssig\cdot\bsD+\bbV$ on a 
general open connected set $\Om\subset\IR^2$ with $0\in\Om$. 

We first claim that
\begin{equation}\label{E:A.7}
C_0^1\big(\Om;\IC^2\big)\subset \cD om\big(\,\overline{\oID}\,\big)
\end{equation}
where we recall that $\oID=\ID\Big|_{C_0^1\big(\Om\setminus\{0\};\IC^2\big)}$. Indeed, let
$\bsPsi\in C_0^1\big(\Om;\IC^2\big)$. Since $0\in\Om$ open set, there exists $a>0$
such that $\big\{x\in\IR^2\,\big|\, |x|\leq 3a\big\}\subset\Om$. With the same choice of $\chi$ as in Step 2,
we can apply \eqref{E:A.4.1}--\eqref{E:A.4} to the function $\chi\bsPsi$ to show that there exists a sequence
\begin{equation}\label{E:A.8.1}
\big(\bsPhin\big)_{n\geq 1}\subset C_0^1\big(\{0<|x|\leq 2a\};\IC^2\big)
\end{equation}
such that
\begin{equation}\label{E:A.8}
\bsPhin \to \chi\bsPsi\quad\text{and}\quad \ID_0\bsPhin \to \ID_0\chi\bsPsi\quad\text{in }L^2\big(\IR^2;\IC^2\big)\,.
\end{equation}

Then consider the sequence given by
\begin{equation}\label{E:A.9}
\bsPhinp=\bsPhin+(1-\chi)\bsPsi\quad\text{for all }n\geq1\,.
\end{equation}
By construction,
\begin{equation}\label{E:A.9.1}
\big(\bsPhinp\big)_{n\geq 1}\subset \big(C_0^1(\IR^2\setminus\{0\})\big)^2=\cD om(\oID)\quad
\text{and}\quad \bsPhinp\to\bsPsi\,.
\end{equation}
Furthermore, recall that $\bbV\in L^\infty_{loc}(\Om)$, and so \eqref{E:A.8.1}--\eqref{E:A.9} yield that
\begin{equation*}
\oID\bsPhinp= \ID_0\bsPhin+\bbV\bsPhin+\ID(1-\chi)\bsPhi\to 
\ID_0\chi\bsPhi+\bbV\chi\bsPhi+\ID(1-\chi)\bsPhi=\ID\bsPsi\,.
\end{equation*}
In other words, $\bsPsi\in\cD om\big(\,\overline{\oID}\,\big)$, which, since $\bsPsi$ was arbitrary,
proves \eqref{E:A.7}, as claimed. 

Finally, note that \eqref{E:A.7} translates to 
\begin{equation*}
\ID\subset \overline{\oID}\,.
\end{equation*}
By assumption, $\overline\ID$ is self-adjoint, so the conclusion of Lemma~\ref{L:PW} follows
from the maximality property of self-adjoint operators. 
\end{proof}

Focusing now on $\oID_{2,mag}$, its essential self-adjointness can be investigated by partial wave analysis
(see, for example, \cite[Section 7.3.3]{Th}). Indeed, for each $j\in\bbZ$ consider the operator
\begin{equation}\label{E:M.23}
\ID_{m_j}=\sigma_2 D_r+\sigma_1\left(a(r)-\tfrac{m_j}{r}\right)\quad
\text{on }\cD om(\ID_{m_j})=C_0^1\big((0,1);\IC^2\big)\,,
\end{equation}
where
\begin{equation}\label{E:M.23.1}
D_r=-i\tfrac{d}{dr}\quad\text{and}\quad m_j=\tfrac{2j+1}{2}\,.
\end{equation}
One then knows that:
\begin{lemma}\label{L:PW}
$\oID_{2,mag}$ is essentially self-adjoint iff $ \ID_{m_j}$ is essentially self-adjoint for every $j\in\bbZ$.
\end{lemma}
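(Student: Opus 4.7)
The plan is to perform a partial-wave decomposition adapted to the $SO(2)$-symmetry of $\oOmd$ and of the transversal gauge \eqref{E:M.19}, identify $\oID_{2,mag}$ with the direct sum $\bigoplus_{j\in\bbZ}\ID_{m_j}$ up to a unitary equivalence, and then invoke the standard fact that a symmetric operator reduced by an orthogonal decomposition is essentially self-adjoint iff every summand is. The first step is to pass to polar coordinates via the flattening unitary
$$
U:L^2(\oOmd;\IC^2)\to L^2\bigl((0,1)\times S^1,dr\,d\theta;\IC^2\bigr),\quad (U\bsPsi)(r,\theta)=\sqrt{r}\,\bsPsi(r\cos\theta,r\sin\theta).
$$
Using $D_1\mp iD_2=e^{\mp i\theta}(D_r\mp\tfrac{i}{r}D_\theta)$, the transversal-gauge identities $\cA_1\mp i\cA_2=\mp ia(r)e^{\mp i\theta}$, and the commutator identity $\sqrt{r}\,D_r\,r^{-1/2}=D_r+\tfrac{i}{2r}$, a direct computation will give
$$
U\oID_{2,mag}U^{*}=\begin{pmatrix}0 & e^{-i\theta}\bigl(D_r+ia-\tfrac{i}{r}(D_\theta-\tfrac12)\bigr)\\[2pt] e^{i\theta}\bigl(D_r-ia+\tfrac{i}{r}(D_\theta+\tfrac12)\bigr) & 0\end{pmatrix}.
$$

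Next I would decompose in the angular variable. Because the off-diagonal entries carry the phases $e^{\pm i\theta}$, they shift angular momentum by exactly one unit, so the subspaces
$$
\cH_j:=\Bigl\{\tfrac{1}{\sqrt{2\pi}}\bigl(u(r)e^{ij\theta},\,v(r)e^{i(j+1)\theta}\bigr):u,v\in L^2((0,1))\Bigr\},\quad j\in\bbZ,
$$
are mutually orthogonal, sum to the whole space, and each is invariant under $U\oID_{2,mag}U^{*}$. Evaluating $D_\theta$ on these eigenspaces and stripping the angular exponentials will yield, on $(u,v)\in L^2((0,1);\IC^2)$, the reduced operator
$$
\cI_j=\sigma_1 D_r+(m_j/r-a(r))\,\sigma_2,\qquad m_j=j+\tfrac12.
$$
The constant unitary rotation $W=e^{-i\pi\sigma_3/4}=\mathrm{diag}(e^{-i\pi/4},e^{i\pi/4})$ satisfies $W\sigma_1 W^{*}=\sigma_2$ and $W\sigma_2 W^{*}=-\sigma_1$, which gives $W\cI_j W^{*}=\sigma_2 D_r+\sigma_1(a(r)-m_j/r)=\ID_{m_j}$.

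What remains is to match the cores. The angular orthogonal projections $P_j:L^2\to\cH_j$ commute with $\oID_{2,mag}$ and preserve $C_0^1(\oOmd;\IC^2)$: any $\bsPsi\in C_0^1(\oOmd;\IC^2)$ is supported in some closed annulus $\{r_1\leq|x|\leq r_2\}\subset\oOmd$, so differentiation under the integral shows that its angular Fourier coefficients $u_j,v_j$ lie in $C_0^1((0,1))$; conversely, any $u\in C_0^1((0,1))$ extends via $u(r)e^{ij\theta}$ to a $C^1$ function compactly supported in $\oOmd$. Standard Parseval estimates will then show that finite-mode sums form a graph-norm core for $\oID_{2,mag}$, so that via $U$ and $W$ the closure of $\oID_{2,mag}$ equals the closure of $\bigoplus_{j\in\bbZ}\ID_{m_j}$ on the algebraic direct sum $\bigoplus_{j}^{\mathrm{alg}}C_0^1((0,1);\IC^2)$. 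The iff statement then follows from the elementary fact that such a symmetric direct sum is essentially self-adjoint precisely when every block is. The main obstacle will be the polar-coordinate bookkeeping in the first step, in particular the $\pm\tfrac12$ shift coming from the $\sqrt{r}$-conjugation that is precisely what turns the integer angular momenta into the half-integer values $m_j$ appearing in \eqref{E:M.23}; the secondary difficulty is checking that $P_j$ maps $C_0^1(\oOmd;\IC^2)$ into itself rather than merely into $L^2$, which is where the bound $r_1>0$ (i.e. the removal of the origin) becomes essential.
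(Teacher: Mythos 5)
Your argument is correct: the flattening unitary, the transversal-gauge identities, the half-integer shift $m_j=j+\tfrac12$ produced by the $\sqrt{r}$-conjugation, the identification of the fiber operator with $\sigma_2 D_r+\sigma_1\big(a(r)-\tfrac{m_j}{r}\big)$ as in \eqref{E:M.23}, and the core-matching via angular Fourier truncation (where removing the origin is indeed what makes the mode functions land in $C_0^1\big((0,1);\IC^2\big)$) all check out. This is exactly the standard partial-wave analysis that the paper invokes for Lemma~\ref{L:PW} by referring to \cite[Section 7.3.3]{Th} without giving a proof, so your proposal simply supplies the details of the same approach.
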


We can then use all of the above to conclude that the constant $1/2$ in \eqref{E:TM2.1} is optimal:
\begin{proposition}\label{P:CM}
Let $\Omd=\big\{x=(x_1,x_2)\in\IR^2\,\big|\, |x|<1\big\}$ be the open unit disk in $\IR^2$,
\begin{equation}\label{E:M.24}
\cB(r)=\frac{\alpha}{(1-r)^2}\quad\text{for } r\in[0,1)\,,
\end{equation}
and $\bscA$ the associated transversal gauge magnetic potential as given by \eqref{E:M.19}
and \eqref{E:M.21}.

If $\alpha\in\big[0,\tfrac12\big)$, then $\ID_{2,mag}=\bssig\cdot\big(\bsD-\bscA\big)$ is not essentially self-adjoint on $C_0^1\big(\Omd;\IC^2\big)$.
\end{proposition}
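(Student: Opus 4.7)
The plan is to reduce the problem to a one-dimensional Dirac operator via partial wave decomposition, and then apply the limit-circle criterion of Proposition~\ref{P:M}. First, I would note that by Proposition~\ref{L:P} (applied contrapositively, with $x_0 = 0 \in \Omd$), essential self-adjointness of $\ID_{2,mag}$ on $C_0^1(\Omd;\IC^2)$ implies essential self-adjointness of $\oID_{2,mag}$ on $C_0^1(\oOmd;\IC^2)$. By Lemma~\ref{L:PW}, the latter is essentially self-adjoint iff each partial wave operator $\ID_{m_j} = \sigma_2 D_r + \sigma_1(a(r) - m_j/r)$ is essentially self-adjoint on $C_0^1((0,1);\IC^2)$, for every $j \in \bbZ$. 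Thus it suffices to exhibit a single $j$ for which $\ID_{m_j}$ fails to be essentially self-adjoint.

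Next I would compute $a(r)$ explicitly from \eqref{E:M.21}. With $\cB(y) = \alpha/(1-y)^2$, a direct integration (substituting $u = 1-y$) gives
\begin{equation*}
a(r) = \frac{\alpha}{r}\int_0^r \frac{y\,dy}{(1-y)^2}
     = \frac{\alpha}{1-r} + \frac{\alpha\,\ln(1-r)}{r}\,.
\end{equation*}
Near the right endpoint $r = 1$, the first term dominates: for every $\eps > 0$ there exists $r_\eps \in (0,1)$ such that for all $r \in (r_\eps,1)$,
\begin{equation*}
\big|a(r) - \tfrac{m_j}{r}\big| \leq \frac{\alpha}{1-r} + \Big|\frac{\alpha \ln(1-r)}{r}\Big| + \frac{|m_j|}{r} \leq \frac{\alpha + \eps}{1-r}\,.
\end{equation*}
Since $\alpha < \tfrac12$, we can choose $\eps > 0$ so that $\lambda := \alpha + \eps < \tfrac12$.

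The one-dimensional operator $\ID_{m_j}$ fits the framework of Proposition~\ref{P:M} (with $v_1(r) = a(r) - m_j/r$ and $(a,b) = (0,1)$, so that $\delta(r) = \min\{r, 1-r\}$). Fixing any $j \in \bbZ$ and choosing $\delta_0 \in (0, 1-r_\eps)$ small enough, the bound above reads $|v_1(r)| \leq \lambda/\delta(r)$ on $(1-\delta_0, 1)$, which is exactly hypothesis \eqref{E:OD.12} of Proposition~\ref{P:M}~iii. with $\lambda < \tfrac12$. Therefore $\ID_{m_j}$ is not essentially self-adjoint, which by Lemma~\ref{L:PW} shows that $\oID_{2,mag}$ is not essentially self-adjoint, and then by Proposition~\ref{L:P} neither is $\ID_{2,mag}$, as claimed.

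The main obstacle is the asymptotic estimate for $a(r) - m_j/r$ near $r = 1$; once the explicit computation of $a(r)$ is in hand, the logarithmic correction term is easily absorbed into an arbitrarily small perturbation of $\alpha/(1-r)$, and the rest is a direct application of the cited results. (Note that, conveniently, the behavior at $r = 0$ plays no role, since Proposition~\ref{P:M}~iii. requires the pointwise bound near only one endpoint of the interval to conclude failure of essential self-adjointness.)
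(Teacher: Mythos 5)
Your proposal is correct and follows essentially the same route as the paper: reduce via Proposition~\ref{L:P} and Lemma~\ref{L:PW} to a one-dimensional partial wave operator, bound its potential near $r=1$ by $\lambda/(1-r)$ with $\lambda<\tfrac12$, and invoke Proposition~\ref{P:M}~iii. The only (cosmetic) differences are that the paper fixes $j=-1$ and bounds the integrand $\alpha y/(1-y)^2 \le \alpha/(1-y)^2$ instead of computing $a(r)$ exactly, whereas you integrate explicitly and absorb the logarithmic term — both work, and your observation that the behavior at $r=0$ is irrelevant is exactly how the paper uses Proposition~\ref{P:M}~iii.\ as well.
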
 

\begin{proof}
Assume, by contradiction, that $\ID_{2,mag}$ is essentially self-adjoint. As explained above, this implies that
$\oID_{2,mag}$, and hence all $\ID_{m_j}$ for $j\in\bbZ$ are also essentially self-adjoint. 

Now focus, for definitiveness, on $j=-1$:
\begin{equation}\label{E:M.25}
\ID_{-1/2}=\sigma_2 D_r+\sigma_1\big(a(r)+\tfrac{1}{2r}\big)\,.
\end{equation}
From \eqref{E:M.21} and recalling that $r<1$, we obtain that
\begin{equation*}
\begin{aligned}
0<a(r)+\frac{1}{2r}&=\frac{1}{r}\left[\int_0^r \frac{\alpha y}{(1-y)^2}\,dy+\frac12\right] 
\leq\frac{1}{r}\left[\int_0^r \frac{\alpha }{(1-y)^2}\,dy+\frac12\right]\\
&\leq \frac{2\alpha+1-r}{2r}\cdot\frac{1}{1-r}\,.
\end{aligned}
\end{equation*}

Note that
$$
\lim_{r\to1-} \frac{2\alpha+1-r}{2r}=\alpha
$$
and that, by simply taking the average of $\alpha$ and $\frac12$, we have
$$
\alpha<\frac{2\alpha+1}{4}<\frac12\,.
$$
It follows that there exists $\delta_0>0$ such that 
$$
0<\frac{2\alpha+1-r}{2r} \leq \frac{2\alpha+1}{4}<\frac12\quad\text{for }r\in(1-\delta_0,1)\,.
$$
In other words, condition \eqref{E:OD.12} is satisfied, and hence Proposition~\ref{P:M}(iii) implies that
$\ID_{-1/2}$ is not essentially self-adjoint, which is a contradiction.
\end{proof}

\section{Comments and open problems}\label{S:Comments}

In this section we comment on certain extensions of our results, on further connections with
previous results, as well as on open questions.

\textbf{1.} We start by discussing the smoothness hypotheses we made in
Sections 2 and 3 for $\bsAj$ and $\bbV_0$. 
We required $\bsAj\in C^1\big(\Om; \C^{k \times k}\big)$ and $\bbV\in C^0\big(\Om; \C^{k \times k}\big)$,
but these conditions can be weakened to $\bsAj$ Lipschitz continuous and 
$\bbV\in L^\infty_{loc}\big(\Om; \C^{k \times k}\big)$.
Note that Garofalo and Nhieu already mentioned in \cite[Appendix]{GN} that the main technical result
behind \cite[Main Theorem]{Fr} remains true for $\bsAj$ locally Lipschitz (see also \cite[Proposition~A.3]{BMS}).

\textbf{2.} It is also possible to extend our results to unbounded domains. An easy case is when $\partial\Om$
is compact, i.e. when there exists $R>0$ such that $\big\{x\in\IR^d\,\big|\,|x|>R\big\}\subset\Om$,
and when, for all $j$ and for all $x$ sufficiently large,
\begin{equation}\label{E:CO.1}
\big|\bsAj(x)\big|\leq\rho\big(|x|\big)\quad\text{with }\int^\infty \frac{dr}{\rho(r)}=\infty\,.
\end{equation}
This holds, for example, if all the $\bsAj$ are bounded at $\infty$. Other unbounded cases can be considered,
but they require a more careful choice of the function $g$ in the basic inequality from Lemma~\ref{L:LemmaB}.
(See, for example, the choice of functions $\phi$ in the proof of Theorem 3.1 in \cite{NN3}).  

\textbf{3.} In addition to the classical results of Chernoff \cite{Ch1,Ch2} and the results in \cite{NN4} for the smooth case, 
criteria for essential self-adjointness of general first order matrix-valued differential operators 
with rough coefficients on $\IR^d$ were recently obtained by completely different methods in \cite{CGKLNPS},
and in \cite{BS} in the elliptic case. 

In \cite{CGKLNPS}, the authors consider abstract operators which, in our case, have the form
\begin{equation*}
\ID=\sum_{j=1}^d \bsAj D_j+D_j\bsAj^*\,,
\end{equation*}
These operators 
can be rewritten in symmetric form as
\begin{equation*}
\ID=\sum_{j=1}^d \left(\tfrac12\big(\bsAj+\bsAj^*\big)D_j+D_j \,\tfrac12\big(\bsAj+\bsAj^*\big)\right)+ \tfrac12D_j(\bsAj^*)-\tfrac12 D_j(\bsAj)\,.
\end{equation*}
Hypothesis~2.1 of \cite{CGKLNPS} specialized to this case assumes that the
$\bsAj$'s and all their partial derivatives are Lipschitz and bounded as $|x|\to\infty$. So if
the arguments from our Comments~1 and 2 above are correct, then the extension
of our results to the unbounded $\Om$ case covers this case of Theorem~2.4 in \cite{CGKLNPS}.

Turning to \cite{BS},  it was surprising to read the authors' claim that, in the elliptic case, one can
dispense with the Chernoff-type condition \eqref{E:CO.1} (see \cite[Remark 3.11]{BS}). In our opinion, 
such a result cannot hold true, and upon closer inspection, 
we found an error in their main estimate of Section 6.1 which, when corrected, leads
to the Chernoff-type condition being necessary.

In addition, the following example shows that (even uniform) ellipticity cannot replace 
the Chernoff condition. See also \cite{Fa} for another example concerning the optimality of the Chernoff condition.
Let $\alpha \geq 0$ and define the function 
$$
a_\alpha\,:\,\IR\to\IR\qquad a_\alpha(x)=\big(1+x^2\big)^{\alpha/2}\,.
$$
Note that $a_\alpha\in C^\infty(\IR)$ and $a_\alpha(x)\geq 1>0$ for all $x\in\IR$.
The operator
$\ID_{\alpha}=a_\alpha D+Da_\alpha$ with $\cD om(\ID_\alpha)=C_0^1(\IR)$
is thus symmetric in $L^2(\mathbb R)$, and uniformly elliptic. The solutions of $(\ID_{\alpha} \pm i)\Psi_\pm=0$  
can be computed explicitly as 
$$
\Psi_\pm(x)^2=\frac{C_{\pm}}{a_\alpha(x)}e^{\pm\int_0^x \frac{dy}{a_\alpha(y)}}\,.
$$
It is then straightforward to show that $\Psi_\pm\in L^2(\mathbb R)$ if and only if $\alpha >1$. By 
the fundamental criterion for essential self-adjointness, we conclude that 
$\ID_{\alpha}$ is essential self-adjoint 
if and only if $\alpha \leq 1$, which fits exactly with \eqref{E:CO.1}.

\textbf{4.} A case we have not discussed at all so far is $\Om=\bbR^3\setminus\{0\}$. However,
this case has already been thoroughly studied in the context of Dirac operators with Coulomb-type 
singularities. For a detailed discussion, see for example \cite{Th} and \cite{ADV,CP,EL}, and the references therein.

\textbf{5.} For simplicity, and because of their physical relevance, we only considered Lorentz scalar potentials $\bbV_{Ls}$, as given by \eqref{E:D.8}, in our analysis in Section~\ref{S:Dscal}. 
It is a natural problem to extend those results to the class $\cS_s$ of scalar potentials described by Definition~\ref{D:ScaPo}.

For dimension $d=1$, the results in Section~\ref{S:Ddim1} give a satisfactory answer to this question, 
as the class $\mathcal S_s$ is exhausted by potentials of the form
$v_1\sigma_1+v_3\sigma_3$, for which Proposition~\ref{P:M} and Corollary~\ref{C:SMF} give optimal essential self-adjointness results.
In the 2-dimensional case the general form of $\bbV$ can be written as (see \cite{Th}):
$$
\bbV(x)=\sum_{j=1}^3 v_j(x)\sigma_j+v_0(x)\mathds 1_2\,, \quad v_j=\bar v_j\,,\,\,\, j=0,1,2,3\,.
$$
The conditions $\big\{\bbV(x),\sigma_1\big\}=\big\{\bbV(x),\sigma_2\big\}=0$
imply that $v_1=v_2=v_0=0$, and thus the class $\cS_s$ coincides with the class $\cS_{Ls}$, 
for which one has the essential self-adjointness criteria given in Section~\ref{S:Dscal}.

In dimension $d=3$ the general form of $\bbV$ writes as (see \cite{Th}):
$$
\bbV(x)=\sum_{\ell=1}^{16} v_\ell(x)\,\Gamma_\ell\,,\quad v_\ell=\bar v_\ell\,,
$$
where the (constant) $4\times4$ matrices $\Gamma_\ell$ are listed in \cite[Appendix to Ch. 2]{Th}. For example, $\Gamma_2=\beta$
and
$$
\Gamma_{12}=\left(\begin{matrix}
0 & -i\mathds 1_2 \\ i\mathds 1_2 & 0\end{matrix}\right)\,.
$$
A straightforward calculation shows that
\begin{equation}\label{E:star}
\Gamma_{12}^2=\mathds 1_4\quad\text{and}\quad \big\{\Gamma_{12},\beta\big\}= \big\{\Gamma_{12},\alpha_j\big\}=0\,,\,\,\,j=1,2,3\,.
\end{equation}
To find all the elements of the class $\mathcal S_s$, we have to solve
\begin{equation*}\label{E:2star}
\bigg\{\sum_{\ell=1}^{16} v_\ell(x)\,\Gamma_\ell,\alpha_j\bigg\}=0\quad \text{for all }j=1,2,3.
\end{equation*}
While we do not find here the general form of a scalar potential, \eqref{E:star} implies that $v_{12}\Gamma_{12}\in\cS_s$,
which in turn yields that $\cS_{Ls}$ is a strict subset of $\cS_s$. In addition, \eqref{E:star} also shows that the set of matrices 
$\{\alpha_1, \alpha_2, \alpha_3,\Gamma_{12}\}$ satisfies 
the same anticommutation relations \eqref{E:D.7} as the set $\{\alpha_1, \alpha_2, \alpha_3,\beta\}$. 
Since the proofs in Section~\ref{S:Dscal} only use these anticommutation relations (and not the concrete form of the matrices involved), 
we conclude that the criteria from Section~\ref{S:Dscal} apply to the operator $\ID_0+v_{12}\Gamma_{12}$.
Turning to dimension $d=4$, one can consider the following
$$
\ID_0=\sum_{j=1}^3 \alpha_j D_j+\Gamma_{12} D_4
$$
as the free Dirac operator, and note that, by \eqref{E:star}, $v_2\beta\in\cS_s$, meaning $\cS_s\not=\emptyset$.

Based on the discussion above, we formulate the following:\\
\textbf{Open problem:} Using the Clifford algebra formalism, write down the free Dirac operators in all dimensions $d$, 
identify the class of scalar potentials $\cS_s$, and develop the analogous theory to that of Section~\ref{S:Dscal}.

One final note: In the cases considered in Section~\ref{S:Dscal}, the optimal increase rate at the boundary of $\Omega$ for 
Lorentz scalar potentials is independent of the dimension $d$. 
It is a natural question whether this is a generally valid fact in all dimensions. First recall that this is indeed the case for 
Schr\"odinger operators (see \cite{NN1}). 
At the technical level, one reason for this is the fact that the best constant in the Hardy inequality is dimension-independent, 
and the same Hardy inequality enters our proofs in Section~\ref{S:Dscal}. Heuristically, the loss of essential self-adjointness 
is caused by the fact that the particle reaches the boundary $\partial\Omega$, 
and hence only the motion along the normal to $\partial\Omega$ matters. In addition, recall that, in many respects, Lorentz scalar 
potentials behave as positive potentials in the Schr\"odinger case (see e.g. \cite{MOBP,SV}). All of the above make it 
tempting to think that the optimal increase rate is indeed dimension-independent.

\textbf{6.} We now consider the case of the Dirac operator with an electric potential
(see \cite[Ch. 4.2.2]{Th}), that is
\begin{equation}\label{E:CO.2}
\bbV_e(x)=
\begin{cases}
v_e(x)\mathds 1_2\,,&\quad d=1,2\,,\\
v_e(x)\mathds 1_4\,,&\quad d=3\,.
\end{cases}
\end{equation}
Note that $\bbV_e$ is not a scalar potential, in the sense of Definition~\ref{D:ScaPo}. In
one dimension, we showed in Lemma~\ref{L:NES} that $\sigma_2 D+v_e\mathds 1$
is never essentially self-adjoint, i.e. electric potentials are not confining. Moreover,
since the force induced by an electric potential has opposite signs for electrons and positrons, 
respectively, we can infer on physical grounds that electric potentials
are not confining in any $d\leq3$. To the best of our knowledge, a proof
of this claim in $d=2$ and 3 does not yet exist, and this is a significant and interesting open problem.

One can, however, consider a Dirac operator with a Lorentz scalar and an electric potential; in $d=3$ this
leads to an overall potential
\begin{equation*}
\bbV=\beta v_s+\lambda_e v_e \mathds 1_4\,.
\end{equation*}
If $v_s$ is such that $\beta v_s$ is confining, then
condition \eqref{E:P.1} should hold for $|\lambda_e|$ small enough,
and hence Theorem~\ref{T:P} would allow one to conclude confinement for the
mixed potential $\bbV$. On the contrary, if $|\lambda_e|$ is sufficiently large, one would expect
deconfinement. 

While, as discussed above, we cannot prove this confinement/deconfinement
transition in $d=2,3$, the following example in $d=1$ substantiates the picture. Consider
the Dirac operator in 1 dimension:
\begin{equation*}
\ID=\sigma_2 D+ \sigma_3 v_s +v_e\mathds 1_2\quad\text{with }\cD om(\ID)=C_0^1\big((a,b);\IC^2\big)\,,
\end{equation*}
for some bounded interval $(a,b)\subset \IR$. Assume that, for $\delta(x)=\text{dist}\big(x,\{a,b\}\big)$
small enough, 
\begin{equation*}
v_s(x)=\frac{\lambda_s}{\delta(x)}\quad\text{and}\quad v_e(x)=\frac{\lambda_e}{\delta(x)}\,.
\end{equation*}
Then Corollary~\ref{C:SMF} states that $\ID$ is essentially self-adjoint if and only if
\begin{equation*}
\lambda_e^2\leq \lambda_s^2-\tfrac14\,.
\end{equation*}

\textbf{7.} Our last two comments are about magnetic confinement. First note that,
when we add rotationally invariant scalar and electric potentials to a (rotationally invariant)
magnetic potential $\bscA$ as in \eqref{E:M.19}, we can still perform partial wave
decomposition of
\begin{equation}\label{E:CO.3}
\ID=\bssig\cdot\big(\bsD-\bscA\big)+\sigma_3 v_s+v_e\mathds 1_2\quad\text{with }
\cD om(\ID)=C_0^1\big(\{|x|<1\};\IC^2\big)\,.
\end{equation} 
This toghether with Proposition  \ref{L:P} implies that $\ID$ is essentially self-adjoint if and only if 
\begin{equation}\label{E:CO.4}
\ID_{m_j}=\sigma_2 D_r+\sigma_1\big(a(r)-\tfrac{m_j}{r}\big)+\sigma_3 v_s(r)+v_e(r)\mathds 1_2
\end{equation}
is essentially self-adjoint on $\cD om(\ID_{m_j})=C_0^1\big((0,1);\IC^2\big)$ for each $j\in\bbZ$, 
where $D_r$ and $m_j$ are given in \eqref{E:M.23.1}.
If we take potentials such that, for $1-r$ small enough:
\begin{equation*}
a(r)=\frac{\lambda_m}{1-r}\,,\quad v_s(r)=\frac{\lambda_s}{1-r}\,,\quad\text{and}\quad 
v_e(r)=\frac{\lambda_e}{1-r}\,,
\end{equation*}
then Corollary~\ref{C:SMF} shows that, for any $j\in\bbZ$, $\ID_{m_j}$ is essentially
self-adjoint if and only if
\begin{equation}\label{E:CO.5}
\lambda_e^2\leq\lambda_m^2+\lambda_s^2-\tfrac14\,.
\end{equation}
In particular, we have shown that 
\begin{equation*}
\ID=\bssig\cdot(\bsD-\bscA)+v_e\mathds 1_2\quad\text{with }
\cD om(\ID)=C_0^1\big(\{|x|<1\};\IC^2\big)
\end{equation*}
and with, for $1-r$ small enough, magnetic field $\cB(r)=\frac{\lambda_m}{r(1-r)^2}$ and electric potential 
$v_e(r)=\frac{\lambda_e}{1-r}$ is essentially self-adjoint if and only if
\begin{equation}\label{E:CO.6}
\lambda_e^2\leq\lambda_m^2-\tfrac14\,.
\end{equation}
In other words, we find a confinement/deconfinement transition for purely electromagnetic potentials.
For the same type of results with confinement defined in a weaker sense, see \cite{MS}.

\textbf{8.} In our last comment we consider an example of magnetic confinement in three dimensions. 
As explained in the Introduction, a general theory
does not yet exist in this case, so we give an example where symmetries allow us to reduce the
problem to a 2-dimensional one. More precisely, let $\Om\subset\IR^2$
be as in Theorem~\ref{T:M2}, and set
\begin{equation*}
\tilde\Om=\Om\times\IR=\big\{x=(x_1,x_2,x_3)\,\big|\, (x_1,x_2)\in\Om\,,\,\,\, x_3\in\IR\big\}\,.
\end{equation*}
We further assume that the magnetic field doesn't depend on $x_3$, and that it is parallel
to the $x_3$-axis:
\begin{equation}\label{E:CO.7}
\bscB(x_1,x_2,x_3)=\big(0\,,\,\,0\,,\,\,\cB(x_1,x_2)\big)\,.
\end{equation}
In this case, we can choose (see the discussion around equation \eqref{E:M.15})
a magnetic potential of the form
\begin{equation*}
\bscA(x_1,x_2,x_3)=\big(\cA_1(x_1,x_2)\,,\,\,\cA_2(x_1,x_2)\,,\,\,0\big)\,,
\end{equation*}
where $\cA_1$ and $\cA_2$ are related to $\cB$ via \eqref{E:M.4}. We seek conditions
on $\cB$ ensuring essential self-adjointness of
\begin{equation}\label{E:CO.8}
\ID_{3,mag}=\alpha_1\,\big(D_1-\cA_1\big)+\alpha_2\,\big(D_2-\cA_2\big)+\alpha_3 D_3 
\end{equation}
on $\cD om(\ID_{3,mag})=C_0^2\big(\tilde\Om;\IC^4\big)$. From \eqref{E:D.5}, we note that
\begin{equation}\label{E:D.3m}
\ID_{3,mag}=
\left(\begin{matrix}
0 & \tilde \ID_{3,mag} \\
\tilde\ID_{3,mag} & 0 
\end{matrix}\right)
\end{equation}
where
\begin{equation*}
\tilde\ID_{3,mag}=\sigma_1\,\big(D_1-\cA_1\big)+\sigma_2\,\big(D_2-\cA_2\big)+\sigma_3 D_3 
\end{equation*}
with $\cD om(\tilde\ID_{3,mag})=C_0^2\big(\tilde\Om;\IC^2\big)$. A straightforward, abstract argument
shows that $\ID_{3,mag}$ is essentially self-adjoint iff $\tilde\ID_{3,mag}$ is essentially self-adjoint.

Consider the Fourier transform in $x_3$:
\begin{equation*}
\hat\bsPsi(x_1,x_2,\xi)=\big(\cF_3\bsPsi\big)(x_1,x_2,\xi)=\frac{1}{\sqrt{2\pi}}\int_\IR e^{-ix_3\xi}\bsPsi(x_1,x_2,x_3)\,dx_3\,,
\end{equation*}
using which we define
\begin{equation*}
\hat\ID_{3,mag}=\cF_3\tilde\ID_{3,mag}\cF_3^{-1}\,.
\end{equation*}
Let $\bsPsi\in C_0^2\big(\tilde\Om;\IC^2\big)$. One can easily check that, for each fixed $\xi \in\IR$, we have
$\hat\bsPsi(\cdot,\cdot,\xi)\in C_0^2\big(\Om;\IC^2\big)$ and the map
\begin{equation*}
C_0^2\big(\tilde\Om;\IC^2\big)\ni\bsPsi\,\,\mapsto\,\, \hat\bsPsi (\cdot,\cdot,\xi)\in C_0^2\big(\Om;\IC^2\big)
\end{equation*}
is surjective. Since $\bsPsi$ is assumed to have compact support, we also see that
$(x_1,x_2,\xi)\mapsto \xi\hat\bsPsi(x_1,x_2,\xi)\in L^2\big(\tilde\Om;\IC^2\big)$ and
$$
\hat\ID_{3,mag}=\sigma_1\,\big(D_1-\cA_1\big)+\sigma_2\,\big(D_2-\cA_2\big)+\sigma_3 \xi\,.
$$
It follows from all of the above that $\hat\ID_{3,mag}$ can be written as a direct integral:
$$
\hat\ID_{3,mag}=\int_\IR^\oplus \ID_\xi\,d\xi\quad\text{in }
L^2\big(\tilde\Om;\IC^2\big)=\int_\IR^\oplus L^2\big(\Om;\IC^2\big)\,d\xi
$$
with fiber operator
$$
\ID_\xi=\sigma_1\,\big(D_1-\cA_1\big)+\sigma_2\,\big(D_2-\cA_2\big)+\sigma_3 \xi \quad\text{on }
\cD om(\ID_\xi)=C_0^2\big(\Om;\IC^2\big)\,.
$$

Now assume that $\cB$ satisfies \eqref{E:TM2.1} or \eqref{E:TM2.2}. From Theorem~\ref{T:M2},
we know that $\ID_\xi$ is essentially self-adjoint for each $\xi\in\IR$. The essential self-adjointness of
$\hat\ID_{3,mag}$ (and hence $\tilde\ID_{3,mag}$ and $\ID_{3,mag}$) then follows by standard
arguments on orthogonal integrals (see, for example, \cite{RS}). Finally, we close by noting that,
while in this case we use the Fourier transform, this does not work if we are interested in general
cylinders of the form $\tilde\Om=\Om\times (a,b)$ with $a$ and/or $b$ finite. In such a case, 
one needs additional assumptions to ensure confinement in the $x_3$ direction, and
one can probably approach such a case using tensor products.


\end{document}